\documentclass[12pt, draftclsnofoot, onecolumn]{IEEEtran}
\usepackage{multicol}
\usepackage{amsmath,epsfig,comment}
\usepackage{amsthm}
\usepackage{tcolorbox}
\usepackage{amssymb}
\usepackage{multirow}
\usepackage{mathrsfs}
\usepackage{amsmath}
\usepackage{arydshln}
\usepackage{multirow}
\usepackage{arydshln}

\usepackage{cases} %onecolumn,
\usepackage{amssymb,amsmath,cite}
\usepackage{epsfig}
\usepackage{color}
\usepackage{bm}
\usepackage{graphicx,subfigure}
\usepackage{algorithm}
\usepackage{algorithmic}
\usepackage{multirow}
\usepackage{soul}

\usepackage{extarrows}% http://ctan.org/pkg/extarrows

\def\s{\mathbf{s}}

\def\ba{\mathbf{a}}
\def\bC{\mathbf{C}}
\def\bc{\mathbf{c}}

\def\I{\mathcal{I}}
\def\RR{\mathcal{R}}

\def\n{\mathbf{n}}

\def\I{\mathcal{I}}
\def\R{\mathbb{R}}

\def\C{\mathbb{C}}
\def\bb{\mathbf{b}}
\def\bW{\mathbf{W}}
\def\bV{\mathbf{V}}
\def\bU{\mathbf{U}}
\def\bQ{\mathbf{Q}}

\def\H{\mathbf{H}}

\def\bP{\mathbf{P}}

\def\y{\mathbf{y}}

\def\x{\mathbf{x}}
\def\s{\mathbf{s}}

\def\h{\mathbf{h}}

\def\bA{\mathbf{A}}
\def\bM{\mathbf{M}}

\def\bD{\mathbf{D}}
\def\bR{\mathbf{R}}
\def\br{\mathbf{r}}

\def\bH{\mathbf{H}}
\def\bB{\mathbf{B}}
\def\bG{\mathbf{G}}

\def\bthe{\boldsymbol{\Theta}}

\def\bw{\mathbf{w}}
\def\bd{\mathbf{d}}

\def\v{\mathbf{v}}
\def\bw{\mathbf{w}}
\def\bY{\mathbf{Y}}

\def\a{\boldsymbol{\Lambda}}
\newtheorem{theorem}{Theorem}

\newtheorem{lemma}{Lemma}
\newtheorem{remark}{Remark}

\newtheorem{definition}{Definition}

\newtheorem{corollary}{Corollary}

\newtheoremstyle{noparens}%
  {}{}%
  {\itshape}{}%
  {\bfseries}{.}%
  { }%
  {\thmname{#1}\thmnumber{ #2}\mdseries\thmnote{ #3}}

\theoremstyle{noparens}

\title{Beyond-Diagonal RIS in Multiuser MIMO: Graph Theoretic Modeling and Optimal Architectures with Low Complexity}
\author{\IEEEauthorblockN{Zheyu Wu and Bruno Clerckx}
  	\thanks{Z. Wu and and B. Clerckx are with the Department of Electrical and Electronic Engineering, Imperial College London, London, SW7 2AZ, U.K. (email: \{zheyu.wu, b.clerckx\}@imperial.ac.uk).  Bruno Clerckx is also with Kyung Hee University, Seoul, Korea.  This work has been partially supported by UKRI grant EP/Y004086/1, EP/X040569/1, EP/Y037197/1, EP/X04047X/1, EP/Y037243/1 (\emph{Corresponding Author: Bruno Clerckx}).
 	}
  }
\date{today}
%\author{Zheyu Wu}
\begin{document}
\maketitle
\begin{abstract}
Reconfigurable intelligent surfaces (RIS) is regarded as a key enabler of wave/analog-domain beamforming, processing, and computing in future wireless communication systems.  Recently, Beyond-Diagonal RIS (BD-RIS) has been proposed as a generalization of conventional RIS, offering enhanced design flexibility thanks to the presence of tunable impedances that connect RIS elements. However, increased interconnections lead to high circuit complexity, which poses a significant practical challenge. In this paper, we  address the fundamental open question: ``What is the class of BD-RIS architectures that achieves the optimal performance in a RIS-aided multiuser multi-input multi-output (MIMO) system?" By modeling BD-RIS architectures using graph theory, we identify a class of BD-RIS architectures that achieves the optimal performance—matching that of fully-connected RIS—while maintaining low circuit complexity. Our result holds for a broad class of performance metrics, including the commonly used sum channel gain/sum-rate/energy efficiency maximization, transmit power minimization, and the information-theoretic capacity region. The number of tunable impedances in the proposed class is $\mathcal{O}(N_I\min\{D,N_I/2\})$, where $N_I$ denotes the number of RIS elements and $D$ is the degree of freedom  of the multiuser MIMO channel, i.e., the minimum between the number of transmit antennas and the total number of received antennas across all users. Since $D$ is much smaller than $N_I$ in practice, the complexity scales as $\mathcal{O}(N_ID)$, which is substantially lower than the $\mathcal{O}(N_I^2)$ complexity of fully-connected RIS.  We further introduce two novel BD-RIS architectures—band-connected RIS and stem-connected RIS—and show that they belong to the optimal architecture class under certain conditions. Simulation results validate the optimality and enhanced performance-complexity tradeoff of our proposed architectures and demonstrate the significant loss \textcolor{black}{and limitations} incurred by conventional RIS  compared to BD-RIS in multiuser MIMO settings. %Additionally, by varying the band and stem widths, we demonstrate that band- and stem-connected RIS offer a better performance-complexity tradeoff than group-connected RIS.

 \end{abstract}\vspace{-0.1cm}
\begin{IEEEkeywords}
Beyond diagonal reconfigurable intelligent surface, graph theory, low-complexity architecture, multiuser multi-input multi-output system.
\end{IEEEkeywords}
\section{Introduction}
The sixth-generation (6G) wireless network is envisioned to support future use cases and applications that demand  high data rates,  wide coverage, and low latency. These requirements  drive the development of new advanced and intelligent technologies \cite{6G}.  Reconfigurable intelligent surface (RIS) --- a planar surface composed of a large number of  passive reflective elements --- has emerged as a key enabling technology for 6G evolution to enable beamforming, processing, and even computing directly in the analog radio-frequency domain (also known as wave domain). In the RIS paradigm, each reflective element induces a phase shift to collaboratively steer the incident electromagnetic
wave toward the desired direction. Due to the near-passive nature of the reflective elements, RIS is able to operate  in a cost-effective and energy-efficient way.   RIS has garnered significant research interest over the past five years, and its potential has been extensively explored in the existing literature \cite{RIS1,RIS_survey,RIS_mag}.  %The working paradigm of RIS is that each reflective element induces a phase shift to collaboratively steer the incoming signals toward the desired direction. With the near-passive nature of the reflective elements, RIS offers a cost-effective and energy-efficient method for wave manipulation.

Very recently, the concept of Beyond-Diagonal RIS (BD-RIS) has been introduced in \cite{BDRIS} as an innovative generalization of conventional RIS \textcolor{black}{(also known as diagonal RIS or single-connected RIS)}. Specifically, the authors in \cite{BDRIS} studied the modeling of  RIS  based on scattering parameters within the framework of  multiport network analysis. In conventional RIS, each reflective element is connected only to its own reconfigurable load \textcolor{black}{and not to other RIS elements (hence the single-connected RIS terminology)}, resulting in a diagonal scattering matrix \textcolor{black}{(hence the diagonal RIS terminology)}. To achieve additional flexibility, BD-RIS allows for interconnection between reflective elements through tunable impedances. As a result, the scattering matrix is no longer limited to being diagonal, offering enhanced wave manipulation capability \cite{BDRIS, BDRIS_survey}.  

Existing studies have demonstrated that BD-RIS can achieve significantly enhanced performance  compared to conventional RIS \cite{BDRIS,BDRIS_survey, closeform}. Furthermore, by allowing waves absorbed by one element to flow through other elements, BD-RIS supports multiple operational modes, including reflective, transmissive, and hybrid modes \cite{group_conn}. In particular, the hybrid mode enables simultaneous signal reflection and transmission, thereby achieving full-space coverage \cite{group_conn}.  As a generalization of hybrid modes, the multi-sector mode has been proposed in \cite{coverage} to deliver further performance improvements. 
Note that in most existing works, the BD-RIS architecture is assumed to be reciprocal and is characterized by a symmetric scattering matrix. However, BD-RIS is actually a broader concept that also accommodates non-reciprocal architectures with non-symmetric scattering matrices. The potential of non-reciprocal BD-RIS has been recently investigated in \cite{attack,duplex,liu_nonreciprocal,Linonreciprocal2022}.

Despite the great flexibility offered by BD-RIS, its high circuit complexity poses a significant practical challenge. In fully-connected RIS, where every pair of reflective elements is connected, the total number of required impedances scales quadratically with the number of RIS elements, which becomes prohibitive as the number of RIS elements  grows large. Various BD-RIS architectures have been proposed to address this concern. The group-connected RIS partitions the RIS elements into several groups, where only elements in the same group are connected \cite{BDRIS}.  It makes a balance between \textcolor{black}{conventional} single-connected RIS and fully-connected RIS, achieving a better  performance-complexity trade-off. The performance of group-connected RIS can be further improved with appropriate dynamic grouping strategies \cite{grouping, grouping2}. In \cite{tree}, the authors modeled the BD-RIS architecture using graph theory and proposed two new architectures\,---\,tree-connected RIS and forest-connected RIS. A remarkable theoretical result in \cite{tree} is that, in single-user multi-input single-output (MISO) systems, the tree-connected RIS is the optimal architecture that achieves the best performance with the least circuit complexity.  Furthermore,  forest-connected RIS has been proven to achieve the  Pareto frontier between performance and circuit complexity in  single-user MISO systems\cite{forest}. However, the aforementioned desirable properties of tree- and forest-connected architectures do not extend to multi-input multi-output  (MIMO) and/or multiuser systems \cite{wu}. To enhance the performance of tree-connected RIS in multiuser systems while maintaining low circuit complexity, the authors in \cite{qstem} proposed a novel architecture called stem-connected RIS. %, which generalizes the arrowhead RIS, a specific example of tree-connected RIS.
 It has been demonstrated numerically that  the stem-connected RIS attains the sum channel gain comparable to that  achieved by fully-connected RIS. However, there is no theoretical guarantee regarding the performance of stem-connected RIS. 

To date, it remains an open problem what is the optimal BD-RIS architecture achieving the best performance with the least circuit complexity in MIMO and/or multiuser systems. In this paper, we make significant progress on this theoretical problem: we establish a general sufficient condition under which the BD-RIS architecture achieves the best performance, i.e., the same performance as fully-connected RIS, in multiuser MIMO systems. Our result generalizes existing analysis for single-user MISO systems in \cite{tree} and provides a rigorous theoretical justification for the good numerical performance of stem-connected RIS observed in \cite{qstem}. The main contributions are summarized as follows.

First, we model the BD-RIS architecture using graph theory as in \cite{tree} and characterize the topological connectivity of the BD-RIS circuit through the adjacency matrix of the graph. To assess the performance of the BD-RIS architecture, we formulate a general utility optimization problem that encompasses all existing models (e.g., sum-rate maximization \cite{wu,PDD}, transmit power minimization \cite{PDD,wu}, energy efficiency maximization \cite{PDD}, sum channel gain maximization \cite{qstem}) in the literature as special cases. By focusing on this general model, we ensure the universality of our derived theoretical result, which holds regardless of  the performance metric.

Second, based on the above modeling, we provide a general condition on the adjacency matrix under which the corresponding BD-RIS architecture matches the performance of fully-connected RIS. The circuit complexity, i.e., the number of tunable impedances, for the proposed architecture class is  $\mathcal{O}(N_I\min\{D,N_I/2\})$, where $N_I$ denotes the number of RIS elements and $D=\min\{\sum_{k=1}^K N_k, N_T\}$, with $K$, $N_k$,  and $N_T$ denoting the number of users, the number of antennas at the $k$-th user, and the number of antennas at the transmitter, respectively, i.e., $D$ is the degree of freedom (DoF) of the multiuser MIMO channel. {\color{black}This suggests that  to achieve full flexibility of BD-RIS in practical systems (where $N_I\gg D$), it suffices that the number of tunable impedances scales as $\mathcal{O}(N_ID)$, which is significantly lower than the $\mathcal{O}(N_I^2)$ complexity of fully-connected RIS. %, since $\sum_{k=1}^K N_k$ and $N_T$ are typically smaller than $N_I$ in practical systems. }% Consequently, the proposed architecture class achieves the same performance (in terms of all existing performance metric) as fully-connected RIS with much lower circuit complexity. 
Our result  holds for a broad class of performance metrics, including the commonly used sum channel gain/sum-rate/energy efficiency maximization, transmit power minimization, and the information-theoretic capacity region.}
To the best of our knowledge, the proposed architecture class is the first with a theoretical performance guarantee in multiuser MIMO systems. In particular, when $K=1$ and $N_k=1$, the proposed architecture class reduces to the tree-connected RIS, which coincides with the results in \cite{tree}.

 Third, we introduce two novel BD-RIS architectures: the band-connected RIS and the stem-connected RIS \cite{qstem}, which are characterized by the band and stem widths, respectively. We prove that when the band and stem widths are $2\min\{D,N_I/2\}-1$, both architectures belong to the proposed optimal architecture class, which justifies the observation in \cite{qstem}. In addition, our numerical results show that, by ranging the band and stem widths from $0$ (which corresponds to the conventional single-connected RIS) to $2\min\{D,N_I/2\}-1$, the two architectures achieve a better performance-complexity trade-off than group-connected RIS, demonstrating superior performance at the same circuit complexity. 
Simulation results also demonstrate the significant benefits of BD-RIS over conventional RIS in multiuser  MIMO systems and the inability of conventional RIS to exploit the presence of multiple users to boost the sum-rate.

\emph{Organization}: The remaining parts of this paper are organized as follows. In Section \ref{sec:2}, we introduce the graph theoretical modeling of BD-RIS architecture, the BD-RIS-aided communication model,  and the problem formulation. In Section \ref{sec:3}, we present our main theoretical result,  namely Theorem \ref{theorem1}, which provides a general condition under which the BD-RIS architecture achieves the best performance. Additionally, two novel BD-RIS architectures are introduced as illustrative examples that satisfy Theorem \ref{theorem1}.    The proof of Theorem \ref{theorem1} is provided in Section \ref{sec:4}. Simulation results are given in Section \ref{sec:5} to validate our theoretical result and the paper is concluded in Section \ref{sec:6}. 

\emph{Notations}: We use boldface lower-case letters, boldface upper-case letters, and upper-case calligraphic letters to represent vectors, matrices, and sets, respectively. For a vector $\x$, $x_n$ refers to its $n$-th entry. %both $\mathbf{X}(m,n)$ and $X_{m,n}$ are used to represent its $(m,n)$-th entry, where the latter is used for simplicity in contexts where it does not cause any confusion. 
For a matrix $\mathbf{X}$, $\mathbf{X}(i_1:i_2,j_1:j_2)$ denotes the submatrix of $\mathbf{X}$ formed by the rows indexed from $i_1$ to $i_2$ and  the columns indexed from $j_1$ to $j_2$, and $\mathbf{X}(i_1:i_2,:)$ denotes the submatrix formed by the rows indexed from $i_1$ to $i_2$ and all columns. In particular, $\mathbf{X}(m,n)$ represents the $(m,n)$-th entry of $\mathbf{X}$. To ease the notation, $X_{m,n}$ is also used in certain contexts if it does not cause any confusion. The operators $\det\mathbf{X}$ denotes the determinant of $\mathbf{X}$. The operators $(\cdot)^T$, $(\cdot)^H$, $\text{conj}\,(\cdot)$,  $(\cdot)^{-1}$, $\mathcal{R}(\cdot)$, and $\mathcal{I}(\cdot)$ take transpose, conjugate transpose, conjugate, inverse, the real part, and the imaginary part of their corresponding arguments, respectively. The symbols $\mathbb{C}$ and $\mathbb{R}$ denote the complex space and  the real space, respectively. The symbol $\mathbf{I}$ represents the identity matrix with an appropriate dimension. If necessary, the dimension is specified as a subscript. {\color{black}The symbol $\mathbf{e}_l\in\R^{n}$ is a standard basis vector in which only the $l$-th element is $1$, and all other elements are $0$.} Finally, $\mathsf{i}$ is the imaginary unit.  %We summarize existing BD-RIS architectures and the corresponding circuit complexity, i.e., the number of impedances, in Table \ref{summary}. 

\section{System Model and Problem Formulation}\label{sec:2}
In this section, we introduce the graph theoretical modeling of BD-RIS architectures,  the BD-RIS-aided multiuser MIMO system model, and the mathematical formulation of our considered problem.
%\begin{figure}
%\includegraphics[scale=0.2]{bdris}
%\centering
%\caption{A BD-RIS aided multiuser MISO system.}
%\end{figure}

%where $\bH_{\text{eff}}(\bthe):=[\bh_{\text{eff},1}(\bthe),\bh_{\text{eff},2}(\bthe),\dots,\bh_{\text{eff},K}(\bthe)]^{H}$, with $\bh_{\text{eff},k}(\bthe)^H\in\C^{1\times M}$ representing the channel from the transmitter to receiver $k$. 
 %Here we use the notation $\bh_{\text{eff},k}(\bthe)$ to emphasize the dependence of the effective channel on $\bthe$.
\subsection{BD-RIS modeling}\label{existing_arch}
%We consider a multiuser MISO system aided by BD-RIS. 
According to \cite{BDRIS}, an $N_I$-element BD-RIS can be modeled as $N_I$ antennas connected to an $N_I$-port reconfigurable impedance network.  To characterize the circuit topology of BD-RIS, it is convenient to describe the reconfigurable impedance network in terms of its admittance matrix $\bY\in\C^{N_I\times N_I}$, i.e., the $Y$-parameter \cite{microwavebook}. Specifically, let $\bar{Y}_n$ and $\bar{Y}_{n,m}$ denote the self admittance of the $n$-th port and the admittance connecting the $n$-th and $m$-th ports, respectively, then the admittance matrix $\mathbf{Y}$ of the $N_I$-port reconfigurable impedance network is given by \cite{BDRIS}
$$
Y_{n,m}=\left\{
\begin{aligned}
&-\bar{Y}_{n,m},~~~~~~~~~\,~~&\text{if}~n\neq m;\\
&\bar{Y_n}+\textstyle\sum_{j\neq n}\bar{Y}_{n,j},~~&\text{if }n=m.
\end{aligned}\right.
$$
 Note that $Y_{n,m}\neq 0$ if and only if $\bar{Y}_{n,m}\neq 0$. Hence,   interconnections between ports can be represented by nonzero elements in $\bY$. Under the common assumptions that the reconfigurable impedance network is lossless and reciprocal\footnote{In the lossy and non-reciprocal case, the BD-RIS architecture can be modeled as a weighted directed graph. The design of lossy and non-reciprocal BD-RIS architecture is left as future work. } \cite{generalmodel,microwavebook}, the admittance matrix $\bY$ should be purely imaginary and symmetric, which we denote as 
  \begin{equation}\label{YandB}
  \bY=\mathsf{i} \bB,~\text{where }\bB=\bB^T.
  \end{equation} 
  Here, $\bB\in\R^{N_I\times N_I}$ is the susceptance matrix of the reconfigurable impedance network.  
 %For a reciprocal and lossless network, $\bY$ should be symmetric and purely imaginary, written as $\bY=\mathsf{i} Z_0\bB,~\bB=\bB^T$. Hence, $$\bthe=(\mathbf{I}+\mathsf{i}Z_0\bB)^{-1}(\mathbf{I}-\mathsf{i}Z_0\bB),~~\bB=\bB^T.$$

Following \cite{generalmodel}, we model the circuit topology of BD-RIS and the corresponding admittance matrix $\bY$ using a graph 
$\mathcal{G}=(\mathcal{V},\mathcal{E}),$
where the vertex set $\mathcal{V}=\{V_1,V_2,\dots, V_{N_I}\}$ corresponds to the $N_I$ ports of the reconfigurable impedance network,  and the edge set $\mathcal{E}=\left\{(V_n,V_m)\mid {Y}_{n,m}\neq 0\right\}$ captures  all the interconnections between ports. Depending on the circuit topology of the reconfigurable impedance network, existing BD-RIS architectures can be categorized into various types.
\begin{figure}
\includegraphics[width=1\columnwidth]{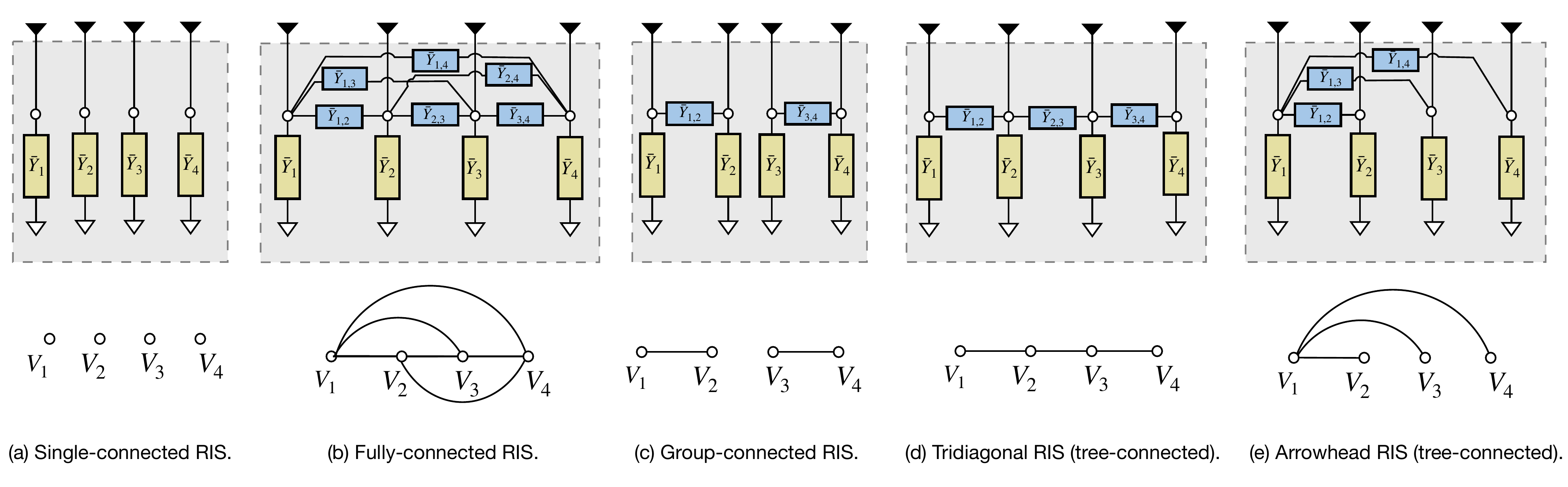}
\centering
\caption{Different RIS architectures and their graph representation.}
\label{ris_arch}
\end{figure}

\begin{itemize}
\item \emph{Single-connected RIS (also known as conventional RIS or diagonal RIS)}.  In single-connected RIS, there is no interconnection between ports. Hence, $\bY$ is a  diagonal matrix, and the corresponding graph is an empty graph \cite{bondy2008graph}, i.e., $\mathcal{E}=\emptyset$; see Fig. \ref{ris_arch} (a).
\item \emph{Fully-connected RIS}. In fully-connected RIS, every pair of ports is connected, and thus $\bY$ is a full matrix. The corresponding graph is a complete graph \cite{bondy2008graph}, i.e., $\mathcal{E}=\{(V_n,V_m)\mid 1\leq n, m\leq N_I, ~n\neq m\}$; see Fig. \ref{ris_arch} (b).
\item \emph{Group-connected RIS}. In group-connected RIS, the ports of the reconfigurable impedance network are  divided into several independent groups. Within each group, fully-connected architecture is employed. In this scheme,   $\bY$ exhibits a  block diagonal structure and the corresponding graph is a cluster graph \cite{bondy2008graph}, i.e., a disjoint union of complete graphs. Specifically, let $G$  denote the number of groups, then $\mathcal{V}=\mathcal{V}_1\cup\mathcal{V}_2\cup\cdots\cup\mathcal{V}_{G}$ with $\mathcal{V}_i\cap\mathcal{V}_j=\emptyset$ for all $i\neq j$, {\color{black} where $\mathcal{V}_g$ represents the vertices in the $g$-th cluster, corresponding to the ports in the $g$-th group,~$g=1,2,\dots, G$.  The edge set $\mathcal{E}$ is defined as $\mathcal{E}=\mathcal{E}_1\cup\mathcal{E}_2\cup\cdots\cup\mathcal{E}_{G}$, where $\mathcal{E}_g=\{(V_n,V_m)\mid V_n,V_m\in\mathcal{V}_g, ~n\neq m\}$,  $g=1,2,\dots, G$}; see Fig. \ref{ris_arch} (c).
 \item \emph{Tree-connected RIS}. Tree-connected RIS is defined as a class of BD-RIS architectures whose circuit topology forms a tree \cite{bondy2008graph}, i.e., $\mathcal{G}$ is connected and acyclic. %Tree-connected architectures are able to achieve the same performance as fully-connected ones with significantly lower circuit complexity \cite{}.   
Two examples of tree-connected RISs are tridiagonal RIS, where 
\begin{equation}\label{def:tri}
\bY=\left[\begin{matrix}
Y_{1,1}&Y_{1,2}&&\vspace{-0.1cm}\\
Y_{1,2}&Y_{2,2}&\ddots&\vspace{-0.0cm}\\
&\hspace{-0.5cm}\ddots&\hspace{-0.2cm}\ddots&\hspace{-0.35cm}Y_{N_I-1,N_I}\vspace{0.1cm}\\
&&\hspace{-0.5cm}Y_{N_I-1,N_I}&\hspace{-0.3cm}Y_{N_I,N_I}
\end{matrix} \right]
\end{equation}
and arrowhead RIS, where 
\begin{equation}\label{def:arrow}
\bY=\left[\begin{matrix}
Y_{1,1}&Y_{1,2}&\cdots&Y_{1,N_I}\\
Y_{1,2}&Y_{2,2}&&\vspace{-0.0cm}\\
\vdots&&\ddots&\vspace{0.1cm}\\
Y_{1,N_I}&&&\hspace{-0.3cm}Y_{N_I,N_I}
\end{matrix} \right].
\end{equation}The circuit topology of tridiagonal RIS and arrowhead RIS corresponds to a path graph and a  star graph \cite{bondy2008graph}, respectively; see Fig. \ref{ris_arch} (d) and Fig. \ref{ris_arch} (e), respectively.

\end{itemize}

The circuit topology of the reconfigurable impedance network determines both the performance and the complexity of the corresponding BD-RIS architecture.
%The performance and  complexity of a BD-RIS are determined by  its circuit topology.  
The single-connected architecture minimizes circuit complexity but offers limited design flexibility.  Conversely, the fully-connected architecture provides the highest degree of freedom but comes with the highest circuit complexity. %Between these two extreme cases lies substantial design flexibility, allowing for the exploration of new BD-RIS architectures that strike a balance between performance and complexity. 
Between these two extremes lie various BD-RIS architectures, e.g., group- and tree-connected architectures,  which strike a balance between performance and complexity. In particular, for single-user MISO systems, tree-connected RIS has been shown to match the performance of fully-connected RIS with significantly reduced circuit complexity \cite{tree}.  
Building upon this result, this paper seeks to  \emph{identify new low-complexity BD-RIS architectures achieving the optimal performance in the general multiuser MIMO systems.}

\subsection{System Model}
Consider a BD-RIS-aided multiuser MIMO system consisting of  an $N_T$-antenna transmitter,  $K$ multi-antenna users, where the $k$-th user is equipped with $N_k$ received antennas,   and an $N_I$-element BD-RIS.  %For notational simplicity, we use $N_R:=\sum_{k=1}^KN_k$ to denote the total number of antennas at the user side. 
The BD-RIS induces a scattering matrix $\bthe\in\C^{N_I\times N_I}$ to help shape the wireless channel. According to microwave network theory \cite{microwavebook}, the scattering matrix $\bthe$ is  related to the admittance matrix $\bY \in \C^{N_I \times N_I}$ as  follows
\begin{equation}\label{theta_b}
 \begin{aligned}
 \bthe&=(\mathbf{I}+Z_0\bY)^{-1}(\mathbf{I}-Z_0\bY)\\
 &=(\mathbf{I}+\mathsf{i}Z_0\bB)^{-1}(\mathbf{I}-\mathsf{i}Z_0\bB),
 \end{aligned}
 \end{equation}
where $Z_0$ denotes the reference impedance, and the second equality is due to \eqref{YandB}.
  Given $\bthe$,  the effective channel between the transmitter and the $k$-th user can be expressed as\footnote{We assume perfect matching, no mutual coupling, no structural scattering or specular reflection, and that the unilateral approximation
holds \cite{generalmodel}.} 
\begin{equation}\label{eff_channel}
\bH_{\text{eff},k}(\bthe)=\bH_{d,k}+\bH_{r,k}\bthe\bG,~~k=1,2,\dots, K,
\end{equation}
where $\bH_{d,k}\in\C^{N_k\times N_T}$,   $\bH_{r,k}\in\C^{N_k\times N_I}$, and $\bG\in\C^{N_I\times N_T}$  are the channels from the transmitter to the $k$-th user,  from the BD-RIS to the $k$-th user, and from the transmitter to the BD-RIS,  respectively. Here we use the notation $\bH_{\text{eff},k}(\bthe)$ to emphasize the dependence of the effective channel on $\bthe$. Let $\s_k\in\C^{d_k}$ be the transmit symbols for the $k$-th user with $\mathbb{E}\{\s_k\s_k^{H}\}=\mathbf{I}$, and let $\bW_k\in\C^{N_T\times d_k}$ be the beamforming matrix for the $k$-th user. Then the received signal at the $k$-th user is 
\begin{equation}\label{def:r}\mathbf{r}_k=\bH_{\text{eff},k}(\bthe)\sum_{k=1}^K\bW_k\s_k+\n_k,~~k=1,2,\dots, K,
\end{equation}
where $\n_k\sim\mathcal{CN}(0,\sigma^2\mathbf{I})$ is the additive white Gaussian noise. 
\subsection{Problem Formulation}
Given a specific BD-RIS architecture represented by a graph $\mathcal{G}=(\mathcal{V},\mathcal{E})$, the joint beamforming design and BD-RIS optimization problem can be modeled as follows: 
\begin{equation}\label{multiuser}
\begin{aligned}
P_{\mathcal{G}}=\max_{\bW,\bthe,\bB}~&F(\bW, \bH_{\text{eff}}(\bthe))\\
\text{s.t.}~~~ &\bthe=\left(\mathbf{I}+\mathsf{i}Z_0\mathbf{B}\right)^{-1}\left(\mathbf{I}-\mathsf{i}Z_0\mathbf{B}\right),~~\\
~~&\bB=\bB^{T},~~\bB\in\mathcal{B}_\mathcal{G},\\
&\bW\in\mathcal{W},
\end{aligned}
\end{equation}
where $
%\begin{equation*}\label{Heff}
\bH_{\text{eff}}(\bthe)\hspace{-0.05cm}=\hspace{-0.05cm}[\bH_{\text{eff},1}(\bthe)^T,\dots,\bH_{\text{eff},K}(\bthe)^T]^{T}\in\C^{(\sum_{k=1}^K N_k)\times N_T}$ 
%\end{equation*} 
and $\bW=[\bW_1,\dots,\bW_K]\in\C^{N_T\times (\sum_{k=1}^Kd_k)}$ collect the effective channel and beamforming matrix of all users, respectively,  $F(\bW,\bH_{\text{eff}}(\bthe))$ is a general utility function,  $\bW\in\mathcal{W}$ accounts for the constraint at the transmitter side, and %$\mathcal{G}=(\mathcal{V},\mathcal{E})$ represents the circuit topology of the BD-RIS architecture, and 
$$\mathcal{B}_\mathcal{G}:=\{\mathbf{B}\mid B_{n,m}=0,~\forall\,n\neq m\text{ with } (V_n,V_m)\notin\mathcal{E}\}
$$ incorporates the BD-RIS architecture into variable $\bB$. 
\begin{remark}\label{example}
The function $F(\bW, \bH_{\text{\normalfont{eff}}}(\bthe))$ in \eqref{multiuser} encompasses many, if not all,  utility functions commonly considered in the literature. Some examples are as follows. 
\begin{itemize}  \item[(a)] Sum channel gain maximization \cite{qstem}: 
$F=\|\bH_{\text{\normalfont{eff}}}(\bthe)\|_F^2;$ %and $\mathcal{W}=\{\bW\mid \|\bW\|_F^2\leq P\},$
%with $P$ being the maximum transmit power; 
\item[(b)] Sum-rate maximization \cite{PDD,wu}: 
$$F(\bW, \bH_{\text{\normalfont{eff}}}(\bthe))=\hspace{-0.1cm}\sum_{k=1}^KR_k(\bW,\bH_{\text{\normalfont{eff}},k}(\bthe)),$$
%and $\mathcal{W}\hspace{-0.05cm}=\hspace{-0.05cm}\{\bW\mid \hspace{-0.05cm}\|\bW\|_F^2\leq P\}$,
where the rate of the $k$-th user is given by 
 {$$
  \begin{aligned}
R_k(\bW,\hspace{-0.02cm}\bH_{\text{\normalfont{eff}},k}(\bthe))\hspace{-0.05cm}=\hspace{-0.05cm}\log\hspace{-0.02cm}\det\hspace{-0.06cm} \bigg( \mathbf{I} \hspace{-0.05cm}+\hspace{-0.05cm}  \mathbf{H}_{\text{\normalfont{eff}},k}(\bthe) \mathbf{W}_k \mathbf{W}_k^H \mathbf{H}_{\text{\normalfont{eff}},k}(\bthe)^H \nonumber \\
\times \bigg( \sum_{j \neq k} \mathbf{H}_{\text{\normalfont{eff}},k}(\bthe) \mathbf{W}_j \mathbf{W}_j^H \mathbf{H}_{\text{\normalfont{eff}},k}(\bthe)^H + \sigma^2 \mathbf{I} \bigg)^{-1} \bigg);
\end{aligned}
$$}
 \item[(c)] Energy efficiency maximization \cite{PDD}: 
$$F(\bW, \bH_{\text{\normalfont{eff}}}(\bthe))=\frac{\sum_{k=1}^KR_k(\bW,\bH_{\text{\normalfont{eff}},k}(\bthe))}{\eta\|\bW\|_F^2+P_c},$$
where $\eta$ and $P_c$ are the power amplifier efficiency and the energy expenditure required by the device operating, respectively;
 \item[(d)] Transmit power minimization under users' QoS requirements \cite{PDD,wu}:
 $$
 \begin{aligned}F(\bW, \bH_{\text{\normalfont{eff}}}(\bthe))=&-\|\bW\|_F^2-\sum_{k=1}^K\mathbb{I}_{\{R_k(\bW,\H_{\text{\normalfont{eff}},k}(\bthe))\geq \gamma_k\}}(\bW,\bthe),
 \end{aligned} $$
 where $\gamma_k$ is the rate threshold of the $k$-th user, and  $\mathbb{I}_{\mathcal{X}}(x)$ refers to the indicator function
of set $\mathcal{X}.$
 \end{itemize}
 On the other hand, $\bW\in\mathcal{W}$ can represent different types of power constraints at the transmitter side. Some examples are as follows.
\begin{itemize}
\item[(a)] No power budget: $\mathcal{W}=\C^{N\times K};$
\item [(b)]Total transmit power constraint: $\mathcal{W}=\{\|\bW\|_F^2\leq P_T\}$, where $P_T$ is the maximum transmit power at the transmitter;
\item[(c)]  Per-antenna transmit power constraint: $\mathcal{W}=\left\{\sum_{k=1}^K\|\bW_k(n,:)\|_2^2\leq \frac{P_T}{N_T},~n=1,2,\dots, N_T\right\};$
\item[(d)] Constant-envelope power constraint:  $\mathcal{W}=\left\{\sum_{k=1}^K\|\bW_k(n,:)\|_2^2= \frac{P_T}{N_T},~n=1,2,\dots, N_T\right\}.$\end{itemize}
\end{remark}
In this paper, we aim to identify new BD-RIS architectures with low circuit complexity that  achieve the same performance as fully-connected RIS. Since the BD-RIS architecture is generally determined offline, the above property needs to  be satisfied for any channel realization.   Mathematically, the problem translates to seeking a graph $\mathcal{G}$ such that for any problem parameter inputs (i.e., $\bG, \{\bH_{d,k}\}_{k=1}^K,\{\bH_{r,k}\}_{k=1}^K$),  the following result holds:
$$P_{\mathcal{G}}=P_{\text{fully}},$$
where $P_{\text{fully}}$ denotes the optimum achieved by fully-connected RISs, which is, the optimal value of \eqref{multiuser} with $\mathcal{B}_{\mathcal{G}}=\R^{N_I\times N_I}$.  It is worth noting that by focusing on the general model in \eqref{multiuser}, the optimality of our proposed architectures would hold regardless of the performance metric. 

\section{Main Result: the Optimal Architecture}\label{sec:3}
In this section, we discuss our main result. The result is summarized in Section \ref{sec:31}, which gives a class of BD-RIS architectures  with low circuit complexity that match the performance of fully-connected RIS. In particular, we introduce two novel architectures\,---\,band-connected RIS and stem-connected RIS\,---\,that belong to the optimal architecture class in Section \ref{sec:32}. Finally, we provide some further discussions on the main result in Section \ref{sec:33}. %$Notably, we present two novel architectures—band-connected RIS and stem-connected RIS—that belong to this optimal architecture class
% Then, we give some further discussions on the main result in Section \ref{sec:32}. %Finally, the proof of our result is provided in  Section \ref{sec:33}.
\vspace{-0.3cm}

%^In this section, we consider architecture design for BD-RIS. Our goal is to identify new  BD-RIS architectures with low circuit complexity that can achieve the same  sum-rate performance as fully-connected ones. 
\subsection{Main Results}\label{sec:31}

Before presenting our main result, we give the definition of the adjacency matrix, which describes the connectivity of a graph  in an algebraic way. 
%We first introduce two definitions that are used to describe our main result. The first is the adjacency matrix, which characterizes the connectivity of a graph  in an algebraic way. 
\begin{definition}[Adjacency Matrix \cite{bondy2008graph}]
Given a graph $\mathcal{G}$ with $N$ vertices, its adjacency matrix is defined as an $N\times N$ matrix $\bA_\mathcal{G}$, where $\bA_{\mathcal{G}}(n,m)$ represents the number of edges connecting vertices $V_n$ and $V_m$. 
\end{definition}
\hspace{-0.35cm}Note that for the reciprocal BD-RIS considered in this paper, its circuit topology forms  an undirected simple graph, i.e.,  all the edges are bidirectional and at most one edge exists between any pair of vertices. In this case, the adjacency matrix is a symmetric 0\,--\,1 matrix with zeros on its diagonal. %The adjacency matrix provides a convenient representation of the topological connectivity of the BD-RIS circuit. 
  It is convenient to characterize the topological connectivity of the BD-RIS circuit using the adjacency matrix. 
Specifically, the susceptance matrix $\bB$ and the adjacency matrix of its corresponding graph are related as follows: 
\begin{equation}\label{def:adj}
\bA_\mathcal{G}(n,m)=\left\{
\begin{aligned}
1,~~~&\text{if }B_{n,m}\neq 0, ~n\neq m;\\
0,~~~&\text{otherwise}, 
\end{aligned}\right.
\end{equation}
i.e.,  $\bA_\mathcal{G}(n,m)=1$ if there is an interconnection between the $n$-th and $m$-th ports of the reconfigurable impedance network, and $\bA_{\mathcal{G}}(n,m)=0$ otherwise.
\begin{comment} 
\begin{definition}[DoF of the effective channel $\bH_{\text{eff}}(\bthe)${\cite{}}]
The degree of freedom (DoF), also referred to as multiplexing gain, represents the maximum number of independent data streams that can be transmitted in parallel. The DoF for the  effective channels $\bH_{\text{\normalfont{eff}}}(\bthe)$ in \eqref{Heff}  is given by 
$$
\begin{aligned}
L&=\lim_{\rho\to\infty}\frac{\log\det\left(\mathbf{I}+\rho\bH_{\text{\normalfont{eff}}}(\bthe)\bH_{\text{\normalfont eff}}(\bthe)^{H}\right)}{\log\,\rho}\\
&=\min\left\{\sum_{k=1}^K N_k,N_T, N_I\right\}.
\end{aligned}
$$
\end{definition}
\end{comment}

With the definition above, we are now ready to present our main result.  The main result, stated below,  provides a general condition on the adjacency matrix under which the corresponding BD-RIS architecture achieves the same optimal performance as fully-connected RIS. 
\begin{theorem}\label{theorem1}
 The BD-RIS architecture induced by $\mathcal{G}=(\mathcal{V}, \mathcal{E})$ achieves the same optimal performance as fully-connected RIS, i.e., $P_{\mathcal{G}}=P_{\text{\normalfont fully}}$, {\color{black}almost surely}\footnote{{\color{black}Roughly speaking, this means that $P_{\mathcal{G}}=P_{\text{\normalfont fully}}$ for almost all channel realizations, with exceptions being extremely rare and negligible.}} if there exists a permutation matrix $\bP\in\R^{N_I\times N_I}$ such that the adjacency matrix of  $\mathcal{G}$ can be written as  
\begin{equation}\label{adjacencymatrix}
\bA_{\mathcal{G}}=\bP\bar{\bA}_{\mathcal{G}}\bP^T,
\end{equation} 
where $\bar{\bA}_{\mathcal{G}}$ satisfies the following condition
\begin{equation}\label{graph_condition}
\sum_{m=n+1}^{N_I}\hspace{-0.1cm}\bar{\bA}_{\mathcal{G}}({n,m})=\min\{2L-1, N_I-n\},~1\leq n\leq N_I-1\\
\end{equation}
{\color{black}with 
\begin{equation}\label{def:L}
L=\min\left\{D,\frac{N_I}{2}\right\}~\text{  and  }~D=\min\left\{\sum_{k=1}^KN_k,N_T\right\}.
\end{equation}}
\end{theorem}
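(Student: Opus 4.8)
\smallskip
\noindent\textbf{Proof plan.}
The starting observation is that the objective $F$ in \eqref{multiuser} depends on $\bthe$ only through the effective channels $\{\bH_{\mathrm{eff},k}(\bthe)\}_{k=1}^{K}$, hence --- since $\bH_{d,k}$ and $\bG$ are fixed --- only through the single matrix $\bH_r\bthe\bG$, where $\bH_r:=[\bH_{r,1}^{T},\dots,\bH_{r,K}^{T}]^{T}$. Thus, modulo a routine density argument, establishing $P_{\mathcal{G}}=P_{\text{\normalfont fully}}$ reduces to the following \emph{realizability claim}: for almost every channel tuple $(\bG,\{\bH_{d,k}\},\{\bH_{r,k}\})$ and every symmetric unitary $\bthe^{\star}$ with $\det(\bI+\bthe^{\star})\neq 0$, there exists a symmetric $\bB\in\mathcal{B}_{\mathcal{G}}$ such that $\bthe:=(\bI+\mathsf{i}Z_0\bB)^{-1}(\bI-\mathsf{i}Z_0\bB)$ satisfies $\bH_r\bthe\bG=\bH_r\bthe^{\star}\bG$. (The inclusion $P_{\mathcal{G}}\le P_{\text{\normalfont fully}}$ is immediate, and a fully-connected RIS realizes exactly the symmetric unitary matrices having no eigenvalue $-1$, which form a dense set.) If $L=N_I/2$, condition \eqref{graph_condition} forces $\mathcal{G}$ to be the complete graph and the claim is trivial; so we may assume $L=D$ and $N_I\ge 2D$.

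It is convenient to work with $\bM:=\bI+\mathsf{i}Z_0\bB$, so that $\bthe=2\bM^{-1}-\bI$ and $\bH_r\bthe\bG=2\bH_r\bM^{-1}\bG-\bH_r\bG$; as $\bB$ runs over the symmetric matrices in $\mathcal{B}_{\mathcal{G}}$, $\bM$ runs over the invertible complex symmetric matrices with $\mathcal{R}(\bM)=\bI$ and $\mathcal{I}(\bM)$ supported off the diagonal on $\mathcal{E}$, and every such $\bM$ returns a symmetric unitary $\bthe$. Hence the claim becomes: for a.e.\ channels, every value $\bH_r\bM_\star^{-1}\bG$ coming from an arbitrary invertible complex symmetric $\bM_\star$ with $\mathcal{R}(\bM_\star)=\bI$ is reproduced by some $\bM$ as above. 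Introducing the auxiliary unknown $\bZ:=\bM^{-1}\bG$, this is the solvability of the system $\bM\bZ=\bG$, $\bH_r\bZ=\bD$, with $\bD:=\bH_r\bM_\star^{-1}\bG$ the prescribed target, in the unknowns $(\bM,\bZ)$ subject to $\bM=\bM^{T}$, $\mathcal{R}(\bM)=\bI$, $\mathcal{I}(\bM)\in\mathcal{B}_{\mathcal{G}}$.

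The combinatorial content of \eqref{graph_condition} is that, after relabelling the ports by $\bP$, the vertices admit an ordering in which each $V_n$ has exactly $\min\{2L-1,N_I-n\}$ neighbours among $\{V_{n+1},\dots,V_{N_I}\}$; deleting the first port $V_1$ leaves an $(N_I-1)$-port graph of exactly the same type (with the same $D$), which points to an \emph{induction on $N_I$}. In the inductive step one writes $\bM$ in bordered form, with $V_1$'s self-susceptance in the $(1,1)$ corner and its at most $2L-1$ tunable off-diagonal susceptances in the first row and column; the block-inverse (Schur-complement) identity then gives $\bH_r\bM^{-1}\bG=\bH_r'(\bM')^{-1}\bG'+s^{-1}\mathbf{p}\mathbf{q}^{T}$, where $(\bG',\bH_r')$ are the channels with the first RIS coordinate deleted, $\bM'$ is the $(N_I-1)$-port matrix, the vectors $\mathbf{p},\mathbf{q}$ are affine and the scalar $s$ quadratic in the (at most $2L$ real) tunable quantities of $V_1$. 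The plan is to choose those quantities so that $\bD-s^{-1}\mathbf{p}\mathbf{q}^{T}$ is a target realizable by the $(N_I-1)$-port problem, and then invoke the induction hypothesis; the out-degree bound $2L-1$ is exactly what makes the family of attainable rank-one corrections rich enough to keep the residual target inside the lower-dimensional realizable set. The base case is the complete graph, where \eqref{graph_condition} gives fully-connected RIS; when $K=1$ and $N_k=1$ (so $D=L=1$) all forward out-degrees equal $1$, $\mathcal{G}$ is a tree, and the argument specializes to the single-user MISO result of \cite{tree}.

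The main obstacle is to make this induction watertight, which comes down to three linked points. First, one must derive the precise form of the residual target and of the attainable corrections at the peeling step and verify the degree-of-freedom bookkeeping --- reconciling the $2L$ new real parameters per port with the symmetry constraint $\bM=\bM^{T}$, the normalization $\mathcal{R}(\bM)=\bI$, and the roughly $2D-1$ residual constraints inherited from the already-built block --- so that the system solved at each stage is consistent; this is precisely where the band width $2L-1$ of \eqref{graph_condition} is used, and the reason a strictly smaller out-degree does not suffice in general. Second, one must check that the residual target produced at the peeling step still lies in the realizable set of the $(N_I-1)$-port problem, so that the induction hypothesis actually applies. Third, one must show that the set of channel tuples for which some intermediate Schur complement (or $\bM$ itself) is singular, or for which the relevant linear maps drop rank, is contained in a proper algebraic subvariety of parameter space and hence is Lebesgue-null --- this is exactly what ``almost surely'' in the statement refers to. Carrying the symmetry $\bM=\bM^{T}$ and the real-part normalization through the recursion simultaneously, while tracking how the rank-$\le D$ channel structure interacts with the prescribed sparsity, is the delicate part of the argument.
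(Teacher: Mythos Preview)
Your inductive peeling scheme is a natural idea, but it is genuinely different from the paper's proof and, as written, has a real gap. In your Schur-complement step you write $\bH_r\bM^{-1}\bG=\bH_r'(\bM')^{-1}\bG'+s^{-1}\mathbf{p}\mathbf{q}^{T}$ and propose to ``choose [the $V_1$ parameters] so that $\bD-s^{-1}\mathbf{p}\mathbf{q}^{T}$ is a target realizable by the $(N_I-1)$-port problem, and then invoke the induction hypothesis.'' But $s$, $\mathbf{p}$, and $\mathbf{q}$ all depend on $(\bM')^{-1}$, which is precisely the object you intend to produce \emph{afterwards} via the induction hypothesis; so the residual target you feed to the induction depends on the solution the induction is supposed to return. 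This circularity is not resolved in the proposal, and your bilinear parametrization in $(\bM,\bZ)$ does not help untangle it. A second, related issue is that you never characterize the ``realizable'' set of targets for the reduced problem: you need to show that the residual target \emph{can} be hit, but you have no description of which $(\sum_k N_k)\times N_T$ matrices arise as $\bH_r'(\bM')^{-1}\bG'$, so the step ``verify the residual lies in the realizable set'' has no handle.

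The paper avoids both difficulties by a different route. It fixes the full target up front, introducing $\bU=(\bH_r\bthe^{\star})^{H}$ (or $\bV=\bthe^{\star}\bG$, whichever has fewer columns). Because $\bthe^{\star}$ is symmetric unitary, $\bU$ satisfies the explicit constraints $\bU^{H}\bU=\bH_r\bH_r^{H}$ and $\bU^{T}\bH_r^{H}=(\bU^{T}\bH_r^{H})^{T}$. Crucially, the equation $\bthe\bU=\bH_r^{H}$ is \emph{linear} in $\bB$: it becomes $\bB\bM_{\bU}=\boldsymbol{\Gamma}_{\bU}$ with $\bM_{\bU},\boldsymbol{\Gamma}_{\bU}\in\R^{N_I\times 2D}$ determined by $\bU$ and $\bH_r$ alone. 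Vectorizing the free entries of $\bB$ yields an $N_I\kappa\times\bigl(N_I\kappa-\tfrac{\kappa(\kappa-1)}{2}\bigr)$ system $\bA\x=\bb$ (with $\kappa=2D$); the adjacency condition \eqref{graph_condition} makes $\bA$ block lower-triangular with full-column-rank diagonal blocks, so $\mathrm{rank}(\bA)=N_I\kappa-\tfrac{\kappa(\kappa-1)}{2}$ generically. The heart of the argument is then showing $\mathrm{rank}([\bA~\bb])$ equals the same number: the paper exhibits $\tfrac{\kappa(\kappa-1)}{2}$ explicit row combinations that annihilate $[\bA~\bb]$, and the verification that these combinations also kill $\bb$ uses precisely the identity $\bM_{\bU}^{T}\boldsymbol{\Gamma}_{\bU}=\boldsymbol{\Gamma}_{\bU}^{T}\bM_{\bU}$, which is equivalent to the two constraints on $\bU$ above. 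In other words, the ``hidden'' compatibility conditions of the over-determined linear system are exactly the constraints inherited from $\bthe^{\star}$ being symmetric unitary---this is the structural fact your proposal is missing, and it is what replaces your intended induction.

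If you want to pursue an inductive argument, the cleanest fix is to first pass to the linear system $\bB\bM_{\bU}=\boldsymbol{\Gamma}_{\bU}$ (so the unknown is $\bB$, not $(\bM,\bZ)$), and then peel rows of this system rather than ports of $\bM^{-1}$; the block lower-triangular structure of $\bA$ is essentially an inductive structure already, and the paper's row-elimination computation can be read as the ``correction'' step done all at once.
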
\vspace{-0.9cm}
\begin{proof}
See Section \ref{sec:4}.
\end{proof}
%Theorem \ref{theorem1} defines a class of graphs whose corresponding architectures achieve the same optimal performance as fully-connected RISs.  

Note that given two graphs $\mathcal{G}_1$ and $\mathcal{G}_2$, their adjacency matrices satisfy $\mathbf{A}_{\mathcal{G}_1}=\bP\mathbf{A}_{\mathcal{G}_2}\bP^T$ for some permutation matrix $\bP$ if and only if these two graphs are isomorphic{\color{black}\cite{bondy2008graph}.} Consequently, the graphs defined in Theorem \ref{theorem1} can also be described as those isomorphic to  graphs with adjacency matrices satisfying condition \eqref{graph_condition}. To give a more intuitive illustration, we present the structure of adjacency matrices satisfying condition \eqref{graph_condition} in Fig. \ref{graph_condition_matrix}. Focusing on the upper tridiagonal part, the first  $N_I-2L$ rows of the matrix are sparse, each containing $2L-1$ non-zero elements, whereas the remaining rows are fully populated. The corresponding graph has  each of its first $N_I-2L$ vertices  connected to arbitrary $2L-1$ of its subsequent vertices, {\color{black}where the subsequent vertices of $V_i$ refer to the vertices with indices larger than $i$, i.e., $V_{i+1},V_{i+2},\dots, V_{N_I}$,} 
and the last $2L$ vertices form a complete graph. {\color{black}In particular, when $L=\frac{N_I}{2}$, the corresponding architecture is fully-connected RIS.} 
\begin{figure}
\includegraphics[scale=0.23]{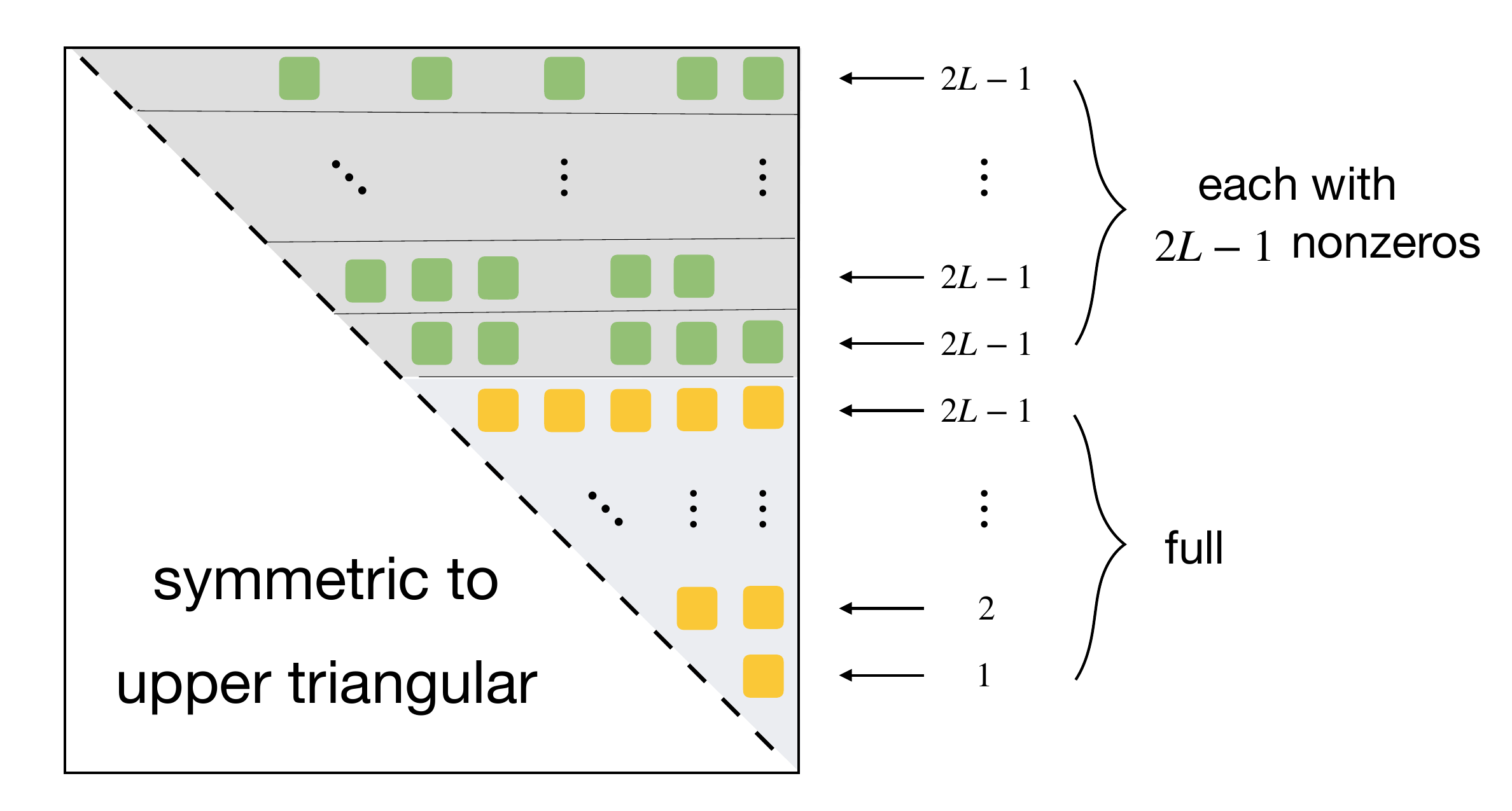}
\centering
\caption{An illustration of adjacency matrices satisfying condition \eqref{graph_condition}, where the green and yellow squares indicate the positions of the nonzero elements. {\color{black} For clarity, only the upper triangular part of the matrix is shown; the lower triangular part is determined by the symmetry of the matrix.}}
\label{graph_condition_matrix}
\end{figure}

\begin{remark}[Circuit complexity] \label{remark:complexity}
{\color{black}According to \eqref{def:adj}, the total number of admittances in a BD-RIS architecture can be calculated as 
$$\text{Num}=N_I+\frac{1}{2}\sum_{n=1}^{N_I}\sum_{m=1}^{N_I}\mathbf{A}_{\mathcal{G}}(n,m),$$
where the first term accounts for the self-admittances of the $N_I$ ports, and the second term records  the number of admittances  for interconnection between ports.  For the BD-RIS architectures specified in Theorem \ref{theorem1}, we further have 
 $$\begin{aligned}
 \frac{1}{2}\sum_{n=1}^{N_I}\sum_{m=1}^{N_I}\mathbf{A}_{\mathcal{G}}(n,m)&\overset{(a)}{=}\frac{1}{2}\sum_{n=1}^{N_I}\sum_{m=1}^{N_I}\mathbf{\bar{A}}_{\mathcal{G}}(n,m)\\
&\overset{(b)}{=}\sum_{n=1}^{N_I-1}\sum_{m=n+1}^{N_I}\mathbf{\bar{A}}_{\mathcal{G}}(n,m)\\
&\overset{(c)}{=}\sum_{n=1}^{N_I-1}\min\{2L-1,N_I-n\}=(N_I-L)(2L-1),
\end{aligned}$$
 where (a) uses the fact that $\bP$ is a permutation matrix, (b) is due to the symmetry of the adjacency matrix, and the (c) uses the condition in \eqref{graph_condition}. Hence, the total number of admittances for the BD-RIS architectures in Theorem \ref{theorem1} is $$\text{Num}=N_I+(N_I-L)(2L-1)=L(2N_I-2L+1),$$} \hspace{-0.1cm}which is in the order of $N_IL$.
  %{\color{black}see a detailed calculation in Appendix \ref{app:circuit_complexity}.} 
  In comparison, fully-connected RIS requires $N_I(N_I+1)/2$ admittances, which scales as $N_I^2$. Given that the number of transmit antennas $N_T$ and  the total number of received antennas across all users $\sum_{k=1}^K N_k$ are typically much smaller than the number of RIS elements $N_I$ in practice, the proposed scheme exhibits significantly lower circuit complexity than fully-connected RISs.
\end{remark}

\begin{remark}[Relationship between the circuit complexity and the DoF of the multiuser MIMO channel]
It is interesting to note that $D$ is exactly the DoF, also known as the multiplexing gain, of the considered multiuser MIMO channel, which characterizes the maximum number of interference-free data streams that can be transmitted in parallel \cite{DoF,DoF2}.  This illustrates that, in order to achieve full  flexibility as provided by fully-connected RIS, it suffices that the circuit complexity of BD-RIS scales as $N_I$ times the DoF of the multiuser MIMO channel (as $N_I\gg D$ in practice).
\end{remark}
\begin{remark}[Connection with the result in \cite{tree}]\label{remark:tree}
For single-user MISO systems, i.e., $K=N_k=1$, tree-connected RIS has been proved to achieve the same performance as fully-connected RIS in terms of maximizing the channel gain.  Interestingly, we can show that when $K=1$ and $N_k=1$,   the conditions specified  by \eqref{adjacencymatrix} and \eqref{graph_condition} in Theorem \ref{theorem1} are satisfied if and only if  $\mathcal{G}$ is a tree, which aligns with the result in \cite{tree}. We relegate the proof of this claim to Appendix \ref{app:tree}.
\end{remark}

\begin{remark}[Impact of the number of data streams $\{d_k\}_{k=1}^K$ on the optimal architecture]
Theorem \ref{theorem1} holds for any given $\{d_k\}_{k=1}^K$. In practice, $d_k$ satisfies $d_k\leq N_k$ for $k=1,2,\dots, K$, and achieving the capacity of the multiuser MIMO channel may require setting $d_k=N_k, ~k=1,2,\dots, K$.  It is worth mentioning that, when $\sum_{k=1}^Kd_k<D$, i.e., when the number of transmitted streams is smaller than the DoF, it is possible to achieve the performance of fully-connected RIS %, in terms of certain utility functions, can be achieved
 with lower circuit complexity than that specified by Theorem \ref{theorem1}. Specifically, if the utility  function takes  the following form\footnote{The motivation for considering the utility function in \eqref{def:F} is that $\bH_{\text{\normalfont eff}}(\bthe)$ and $\bW$ affect the received signals \eqref{def:r} through their product  $\bH_{\text{\normalfont eff}}(\bthe)\bW$, which we refer to as the precoded effective channel matrix. In particular, all utility functions given in Remark \ref{example} belong to the function class in \eqref{def:F}, except for the sum channel gain, which is excluded as it is a function of the effective channel $\bH_{\text{\normalfont eff}}(\bthe)$.
} 
\begin{equation}\label{def:F}
F(\bW,\bH_{\text{\normalfont eff}}(\bthe))=G(\bW,\bH_{\text{\normalfont eff}}(\bthe)\bW),
\end{equation}
{\color{black}where $G: \R^{(N_T+\sum_{k=1}^K N_k)\times \sum_{k=1}^K d_k}\rightarrow \R$ is an arbitrary function.} Theorem \ref{theorem1} holds true with $D$ replaced by 
$\tilde{D}=\sum_{k=1}^Kd_k.$
Roughly speaking, this is because $\tilde{D}$ is the rank of the  precoded effective channel matrix $\bH_{\text{\normalfont eff}}(\bthe)\bW$, which represents the overall end-to-end channel experienced by the transmitted symbols.  The proof builds on a similar idea as Theorem \ref{theorem1}. We omit the details to avoid redundancy.

As a direct consequence of Remark \ref{remark:tree} and the above discussions, tree-connected RIS is optimal for a point-to-point MIMO system with a single data stream. To the best of our knowledge, this result has not been established in the literature.

\end{remark}

\subsection{Two Examples of the Optimal Architectures}\label{sec:32}
To provide further insights into the optimal architecture class given in Theorem \ref{theorem1}, we next present two specific examples: band-connected RIS and stem-connected RIS, which extend the concepts of tridiagonal RIS and arrowhead RIS discussed in Section \ref{existing_arch}, respectively.
\subsubsection{Band-connected RIS}
Band-connected RIS is defined as the BD-RIS architecture whose susceptance matrix $\bB$ is a band matrix given as follows:
\begin{equation}\label{band}
\bB=\left[\hspace{0.1cm}\begin{matrix}
B_{1,1}&\hspace{-0.2cm}\cdots&\hspace{-0.6cm}B_{1,q+1}&\hspace{-0.2cm}&\hspace{-0.2cm}&\hspace{-0.5cm}\\
%B_{2,1}&B_{2,2}&\hspace{-0.2cm}\ddots&\ddots&\hspace{-0.5cm}&\hspace{-0.5cm}\\
\vdots&\ddots&\hspace{-0.8cm}&\hspace{-1cm}\ddots&\hspace{-1cm}&\\
B_{1,q+1}&\hspace{-0.1cm}&\ddots\hspace{0.8cm}&&\hspace{-0.6cm}B_{N_I-q,N_I}\\
&\hspace{-0.1cm}\hspace{-0.5cm}\ddots&\hspace{1cm}\ddots&\hspace{-0.2cm}&\hspace{-0.6cm}\vdots\\
&\hspace{-0.2cm}&\hspace{-1cm}B_{N_I-q,N_I}&\hspace{-1cm}\cdots\hspace{-0.2cm}&\hspace{-0.3cm}B_{N_I,N_I}
\end{matrix}\hspace{-0.2cm}\right],
\end{equation}
i.e., $B_{n,m}=0$ if $|m-n|> q$, where $q$ is referred to as the band width. When  $q=1$, the matrix in \eqref{band} reduces to the tridiagonal matrix in \eqref{def:tri}, and the corresponding BD-RIS architecture is a tridiagonal RIS. The circuit topology of band-connected RIS forms a $q$-step path graph, which is a generalization of the path graph given in Fig. \ref{ris_arch} (d); see the definition below. 
\begin{figure}
\includegraphics[width=0.7 \columnwidth]{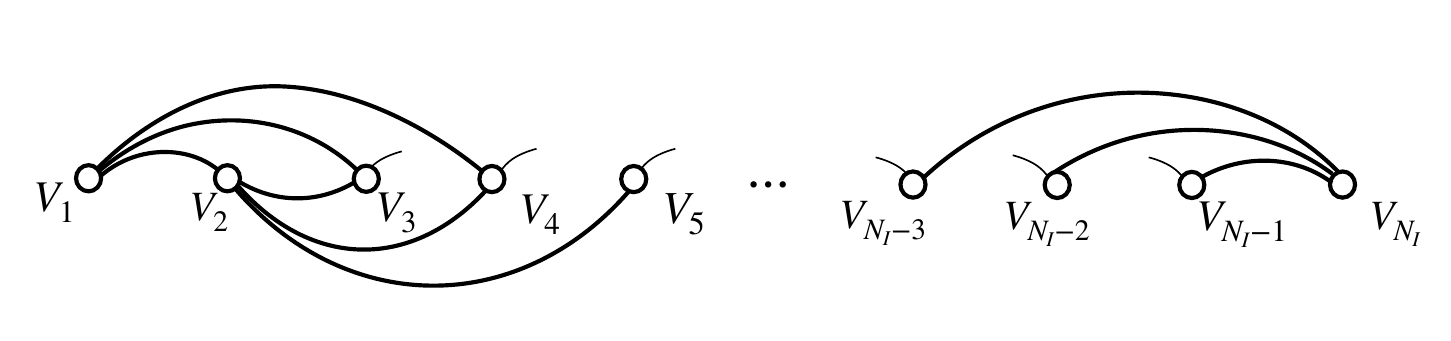}
\centering
\vspace{-0.5cm}
\caption{An illustration of the $q$-step path graph, where $q=3$.}
\label{qstep}
\end{figure}
\begin{definition}[$q$-step path graph] \label{def:npath}
A graph $\mathcal{G}=(\mathcal{V},\mathcal{E})$ is called a $q$-step path graph if each vertex  is connected to its q-nearest neighbors. %, where $q$ is referred to as the step radius. 
Formally, let  $\mathcal{V}=\{{V}_1,{V}_2,\dots,{V}_{N_I}\}$, then the edge set is defined as  
$$\mathcal{E}=\left\{(V_n,V_m)\mid |m-n|\leq q,~~1\leq n,m\leq N_I,~m\neq n\right\}.$$ 
In particular, when $q=1$, the $q$-step path graph reduces to the classical path graph.
\end{definition} 
\hspace{-0.35cm}For illustration, a $q$-step  path graph is shown in Fig. \ref{qstep}.

 As a direct corollary of Theorem \ref{theorem1}, the following result demonstrates the optimality of band-connected RIS. 

%The above matrix is known as a band matrix with band-width $2q+1$. In particular, when $q=1$, \eqref{band} is a tridiagonal matrix. %We name the corresponding BD-RIS architecture as $(2q+1)$-band 
\begin{corollary}
A band-connected RIS with band width $q=2L-1$ achieves the same performance as the fully-connected RIS. 
\end{corollary}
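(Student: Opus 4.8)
The goal is to show that the band-connected RIS with band width $q = 2L-1$ is an instance of the optimal architecture class described by Theorem~\ref{theorem1}. Since Theorem~\ref{theorem1} gives a \emph{sufficient} condition in terms of the adjacency matrix, the entire task reduces to verifying that the adjacency matrix of the $q$-step path graph satisfies condition \eqref{graph_condition} (possibly after a permutation, though here no permutation will be needed, so we may take $\bP = \bI$).

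\textbf{Step 1: Identify the adjacency matrix.} By Definition~\ref{def:npath}, the $q$-step path graph on $N_I$ vertices has $(V_n, V_m) \in \mathcal{E}$ precisely when $1 \le |m-n| \le q$. Equivalently, via \eqref{def:adj}, its adjacency matrix $\bar{\bA}_{\mathcal{G}}$ is the $0$--$1$ band matrix with $\bar{\bA}_{\mathcal{G}}(n,m) = 1$ iff $0 < |m-n| \le q$; this is exactly the sparsity pattern of the susceptance matrix $\bB$ in \eqref{band} (off-diagonal entries only). I would state this correspondence explicitly, setting $\bar{\bA}_{\mathcal{G}} = \bA_{\mathcal{G}}$ (i.e.\ $\bP = \bI$ in \eqref{adjacencymatrix}).

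\textbf{Step 2: Count the row sums of the upper-triangular part.} For each $n$ with $1 \le n \le N_I - 1$, I compute
\[
\sum_{m=n+1}^{N_I} \bar{\bA}_{\mathcal{G}}(n,m) = \#\{\, m : n+1 \le m \le N_I,\ m - n \le q \,\} = \min\{q,\ N_I - n\}.
\]
Substituting $q = 2L - 1$ gives $\min\{2L-1,\ N_I - n\}$, which is precisely the right-hand side of \eqref{graph_condition}. Hence condition \eqref{graph_condition} holds for every $n$.

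\textbf{Step 3: Invoke Theorem~\ref{theorem1}.} With \eqref{adjacencymatrix} satisfied by $\bP = \bI$ and \eqref{graph_condition} verified in Step~2, Theorem~\ref{theorem1} applies directly and yields $P_{\mathcal{G}} = P_{\text{fully}}$ almost surely, i.e.\ the band-connected RIS with $q = 2L-1$ matches the performance of fully-connected RIS.

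\textbf{Main obstacle.} Honestly there is no real obstacle here — the corollary is a near-immediate specialization of Theorem~\ref{theorem1}, and the only thing to be careful about is the bookkeeping in Step~2 (making sure the count of neighbors to the right of vertex $n$ is $\min\{q, N_I-n\}$ rather than, say, $\min\{q, N_I - n\} $ shifted by an off-by-one, and that the definition of "band width" in \eqref{band} matches the "$q$-nearest neighbors" phrasing in Definition~\ref{def:npath}). I would also add a one-line remark that when $q = 2L-1 \ge N_I - 1$ (which happens when $L = N_I/2$), the $q$-step path graph is the complete graph and the band-connected RIS is literally the fully-connected RIS, consistent with the last sentence following Fig.~\ref{graph_condition_matrix}.
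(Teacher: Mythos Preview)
Your proposal is correct and follows essentially the same approach as the paper: write down the adjacency matrix of the $(2L-1)$-step path graph, verify that each row $n$ has exactly $\min\{2L-1,\,N_I-n\}$ ones to the right of the diagonal, and invoke Theorem~\ref{theorem1} with $\bP=\bI$. Your Step~2 bookkeeping is in fact cleaner than the paper's own statement (which writes $m\le\min\{n+2L,N_I\}$ where $n+2L-1$ is meant).
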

\begin{proof}
{\color{black} It suffices to show that the adjacency matrix of a  $(2L-1)$-step path graph satisfies conditions  \eqref{adjacencymatrix} and \eqref{graph_condition}  specified in Theorem \ref{theorem1}. Specifically, }
the adjacency matrix of a  $(2L-1)$-step path graph is given by\footnote{Here we only give the upper triangular part of $\bA_{\mathcal{G}}$, the lower triangular part can be determined immediately due to the symmetry of $\bA_{\mathcal{G}}$.}
$$
\begin{aligned}
&\bA_{\mathcal{G}}(n,m)=\left\{
\begin{aligned}
&1,~~~~\text{if }n+1\leq m\leq \min\{n+2L,N_I\};\\
&0,~~~~\text{otherwise},
\end{aligned}\right.~~~\forall~1\leq n<m\leq N_I.
\end{aligned}$$
{\color{black}Clearly, it satisfies} \eqref{adjacencymatrix} and \eqref{graph_condition} {\color{black}by setting} $\mathbf{P}=\mathbf{I}$  {\color{black}in \eqref{adjacencymatrix}.}
\end{proof}
\textcolor{black}{We remark that the band-connected RIS defined in \eqref{band} can be generalized such that each vertex is connected to any $2L-1$ of its subsequent vertices, not necessarily adjacent ones  (if a vertex has fewer than $2L-1$ subsequent vertices, it connects to all of them). This generalization actually corresponds to the configuration depicted in Fig. \ref{graph_condition_matrix}.}

\subsubsection{Stem-connected RIS \cite{qstem}} Stem-connected RIS is defined as the BD-RIS architecture whose susceptance matrix $\bB$  has the following structure: 

 {\small\begin{equation}\label{center}
 \mathbf{B}\hspace{-0.05cm}=\hspace{-0.05cm}
 \left[\hspace{-0.05cm}\begin{array}{c:c}\begin{matrix}
B_{1,1}\hspace{0.1cm}&\cdots&\hspace{0.1cm}B_{1,q}\\
 \hspace{-0.05cm}\vdots&\hspace{-0.1cm}\ddots\hspace{-0.1cm}&\vdots\\
B_{1,q}\hspace{0.1cm}&\hspace{-0.1cm}\cdots\hspace{-0.1cm} &\hspace{0.1cm}B_{q,q}\\
\end{matrix}\hspace{-0.1cm}& \hspace{-0.1cm}
 \begin{matrix}
B_{1,q+1}\hspace{0.05cm}&\cdots\hspace{-0.15cm}&B_{1,N_I}\\
\vdots&\hspace{-0.15cm}\ddots\hspace{-0.15cm}&\vdots\\
B_{q,q+1}\hspace{0.05cm}&\hspace{-0.15cm}\cdots\hspace{-0.15cm}&B_{q,N_I}\end{matrix}\vspace{0.07cm}\\
\hdashline
  \begin{matrix}
 \hspace{-0.05cm} B_{1,q+1}\hspace{-0.15cm}&\cdots&\hspace{-0.05cm}B_{q,q+1}\\
\hspace{-0.05cm}  \vdots&\ddots&\vdots\\
\hspace{-0.05cm}B_{1,N_I}&\cdots&B_{q,N_I}\end{matrix}\hspace{-0.1cm}&\hspace{-0.1cm}\begin{matrix}
B_{q+1,q+1}\hspace{-0.05cm}&&\\&\hspace{-0.1cm}\ddots&\\&&\hspace{-0.2cm}B_{N_I,N_I}\end{matrix}\hspace{-0.05cm}\end{array}\hspace{-0.1cm}\right],
\end{equation}}
\hspace{-0.15cm}i.e.,  $B_{n,m}=0$ for $ n,m >  q$ with $n\neq m$, where we call $q$ the stem width. When $q=1$, the matrix in \eqref{center} reduces to the arrowhead matrix in \eqref{def:arrow}, and the corresponding architecture is the arrowhead RIS. The circuit topology of a stem-connected RIS with stem width $q$ corresponds to a $q$-center graph, which generalizes the star graph shown in Fig. 1(e); see the definition below.
%The circuit topology of stem-connected RIS forms a $q$-center graph, which is a generalization of the star graph in Fig. \ref{ris_arch} (e); see the definition below. 
\begin{definition}[$q$-center graph] \label{def:npath}
A graph $\mathcal{G}=(\mathcal{V},\mathcal{E})$ is called a $q$-center graph if it contains $q$ central vertices, each connected to all other vertices, while the remaining vertices are only connected to the central vertices. Specifically, denote the central vertices as %\footnote{{\color{black}For the stem-connected architecture defined in \eqref{center}, the central vertices are the first $q$ vertices, i.e., $\mathcal{V}_c=\{V_1,V_2,\dots,V_q\}$. The BD-RIS architecture corresponding to other set of central vertices can be defined similarly and is also optimal by similar arguments as in Corollary \ref{corollary:stem}.}} 
$\mathcal{V}_c\subseteq \mathcal{V}=\{V_1,V_2,\dots,V_{N_I}\}$, where $|\mathcal{V}_c|=q$. The edge set is then given by  
$$\mathcal{E}=\{(V_n,V_m)\mid V_n\in\mathcal{V}_c, ~\mathcal{V}_m\in\mathcal{V},~1\leq n,m\leq N_I,~ n\neq m\}.$$
When $q=1$, the $q$-center graph reduces to  a star graph.
\end{definition}
\begin{figure}
\includegraphics[width=0.6 \columnwidth]{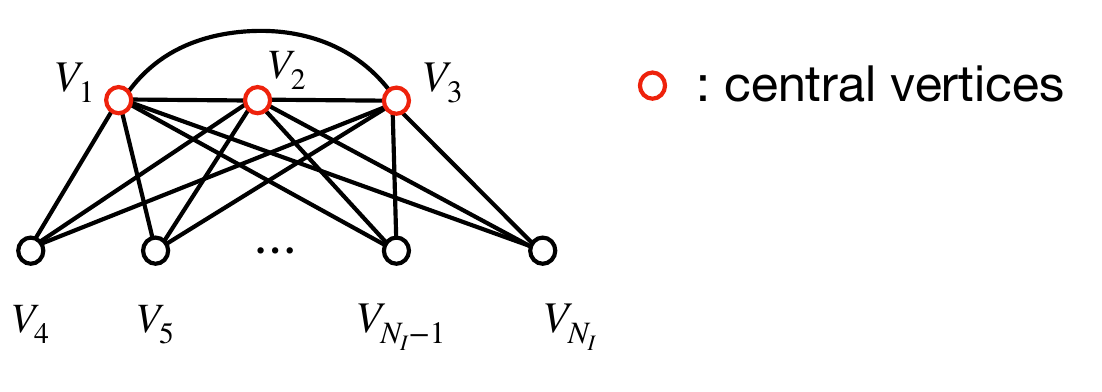}
\centering
\vspace{-0.5cm}
\caption{An illustration of the $q$-center graph, where $q=3$ {\color{black} and $\mathcal{V}_c=\{V_1,V_2,V_3\}$.}}
\label{qcenter}
\end{figure}
 \hspace{-0.35cm}An illustrative example of $q$-center graph is given in Fig. \ref{qcenter}.  {\color{black}For the stem-connected architecture defined in \eqref{center}, the central vertices are the first $q$ vertices, i.e., $\mathcal{V}_c=\{V_1,V_2,\dots,V_q\}$. The BD-RIS architecture corresponding to other set of central vertices can be defined similarly and is also optimal (as long as $q=2L-1$) by similar arguments as in Corollary \ref{corollary:stem}.}%The susceptance matrix $\bB$ corresponding to a $q$-center graph has the following form\footnote{Without loss of generality, we assume that the central vertices are the first $q$ vertices.}:  \vspace{-0.1cm}
 
 % The following corollary illustrates the optimality of stem-connected RIS. 

%The above matrix is known as a band matrix with band-width $2q+1$. In particular, when $q=1$, \eqref{band} is a tridiagonal matrix. %We name the corresponding BD-RIS architecture as $(2q+1)$-band 
\begin{corollary}\label{corollary:stem}
A stem-connected RIS with stem width $q=2L-1$ achieves the same performance as the fully-connected RIS. 
\end{corollary}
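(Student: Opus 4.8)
The plan is to derive this as a direct consequence of Theorem~\ref{theorem1}: since a stem-connected RIS with stem width $q$ has circuit topology equal to a $q$-center graph, it suffices to exhibit a permutation matrix $\bP$ for which the relabeled adjacency matrix $\bar{\bA}_{\mathcal{G}}$ of the $q$-center graph with $q=2L-1$ satisfies the row-sum condition \eqref{graph_condition}. First I would write $\bA_{\mathcal{G}}$ explicitly: taking the central vertices to be $V_1,\dots,V_q$ as in \eqref{center}, we have $\bA_{\mathcal{G}}(n,m)=1$ whenever $\min\{n,m\}\le q$ and $n\ne m$, and $\bA_{\mathcal{G}}(n,m)=0$ for $n,m>q$ with $n\ne m$.

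The step that actually requires a little thought is the choice of permutation: the natural ordering in \eqref{center}, which lists the central vertices first, is \emph{not} the one that fits \eqref{graph_condition}; instead one must relabel so that the $q$ central vertices come \emph{last}. Concretely, let $\bP$ be the cyclic shift by $q$ positions, sending the non-central vertices $V_{q+1},\dots,V_{N_I}$ to positions $1,\dots,N_I-q$ and the central vertices $V_1,\dots,V_q$ to positions $N_I-q+1,\dots,N_I$, and set $\bar{\bA}_{\mathcal{G}}=\bP^T\bA_{\mathcal{G}}\bP$. Then I would verify \eqref{graph_condition} row by row. For $1\le n\le N_I-q$ the vertex in position $n$ is non-central, so its only neighbours are the $q$ central vertices, all of which sit at positions strictly larger than $N_I-q\ge n$; hence $\sum_{m=n+1}^{N_I}\bar{\bA}_{\mathcal{G}}(n,m)=q=2L-1$, and this equals $\min\{2L-1,N_I-n\}$ because $n\le N_I-q=N_I-2L+1$ forces $N_I-n\ge 2L-1$. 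For $N_I-q<n\le N_I$ the vertex in position $n$ is central, hence adjacent to every other vertex, so $\sum_{m=n+1}^{N_I}\bar{\bA}_{\mathcal{G}}(n,m)=N_I-n$, and this equals $\min\{2L-1,N_I-n\}$ because $n>N_I-q$ forces $N_I-n<2L-1$. Since $1\le n\le N_I-1$ is covered in both cases, condition \eqref{graph_condition} holds, and Theorem~\ref{theorem1} yields $P_{\mathcal{G}}=P_{\text{\normalfont fully}}$ almost surely.

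There is no deep obstacle here, as the statement is a corollary; the main care points are the boundary bookkeeping in the verification above (the first $N_I-2L$ rows fall on the $2L-1$ branch of the minimum while the last $2L$ fall on the $N_I-n$ branch, which also reproduces the picture in Fig.~\ref{graph_condition_matrix}: the last $2L$ vertices, namely the $2L-1$ central ones together with the single non-central vertex now in position $N_I-2L+1$, form a clique) and a check that the construction is well defined, i.e.\ $q=2L-1\le N_I-1$ always holds since $L\le N_I/2$, so at least one non-central vertex exists. I would also remark on the degenerate case $L=N_I/2$, where $q=N_I-1$: then the lone non-central vertex is adjacent to all $N_I-1$ central vertices, so the $q$-center graph is complete and the claim is immediate. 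Finally, the same argument applies verbatim for any other choice of the $q$ central vertices, which justifies the comment made just before the statement.
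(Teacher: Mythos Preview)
Your proof is correct and follows essentially the same route as the paper: both reduce to Theorem~\ref{theorem1} by exhibiting a permutation that places the $q=2L-1$ central vertices in the last $q$ positions and then checking \eqref{graph_condition}. The only cosmetic difference is the choice of permutation---the paper uses the reversal $\bP=[\mathbf{e}_{N_I},\mathbf{e}_{N_I-1},\dots,\mathbf{e}_1]$ rather than your cyclic shift---but since both produce the same partition into non-central positions $1,\dots,N_I-q$ and central positions $N_I-q+1,\dots,N_I$, the row-by-row verification is identical; your version is in fact more explicit than the paper's, which simply asserts the condition holds without spelling out the case split.
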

\begin{proof}

 %Again, to meet the conditions in Theorem \ref{theorem1}, the number of central vertices should satisfy $q=2K-1$. It is easy to check that the above architecture satisfies Theorem \ref{theorem1} with  $\bP=[\mathbf{e}_n,\mathbf{e}_{n-1},\dots,\mathbf{e}_1]$ and  
Note that for a stem-connected RIS with stem width $q$, the corresponding adjacency matrix  is given by
 \begin{equation*}\label{qstem}
{\bA}_{\mathcal{G}}(n,m)=\left\{
\begin{aligned}
&1,~~~~\text{if }m\leq q \text{ or }n\leq q,~m\neq n;\\
&0,~~~~\text{otherwise}.
\end{aligned}\right.
\end{equation*}
When the stem width $q=2L-1$, the above matrix satisfies \eqref{adjacencymatrix} and \eqref{graph_condition} in Theorem \ref{theorem1} with  $\bP=[\mathbf{e}_{N_I},\mathbf{e}_{N_I-1},\dots,\mathbf{e}_1]$.% and  
 %$$
 %\begin{aligned}
 %&\bar{\bA}_{\mathcal{G}}(n,m)=\left\{
%\begin{aligned}
%&1,~~~~\text{if }\max\{n+1,N-2K+1\}\leq m\leq N;\\
%&0,~~~~\text{otherwise}.
%\end{aligned}\right.\\
%&\hspace{6cm}\forall~1\leq n<m\leq N.
%\end{aligned}
%$$
\end{proof}

We remark that the stem-connected RIS was first introduced in a recent work \cite{qstem}, where it  was numerically shown to achieve the same performance as the fully-connected RIS in terms of maximizing the sum channel gain of multiuser MISO systems. However, the optimality of stem-connected RIS has not been theoretically established in \cite{qstem}.

To conclude this subsection, we give a summary of existing BD-RIS architectures and their corresponding circuit complexity in Table \ref{summary}.
\begin{table*}
\fontsize{7}{7}\selectfont
\caption{A summary of existing BD-RIS architectures and the corresponding circuit complexity.}
\begin{tabular}{c|c|c|c}
\hline
\multirow{2}{*}{Architecture}&\multirow{2}{*}{Number of admittances}&\multirow{2}{*}{Property}&\multirow{2}{*}{Examples}\\
&&&\\
\hline
\multirow{2}{*}{single-connected RIS}&\multirow{2}{*}{$N_I$}&{least design flexibility,}&\multirow{2}{*}{$\backslash$}\\
&&least circuit complexity&\\
\hline
\multirow{2}{*}{group-connected RIS}&\multirow{2}{*}{$\frac{N_I}{2}\left(\frac{N_I}{G}+1\right)$}&balance between single-connected RIS&$G=N_I$: single-connected RIS\\
&&and fully-connected RIS&$G=1$: fully-connected RIS\\
\hline
\multirow{2}{*}{fully-connected RIS} &\multirow{2}{*}{$\frac{N_I(N_I+1)}{2}$}&{highest design flexibility,}&\multirow{2}{*}{$\backslash$}\\
&&highest circuit complexity&\\
\hline
\multirow{2}{*}{tree-connected RIS} &\multirow{2}{*}{$2N_I-1$}&{optimal performance with least circuit}&\multirow{2}{*}{tridiagonal RIS and arrowhead RIS}\\
&&  complexity in single-user MISO systems&\\
\hline
{proposed architectures}&{$L(2N_I\hspace{-0.05cm}-\hspace{-0.05cm}2L\hspace{-0.05cm}+\hspace{-0.05cm}1)$, where $L=\min\{D, \frac{N_I}{2}\}$}&optimal performance&{band-connected RIS and stem-connected RIS}\\
in Theorem \ref{theorem1}&and $D=\min\{\sum_{k=1}^K\hspace{-0.05cm} N_k, N_T\}$&in multiuser MIMO systems&with band and stem width $q=2L-1$\\
\hline
\end{tabular}
\label{summary}
\centering
\end{table*}

%\%end{enumerate}
\subsection{Discussions on Theorem \ref{theorem1}}\label{sec:33}
In this subsection, we give some further discussions on Theorem \ref{theorem1}. 

First, we discuss the practical applicability of Theorem \ref{theorem1}. According to its statement, the proposed architectures can match the performance of fully-connected RIS  at the optimum, i.e., when problem \eqref{multiuser} is solved to global optimality. However, globally solving \eqref{multiuser} is infeasible  in practice  due to its highly non-convex nature and large dimensionality. %Nonetheless, the applicability of Theorem \ref{theorem1} is broader than it may initially appear. 
Nevertheless, we can actually obtain a stronger result than Theorem \ref{theorem1}. Let $\mathcal{G}$ be a graph satisfying the conditions in Theorem \ref{theorem1}, then the following result holds:%\footnote{This result is inherently shown in the proof of Theorem \ref{theorem1}; we omit the details here.}:
\begin{equation}\label{generalresult}
\begin{aligned}
\mathcal{H}_{\text{fully}}=\mathcal{H}_{\mathcal{G}}\cup\mathcal{N},
\end{aligned}
\end{equation}
where 
{
\begin{equation}\label{Hfully}
\begin{aligned}\mathcal{H}_{\text{fully}}=&\{\bH_{\text{eff}}(\bthe)\mid \bthe=\left(\mathbf{I}+\mathsf{i}Z_0\mathbf{B}\right)^{-1}\hspace{-0.05cm}\left(\mathbf{I}-\mathsf{i}Z_0\mathbf{B}\right),%\\&\hspace{5.3cm} 
\bB=\bB^T \}
\end{aligned}
\end{equation}}
and
{\begin{equation}\label{Hproposed}
\begin{aligned}
\mathcal{H}_{\mathcal{G}}=\{\bH_{\text{eff}}(&\bthe)\hspace{-0.05cm}\mid \hspace{-0.05cm}\bthe=\left(\mathbf{I}+\mathsf{i}Z_0\mathbf{B}\right)^{-1}\hspace{-0.05cm}\left(\mathbf{I}-\mathsf{i}Z_0\mathbf{B}\right),%\\&~~~~~
\,\bB=\bB^T,~\bB\in\mathcal{B}_\mathcal{G}\}
\end{aligned}\end{equation}}
\hspace{-0.15cm}are the effective channel sets achieved by fully-connected RIS and the architecture induced by $\mathcal{G}$, respectively, and $\mathcal{N}$ is a low-dimensional subset of  {\color{black}$\mathcal{H}_{\text{fully}}$}; see Theorem \ref{theorem2} in Section \ref{sec:4} for a rigorous statement. 
This means that the proposed architecture can achieve (almost) the same flexibility in channel shaping as the fully-connected RIS\footnote{The set $\mathcal{N}$ can be ignored as its dimension is lower than that of  {\color{black}$\mathcal{H}_{\text{fully}}$}. Intuitively, this means that its volume is negligible compared to that of \hspace{-0.03cm}{\color{black}$\mathcal{H}_{\text{fully}}$}.}.  Therefore, given a solution corresponding to a fully-connected RIS (which might be obtained via some numerical optimization approach),  we can, with probability one, construct a solution $\bB\in\mathcal{B}_{\mathcal{G}}$ for the proposed architecture that achieves the same performance. %In other words,
This statement is stronger and more practically  applicable than the one presented in Theorem \ref{theorem1}.  As another key implication of \eqref{generalresult}, we note that the optimality of the proposed architectures extends beyond the performance metric in \eqref{multiuser} for linear precoding --- they also achieve the same information-theoretic limits as fully-connected RIS. Specifically, the capacity region \cite{capacity} attained by a BD-RIS aided MU-MIMO channel is given  as follows:  
 $$\mathcal{C}=\text{Conv}\left\{\bigcup_{\bH\in\mathcal{H}}\bigcup_{\pi}\bigcup_{\sum_{k=1}^K\hspace{-0.08cm}\text{Tr}(\bQ_k)=P_T\atop \bQ_k\geq 0, \,\forall k}\{(R_1,R_2,\dots,R_K)\}\right\},$$
where $\mathcal{H}$ is the effective channel set achieved by a given BD-RIS architecture, $\pi$ is a permutation of $\{1,2,\dots, K\}$ defining the encoding order, $\bQ_k$ is the transmit covariance matrix for the $k$-th user,  $R_k$ is the achievable rate of the $k$-th user, with $R_{\pi(q)}$ given by
$$R_{\pi(q)}=\log\det\left(\mathbf{I}+\bH_{\pi(q)}\bQ_{\pi(q)}\bH_{\pi(q)}^H\left(\sigma^2\mathbf{I}+\bH_{\pi(q)}\sum_{p>q}\bQ_{\pi(p)}\bH_{\pi(q)}\right)^{-1}\right),~~q=1,2,\dots, K,$$ and $\text{Conv}\{\mathcal{X}\}$ denotes the convex hull of set $\mathcal{X}$. 
According to \eqref{generalresult} and due to the fact that $\mathcal{N}$ is a low-dimensional subspace of $\mathcal{H}_{\mathcal{G}}$, we can conclude that the capacity region attained by the proposed architecture and the fully-connected RIS are  (almost) the same, differing only in a set of measure zero. 
%the optimality of the proposed architecture is not limited to  communication systems but also holds for other systems, e.g., dual-function radar-communication (DFRC) systems \cite{DFRC} and unmanned aerial vehicle (UAV) systems \cite{UAV}.

Next, we provide insights into why it is possible to achieve the performance of fully-connected RIS with significantly fewer connections when the number of RIS elements $N_I$ is large. The following discussion is heuristic, and a rigorous proof can be found in Section \ref{sec:4}.  We take $D=\sum_{k=1}^K N_k$ as an example to give some intuitions. %The case for $L=N_T$ is similar. See the end of Section \ref{} for discussions on this case. 
%Without loss of generality, we assume $L=\sum_{k=1}^KN_k$. 
%Consider first the case where $\sum_{k=1}^KN_k\leq N_T< N_I$. 
 In this case, we introduce a sequence of new variables: 
\begin{equation}\label{def:uk}
\bU_k=(\bH_{r,k}\bthe)^{H}\in\C^{N_I\times N_k},~  k=1,2,\dots, K,
\end{equation}
which  allows us to express the effective channel in \eqref{eff_channel}  as 
\begin{equation*}\label{}
\bH_{\text{eff},k}=\bH_{d,k}+\bU_k^{H}\bG,~~k=1,2,\dots, K.
\end{equation*}
Clearly, the impact of BD-RIS on the effective channel is fully characterized by the new variables $\{\bU_k\}_{ k=1}^ K$. If  full flexibility in  $\{\bU_k\}_{ k=1}^ K$  (as provided by fully-connected RIS) is achieved, the  full potential of BD-RIS is realized. 
Due to the relationship between $\bthe$ and $\bB$, $\bU_k=(\bH_{r,k}\bthe)^H$ can be expressed   as a linear system over $\bB$ and $\bU_k$ as follows:
\begin{equation}\label{system}
\left(\mathbf{I}-\mathsf{i}Z_0\mathbf{B}\right)\bU_k=\left(\mathbf{I}+\mathsf{i}Z_0\mathbf{B}\right)\bH_{r,k}^H,~k=1,2,\dots K.\end{equation}
Hence, $\{\bU_k\}_{ k=1}^ K$ are achievable by the BD-RIS architecture induced by $\mathcal{G}$ if and only if the above linear system has a solution $\bB\in\mathcal{B}_{\mathcal{G}}$ with $\bB=\bB^T$.  Since the total number of linear equations  in \eqref{system} is $N_ID$, we only require the degree of freedom (i.e., the nonzero elements) in $\bB$ to also be in the order of $N_ID$ to guarantee the existence of a solution. %Similarly, when $N_T<\sum_{k=1}^K{N_k}$, we can introduce a new variable $\bV=\bthe\bG$ to express the effective channel as 
%\begin{equation*}\label{}
%\bH_{\text{eff},k}=\bH_{d,k}+\bH_{r,k}\bV,~~k=1,2,\dots, K.
%\end{equation*}
%To achieve full flexibility, it suffices for the number of nonzero elements in $\bB$ to be in the order of $N_IN_T$, which ensures that the following linear system admits a solution for a given $\bV$: 
% \begin{equation*}
%\left(\mathbf{I}+jZ_0\mathbf{B}\right)\bV=\left(\mathbf{I}-jZ_0\mathbf{B}\right)\bG.
%\end{equation*}
This explains why it is possible to achieve full flexibility of BD-RIS (as provided by fully-connected RIS)  with circuit complexity in the order of $N_ID$. The case $D=N_T$ follows a similar idea; see the end of Section \ref{sec:4} for further discussions.

\section{Proof of the Main Result}\label{sec:4}
In this section, we provide a rigorous proof of Theorem \ref{theorem1}. The result is trivial when $L$ defined in \eqref{def:L} satisfies $L=N_I/2$, which results in a fully-connected RIS. In the following, we will focus on the case where $L=D=\sum_{k=1}^K N_k$.   The proof for $L=D=N_T$ follows a similar technique and  a brief discussion is given at the end of this section. In Section \ref{sec4a}, we formally state the more general result given in \eqref{generalresult} as Theorem \ref{theorem2}, which yields Theorem \ref{theorem1} as a direct corollary. The proof of Theorem \ref{theorem2} is then provided in Section \ref{sec4b}. %Finally, we briefly discuss the case $L=N_T$ in Section \ref{sec4c}, which follows a similar proof technique to that of  $L=\sum_{k=1}^K N_k$.
\subsection{A More  General Result}\label{sec4a}
To begin, we introduce some notations and transformations.  Let $\bH_d=[\bH_{d,1}^T,\bH_{d,2}^T,\dots,\bH_{d,K}^T]^T\in\C^{\sum_{k=1}^K N_k\times N_T}$ and $\mathbf{H}_r=[\bH_{r,1}^T,\bH_{r,2}^T,\dots,\bH_{r,K}^T]^T\in\C^{\sum_{k=1}^K N_k\times N_I}$ denote the direct channel from the  transmitter to all users and the channel from the BD-RIS to all users, respectively. By introducing auxiliary variables $\{\bU_k\}_{k=1}^K$ as in \eqref{def:uk} and letting $\bU=[\bU_1,\bU_2,\dots,\bU_K]\in\C^{N_I\times \sum_{k=1}^KN_k}$, the sets of effective channels achieved by the fully-connected RIS and the BD-RIS architecture induced by graph $\mathcal{G}$, as defined in \eqref{Hfully} and \eqref{Hproposed}, respectively, can be expressed as follows:
\begin{equation}\label{Hfully2}
\begin{aligned}\mathcal{H}_{\text{fully}}=&\{\bH_d+\bU^{H}\bG\mid \bU\in\mathcal{U}_{\text{fully}}\},\end{aligned}
\end{equation}
where 
\begin{equation}\label{setU}
\begin{aligned}
\mathcal{U}_{\hspace{0.03cm}\text{fully}}\hspace{-0.05cm}=\hspace{-0.05cm}\{&\bU\in\C^{N_I\times \sum_{k=1}^KN_k}\mid\bU=(\bH_r\bthe)^{H}, %\\&~
~\bthe=\left(\mathbf{I}+\mathsf{i}Z_0\mathbf{B}\right)^{-1}\hspace{-0.05cm}\left(\mathbf{I}-\mathsf{i}Z_0\mathbf{B}\right),~ \bB=\bB^T \},
\end{aligned}
\end{equation}
and 
\begin{equation}\label{Hproposed2}
\begin{aligned}\mathcal{H}_{\mathcal{G}}=&\{\bH_d+\bU^{H}\bG\mid \bU\in\mathcal{U}_{\mathcal{G}}\},\end{aligned}
\end{equation}
where 
\begin{equation}\label{setU2}
\begin{aligned}
\mathcal{U}_{\mathcal{G}}=&\{\bU\in\C^{N_I\times \sum_{k=1}^K N_k}\mid\bU=(\bH_r\bthe)^{H}, %\\&
\bthe\hspace{-0.05cm}=\hspace{-0.05cm}\left(\mathbf{I}\hspace{-0.05cm}+\hspace{-0.05cm}\mathsf{i}Z_0\mathbf{B}\right)^{-1}\hspace{-0.05cm}\left(\mathbf{I}\hspace{-0.05cm}-\hspace{-0.05cm}\mathsf{i}Z_0\mathbf{B}\right),~ \hspace{-0.05cm}\bB=\bB^T,~\bB\in\mathcal{B}_{\mathcal{G}}\}.
\end{aligned}
\end{equation}
In addition, given $\bU\in\C^{N_I\times \sum_{k=1}^K N_k}$, define  
\begin{equation}\label{Mu}
\bM_\bU\hspace{-0.05cm}=\hspace{-0.05cm}[\mathcal{R}\hspace{-0.05cm}\left(\mathsf{i}\hspace{-0.02cm}Z_0(\bH_r^H\hspace{-0.05cm}+\hspace{-0.05cm}\bU)\right)~\I\hspace{-0.05cm}\left(\mathsf{i}\hspace{-0.02cm}Z_0(\bH_r^H\hspace{-0.05cm}+\hspace{-0.05cm}\bU)\right)]\hspace{-0.05cm}\in\hspace{-0.05cm}\R^{N_I\times 2\hspace{-0.03cm}\sum_{k=1}^K\hspace{-0.08cm}N_k}\end{equation}
and let 
\begin{equation}\label{Nu}
\begin{aligned}
{\color{black}{\mathcal{N}}_\mathcal{U}=\left\{\bU\in\mathcal{U}_{\text{fully}}\mid \text{ there exists an } n\times n  \text{ singular submatrix of }  \bM_{\bU}, \text{where } n\leq2  \sum_{k=1}^K N_k\right\}.}
\end{aligned}
\end{equation}

We are now ready to explicitly establish the relationship in \eqref{generalresult}.

\begin{theorem}\label{theorem2}
Assume that $D=\sum_{k=1}^K N_k$. Let $\mathcal{G}$ be a graph whose adjacency matrix satisfies \eqref{adjacencymatrix} and \eqref{graph_condition} in Theorem \ref{theorem1}, then the sets $\mathcal{U}_{\text{\hspace{0.03cm}\normalfont fully}}$, $\mathcal{U}_{\mathcal{G}}$, and $\mathcal{N}_{\mathcal{U}}$ given in \eqref{setU}, \eqref{setU2}, and \eqref{Nu}, respectively, satisfy 
$$\mathcal{U}_{\hspace{0.03cm}\text{\normalfont fully}}=\mathcal{U}_{\mathcal{G}}\cup\mathcal{N}_{\mathcal{U}}.$$
This implies that the effective channel sets achieved by the fully-connected RIS and the BD-RIS architecture induced by $\mathcal{G}$,   given in \eqref{Hfully2} and \eqref{Hproposed2}, respectively, satisfy
$$\mathcal{H}_{\text{\normalfont fully}}=\mathcal{H}_{\mathcal{G}}\cup \mathcal{N},$$
where $\mathcal{N}=\left\{\bH_d+\bU^H\bG\mid\bU\in\mathcal{N}_{\mathcal{U}}\right\}$.  
\end{theorem}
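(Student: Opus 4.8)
The plan is to recast the claimed set identity as a question about solvability of a \emph{structured} real linear system, and then to settle it by an exact dimension count whose only nontrivial ingredient is an inductive rank argument exploiting the staircase shape of \eqref{graph_condition}. The inclusion $\mathcal{U}_{\mathcal{G}}\cup\mathcal{N}_{\mathcal{U}}\subseteq\mathcal{U}_{\text{fully}}$ is immediate, since $\mathcal{B}_{\mathcal{G}}$ is a subset of the symmetric matrices and $\mathcal{N}_{\mathcal{U}}\subseteq\mathcal{U}_{\text{fully}}$ by definition; the work is in the reverse inclusion, so I would fix $\bU\in\mathcal{U}_{\text{fully}}\setminus\mathcal{N}_{\mathcal{U}}$ and show $\bU\in\mathcal{U}_{\mathcal{G}}$. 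The bridge is the linearization behind \eqref{system}: because $\bB$ is real and symmetric, $\bI\pm\mathsf{i}Z_0\bB$ are invertible, and a direct computation shows that, for fixed $\bU$, the relation $\bU=(\bH_r\bthe)^H$ with $\bthe=(\bI+\mathsf{i}Z_0\bB)^{-1}(\bI-\mathsf{i}Z_0\bB)$ is equivalent to $\mathsf{i}Z_0\bB(\bH_r^H+\bU)=\bU-\bH_r^H$, and hence---splitting real and imaginary parts, which is lossless since $\bB$ is real---to the real matrix equation $\bB\bM_{\bU}=\mathbf{N}_{\bU}$, where $\bM_{\bU}$ is as in \eqref{Mu} and $\mathbf{N}_{\bU}:=[\,\mathcal{R}(\bU-\bH_r^H)\ \ \mathcal{I}(\bU-\bH_r^H)\,]$. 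Thus $\bU\in\mathcal{U}_{\text{fully}}$ means $\bB\bM_{\bU}=\mathbf{N}_{\bU}$ admits a symmetric solution $\bB_0$, while $\bU\in\mathcal{U}_{\mathcal{G}}$ means it admits one inside $\mathcal{B}_{\mathcal{G}}$, so the goal becomes: upgrade solvability over all symmetric matrices to solvability over $\mathcal{B}_{\mathcal{G}}$ whenever $\bU\notin\mathcal{N}_{\mathcal{U}}$. Finally, via the permutation in \eqref{adjacencymatrix}---which merely relabels the $N_I$ ports and transforms $\bH_r,\bthe,\bB,\bM_{\bU}$ and the three sets $\mathcal{U}_{\text{fully}},\mathcal{U}_{\mathcal{G}},\mathcal{N}_{\mathcal{U}}$ covariantly---I may assume $\bA_{\mathcal{G}}=\bar{\bA}_{\mathcal{G}}$ already has the staircase form, and I write $L=\sum_{k=1}^K N_k$ (so $2\sum_k N_k=2L\le N_I$).

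The core step is the linear-algebra claim that if $\bU\notin\mathcal{N}_{\mathcal{U}}$ then $V\cap\mathcal{B}_{\mathcal{G}}=\{\mathbf{0}\}$, where $V:=\{\bD=\bD^T:\bD\bM_{\bU}=\mathbf{0}\}$. I would prove this by sweeping $n=1,\dots,N_I$ and showing that the $n$-th row of every such $\bD$ vanishes: assuming rows $1,\dots,n-1$ (hence, by symmetry, columns $1,\dots,n-1$) are already zero, the undetermined entries of the $n$-th row lie in $\{n\}$ together with the \emph{later} neighbours of $V_n$, which by \eqref{graph_condition} number at most $2L-1$, so at most $2L$ entries are free; the corresponding block of $\bD\bM_{\bU}=\mathbf{0}$ restricts $\bM_{\bU}$ to those $\le 2L$ rows, and since $\bU\notin\mathcal{N}_{\mathcal{U}}$ every square submatrix of $\bM_{\bU}$ of order $\le 2\sum_k N_k=2L$ is nonsingular, so that restricted matrix has full column rank and forces the row to zero (the induction never stalls precisely because \eqref{graph_condition} caps the later neighbours at $2L-1$). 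The same nonsingularity hypothesis gives $\operatorname{rank}\bM_{\bU}=2L$, so $V$ consists of symmetric matrices with column space in the $(N_I-2L)$-dimensional space $(\operatorname{col}\bM_{\bU})^{\perp}$, i.e.\ $\dim V=\binom{N_I-2L+1}{2}$; combined with the fact that the symmetric matrices with the sparsity pattern $\mathcal{B}_{\mathcal{G}}$ form a space of dimension $L(2N_I-2L+1)$ (the admittance count in Remark \ref{remark:complexity}), a short computation gives $\dim V+\dim\mathcal{B}_{\mathcal{G}}=\binom{N_I+1}{2}$, the dimension of the space of real symmetric $N_I\times N_I$ matrices. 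Since $V\cap\mathcal{B}_{\mathcal{G}}=\{\mathbf{0}\}$, this forces the direct sum $V\oplus\mathcal{B}_{\mathcal{G}}=\{\bB:\bB=\bB^T\}$; decomposing $\bB_0=\mathbf{v}+\mathbf{b}$ with $\mathbf{v}\in V$ and $\mathbf{b}\in\mathcal{B}_{\mathcal{G}}$, the matrix $\mathbf{b}$ is symmetric, lies in $\mathcal{B}_{\mathcal{G}}$, and satisfies $\mathbf{b}\bM_{\bU}=\bB_0\bM_{\bU}-\mathbf{v}\bM_{\bU}=\mathbf{N}_{\bU}$, whence $\bU\in\mathcal{U}_{\mathcal{G}}$. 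This establishes $\mathcal{U}_{\text{fully}}=\mathcal{U}_{\mathcal{G}}\cup\mathcal{N}_{\mathcal{U}}$, and applying the fixed map $\bU\mapsto\bH_d+\bU^H\bG$ to both sides yields $\mathcal{H}_{\text{fully}}=\mathcal{H}_{\mathcal{G}}\cup\mathcal{N}$.

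I expect the main obstacle to be the inductive rank argument for $V\cap\mathcal{B}_{\mathcal{G}}=\{\mathbf{0}\}$: one must check carefully that restricting to \emph{later} neighbours is exactly what keeps each row's support at $\le 2L$ throughout the sweep, that the submatrix order $2\sum_k N_k$ built into $\mathcal{N}_{\mathcal{U}}$ is precisely matched by this bound, and that the two subspace dimensions add up \emph{exactly} to that of the symmetric matrices, since the conclusion rests on $V$ and $\mathcal{B}_{\mathcal{G}}$ being complementary rather than merely transverse. The remaining pieces---the equivalence $\bU=(\bH_r\bthe)^H\Leftrightarrow\bB\bM_{\bU}=\mathbf{N}_{\bU}$ with $\bI\pm\mathsf{i}Z_0\bB$ invertible, and the reduction to $\bP=\bI$---are routine bookkeeping. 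For the companion case $D=N_T$ (omitted in this excerpt), I would run the identical scheme after swapping the roles of transmitter and receivers, i.e.\ introducing $\bthe\bG$ in place of $\bH_r\bthe$; only the matrix dimensions change, not the argument.
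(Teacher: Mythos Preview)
Your argument is correct and takes a genuinely different route from the paper. The paper introduces an auxiliary set $\bar{\mathcal{U}}=\{\bU:\bU^{H}\bU=\bH_r\bH_r^{H},\ \bU^{T}\bH_r^{H}=(\bU^{T}\bH_r^{H})^{T}\}$, shows $\mathcal{U}_{\text{fully}}\subseteq\bar{\mathcal{U}}$, and then, for $\bU\in\bar{\mathcal{U}}\setminus\mathcal{N}_{\mathcal{U}}$, vectorizes the constrained system into $\bA\x=\bb$ and proves solvability via Rouch\'e--Capelli: $\operatorname{rank}(\bA)=N\kappa-\tfrac{\kappa(\kappa-1)}{2}$ is easy from the block lower-triangular shape, while $\operatorname{rank}([\bA\ \bb])\le N\kappa-\tfrac{\kappa(\kappa-1)}{2}$ requires exhibiting $\tfrac{\kappa(\kappa-1)}{2}$ explicit row relations in $[\bA\ \bb]$, which the paper carries out through lengthy determinant manipulations (Schur complements, Sherman--Morrison--Woodbury, Laplace expansions) that rely on the identity $\bM_{\bU}^{T}\boldsymbol{\Gamma}_{\bU}=\boldsymbol{\Gamma}_{\bU}^{T}\bM_{\bU}$ coming from $\bU\in\bar{\mathcal{U}}$; moreover, the paper writes out these computations only for the band-connected case and asserts that the technique extends. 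Your approach bypasses all of this: you use the hypothesis $\bU\in\mathcal{U}_{\text{fully}}$ directly to get a symmetric solution $\bB_0$, establish $V\cap(\mathcal{B}_{\mathcal{G}}\cap\{\bB=\bB^{T}\})=\{0\}$ by the clean row-sweep induction (which works uniformly for every graph satisfying \eqref{graph_condition}, not just the band case), and then observe that the dimensions $\binom{N_I-2L+1}{2}+L(2N_I-2L+1)=\binom{N_I+1}{2}$ force $V$ and $\mathcal{B}_{\mathcal{G}}\cap\{\bB=\bB^{T}\}$ to be complementary in the space of real symmetric matrices, so projecting $\bB_0$ along $V$ lands in $\mathcal{B}_{\mathcal{G}}$. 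What the paper's route buys is the slightly stronger inclusion $\bar{\mathcal{U}}\setminus\mathcal{N}_{\mathcal{U}}\subseteq\mathcal{U}_{\mathcal{G}}$ (with $\bar{\mathcal{U}}$ defined purely by quadratic constraints on $\bU$, without reference to any $\bB$); what your route buys is a far shorter, coordinate-free argument that treats all admissible architectures at once. One small wording fix: in the sweep step, the restricted $k\times 2L$ block of $\bM_{\bU}$ with $k\le 2L$ has full \emph{row} rank (any $k$ of its columns form a nonsingular $k\times k$ submatrix), which is what forces the row of $\bD$ to vanish.
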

{\color{black}The set $\mathcal{N}$ defined in Theorem \ref{theorem2} is a low-dimensional subset of $\mathcal{H}_{\text{fully}}$, or equivalently, $\mathcal{N}_\mathcal{U}$ is a low-dimensional subset of $\mathcal{U}_{\text{fully}}$.  To show this, let $N_\text{\text{sub}}$ be the total number of square submatrices of $\bM_\bU$,  and label these submatrices as $\bM_{\bU,1},\cdots,\bM_{\bU,N_{\text{sub}}}$. From the definition of $\mathcal{N}_{\mathcal{U}}$ in \eqref{Nu}, we have:
\begin{equation}\label{NUsubset}
{\mathcal{N}}_\mathcal{U}\subseteq\bigcup\limits_{i=1}^{N_{\text{sub}}}\{\bU\in{\mathcal{U}_{\text{fully}}}\mid \det\bM_{\bU,i}=0\}.
\end{equation}
Note that the set of all $n\times n$ singular matrices, i.e., $\{\mathbf{X}\in\R^{n\times n}\mid \det\mathbf{X}=0\},$
forms a low-dimensional subset of $\R^{n\times n}$. It follows that each set on the right-hand side of \eqref{NUsubset} is a low-dimensional subset of $\mathcal{U}_{\text{fully}}$, and so is their union. Hence, $\mathcal{N}_{\mathcal{U}}$ is a low-dimensional subset of $\mathcal{U}_{\text{fully}}$.

Theorem \ref{theorem1} is a direct corollary of Theorem \ref{theorem2} and the above claim; see below for the rigorous proof.}

\emph{Proof of Theorem \ref{theorem1}}: With the above notations, $P_{\text{fully}}$, i.e., the optimal value of \eqref{multiuser} achieved by fully-connected RIS, can be expressed as 
\begin{equation}\label{fully:problem}
\begin{aligned}
P_{\text{fully}}=\max_{\bU\in\mathcal{U}_{\hspace{0.03cm}\text{fully}},\,\bW\in\mathcal{W} }~&F(\bW, \bH_d+\bU^H\bG).
\end{aligned}
\end{equation}
Similarly, 
$$
\begin{aligned}
P_{\mathcal{G}}=\max_{\bU\in\mathcal{U}_{\mathcal{G}},\,\bW\in\mathcal{W} }~&F(\bW, \bH_d+\bU^H\bG).
\end{aligned}
$$
%\subsubsection{Proof of Theorem \ref{theorem1}}
To guarantee that $P_{\text{fully}}=P_{\mathcal{G}}$, it suffices that there exists an optimal solution to \eqref{fully:problem}, denoted by $(\bU_{\text{\normalfont fully}}, \bW_{\text{\normalfont fully}})$, such that  $\bU_{\text{fully}}\in\mathcal{U}_{\mathcal{G}}.$  According to Theorem \ref{theorem2}, this is equivalent to $\bU_{\text{fully}}\notin\mathcal{N}_{\mathcal{U}}$.
Since $\mathcal{N}_{\mathcal{U}}$ is a low-dimensional subspace of $\mathcal{U}_{\text{fully}}$ and the problem parameter inputs, i.e., $\{\bG, \bH_{d},\bH_{r}\}$, are random in practice, $\bU_{\text{fully}}\notin\mathcal{N}_{\mathcal{U}}$ holds almost surely, which proves Theorem \ref{theorem1}. %Therefore, according to Theorem \ref{theorem2}, $\bU_{\text{fully}}\in\mathcal{U}_{\mathcal{G}}$ with probability one, which proves Theorem \ref{theorem1}.

% set% %defined as %$\mathcal{N}=\mathcal{H}_{\text{\normalfont eff}}^{\text{\normalfont fully}}\cap (\bar{\mathcal{N}}_1\cup \bar{\mathcal{N}}_2)$ with 

% and 
% $$
%\begin{aligned}
%\bar{\mathcal{N}}_2\hspace{-0.05cm}=\hspace{-0.05cm}&\left\{\bH_d+\bH_r\bV\mid\text{there exists a rank deficient submatrix of }\right.\\&~~~~\left.\left[\begin{matrix}\RR(\bG+\bV)&\I(\bG+\bV)\end{matrix}\right]\in\R^{N_I\times 2N_T}\right\}.
%\end{aligned}$$
\subsection{Proof of Theorem \ref{theorem2}}\label{sec4b}
%In this subsection, we give the proof of Theorem \ref{theorem2}. We first consider the case that $L=\sum_{k=1}^K N_k$. By introducing auxiliary variables $\{\bU_k\}_{k=1}^K$ as in \eqref{def:uk} and let $\bU=[\bU_1,\bU_2,\dots,\bU_K]$, the effective channel set $\mathcal{H}_{\text{eff}}^{\text{fully}}$ can be expressed as 
In this subsection, we give the proof of Theorem \ref{theorem2}.  Obviously, $\mathcal{U}_{\mathcal{G}}\cup\mathcal{N}_{\mathcal{U}}\subseteq\mathcal{U}_{\hspace{0.03cm}\text{\normalfont fully}}.$ We next prove $\mathcal{U}_{\hspace{0.03cm}\text{\normalfont fully}}\subseteq\mathcal{U}_{\mathcal{G}}\cup\mathcal{N}_{\mathcal{U}}.$  
For this purpose, we define an auxiliary set as follows:
$$\bar{\mathcal{U}}\hspace{-0.05cm}=\hspace{-0.05cm}\{\bU\hspace{-0.05cm}\in\hspace{-0.05cm}\C^{N_I\times \sum_{k=1}^K\hspace{-0.05cm}N_k}\hspace{-0.1cm}\mid\hspace{-0.05cm}\mathbf{U}^{H}\mathbf{U}\hspace{-0.05cm}=\hspace{-0.05cm}\mathbf{H}_r\mathbf{H}_r^H,\hspace{-0.05cm}~\mathbf{U}^{T}\mathbf{H}_r^H\hspace{-0.05cm}=\hspace{-0.05cm}(\mathbf{U}^{T}\mathbf{H}_r^H)^T\}.$$
In the following, we prove that 
$$\mathcal{U}_{\hspace{0.03cm}\text{fully}}\subseteq\bar{\mathcal{U}}\subseteq\mathcal{U}_{\mathcal{G}}\cup{\mathcal{N}}_\mathcal{U}.$$
% Let $P_{\mathcal{G}}^*$  be the optimal value of problem \eqref{multiuser} with $\mathcal{G}$ defined in Theorem \ref{theorem1}, and recall that $P^*$ denotes the optimal value achieved by fully-connected RISs. The goal is to show that $P_{\mathcal{G}}^*=P^*$. Obviously, $P_{\mathcal{G}}^*\leq P^*$ as the feasible region for  fully-connected RISs is larger than that for the new architectures determined by $\mathcal{G}$.  The remaining task is to show that $P_{\mathcal{G}}^*\geq P^*$. For this purpose, we define  the following auxiliary problem:

\subsubsection{$\mathcal{U}_{\text{\hspace{0.03cm}\normalfont fully}}\subseteq\bar{\mathcal{U}}$}
Given     \(\mathbf{U}\in\mathcal{U}_{\hspace{0.03cm}\text{fully}}\), we next show that $\bU\in\bar{\mathcal{U}}$. Note that  $\bthe$ is unitary and symmetric due to the relationship between  $\bthe$ and $\bB$ in \eqref{theta_b} and the symmetry of $\bB$, i.e.,  $\bthe\bthe^{H}=\mathbf{I}$ and  $\bthe^T=\bthe$. Therefore, we have
$$\mathbf{U}^{H}\mathbf{U}=\mathbf{H}_r\bthe\bthe^{H}\mathbf{H}_r^{H}=\mathbf{H}_r\mathbf{H}_r^H$$
and

 $$
 \begin{aligned}\mathbf{U}^{T}\mathbf{H}_r^H&=\text{conj}(\bH_r)\text{conj}(\bthe)\mathbf{H}_r^H\\&=\text{conj}(\bH_r)\bthe^H\mathbf{H}_r^H=(\mathbf{U}^{T}\mathbf{H}_r^H)^T
 \end{aligned}$$
 where the second equality holds since $\text{conj}({\bthe})=\bthe^H$ as $\bthe=\bthe^T$. Hence, $\mathbf{U}\in\bar{\mathcal{U}}$.
\subsubsection{$\bar{\mathcal{U}}\subseteq\mathcal{U}_{\mathcal{G}}\cup{\mathcal{N}}_\mathcal{U}$}
It suffices to prove that for $\bU\in\bar{\mathcal{U}}\backslash\mathcal{N}_{\mathcal{U}}$, $\bU\in\mathcal{U}_{\mathcal{G}}$, i.e.,  there exists a solution $(\bthe,\bB)$ to the following system: 
\begin{equation*}
\left\{\begin{aligned}
&\bthe\bU=\bH_r^{H},\\
&\bthe=(\mathbf{I}+\mathsf{i}Z_0\mathbf{B})^{-1}(\mathbf{I}-\mathsf{i}Z_0\mathbf{B}),\\
&\bB=\bB^T,~\bB\in\mathcal{B}_{\mathcal{G}}.
\end{aligned}\right.
\end{equation*}
Since $\bB$ is real-valued,  the above equation can further be transformed into the real space as 
\begin{subequations}\label{constraint}
\begin{align}
&\bB\bM_{\bU}=\boldsymbol{\Gamma}_{\bU},\label{c1}\\
&\bB=\bB^T,~\bB\in\mathcal{B}_{\mathcal{G}}\label{c2},
\end{align}
\end{subequations}
where $\bM_{\bU}\in\R^{N_I\times 2\sum_{k=1}^KN_k}$ is given in \eqref{Mu} and 
\begin{equation}\label{gammaU}
\boldsymbol{\Gamma}_{\bU}=[\mathcal{R}(\bU-\bH_r^H)~~\I(\bU-\bH_r^H)]\in\R^{N_I\times 2\sum_{k=1}^KN_k}.
\end{equation} 
Note that for any permutation matrix $\bP\in\R^{N_I\times N_I}$,  $\bP^T\bB\bP$ is a solution to \eqref{c1} if and only if $\bB$ is a solution to \eqref{c1}, and thus we only need to consider graphs with adjacency matrices satisfying \eqref{graph_condition}. To simplify notation, we will focus on the case of band-connected RIS, where $\bB$ is given in \eqref{band} with $q=2L-1$.  The same proof technique applies to all other architectures. In the following proof, we  denote $\kappa=2L=2\sum_{k=1}^KN_k$ and $N=N_I$.

Similar to \cite{tree,wu}, we can reformulate \eqref{constraint} as an unconstrained linear system with respect to a new variable $\x$, which collects all the non-zero elements in the upper tridiagonal of $\bB$. In fact, $\x$ represents the actual degrees of freedom in $\bB$ constrained by \eqref{c2}.
Specifically,  define 
$$\begin{aligned}
\x:=[\bB(1,1:\kappa), ~&\bB(2, 2: \kappa+1),~ \dots, ~\bB(N-1, N-1:N),~ \bB(N,N)]^T,
\end{aligned}$$ and let $\{\ba_n^T\}_{n=1}^N$ and $\{\bb_n^T\}_{n=1}^N$ be the rows of $\bM_{\bU}$ and $\boldsymbol{\Gamma}_{\bU}$, respectively, i.e.,  
\begin{equation}\label{def:ab}
\bM_{\bU}^T=[\mathbf{a}_1,\ba_2,\dots,\ba_N],~~\boldsymbol{\Gamma}_{\bU}^T=[\bb_1,\bb_2,\dots,\bb_N].
\end{equation} Then \eqref{constraint} can be equivalently expressed  as the following linear equation \cite{wu}:
\begin{equation}\label{Ax-b}
\bA\x=\bb,
\end{equation} where 
 \begin{equation}\label{splitA}\bA=\left[\begin{matrix}\bA_{1,1}&\bA_{1,2}&\cdots&\bA_{1,N}\\\bA_{2,1}&\bA_{2,2}&\cdots&\bA_{2,N}\\\vdots&\vdots&\ddots&\vdots\\\bA_{N,1}&\bA_{N,2}&\cdots&\bA_{N,N}
 \end{matrix}\right]\in\R^{N\kappa\times \left(N\kappa-\frac{\kappa(\kappa-1)}{2}\right)}
 \end{equation} 
 with $\bA_{m,n}\in\R^{\kappa\times \min\{\kappa, N-n+1\}}$ given by
\begin{equation}\label{def:Ai}
\bA_{m,n}(:,q)=\left\{
\begin{aligned} 
\ba_{q+n-1},~~\,&\text{if }m=n;\\
\ba_n,~~~~~~~ &\text{if }m=q+n-1;\\
\mathbf{0},~~~~~~~~&\text{otherwise},
\end{aligned}\right.
\end{equation}
and 
\begin{equation}\label{def:b}
\bb=\text{vec}\left(\boldsymbol{\Gamma}_{\bU}^T\right)=[\bb_1^T,\bb_2^T,\dots,\bb_N^T]^T.
\end{equation}
For clarity, we give the structure of $\bA$ in  Fig. \ref{A:structure}.
\begin{figure*}
\includegraphics[scale=0.3]{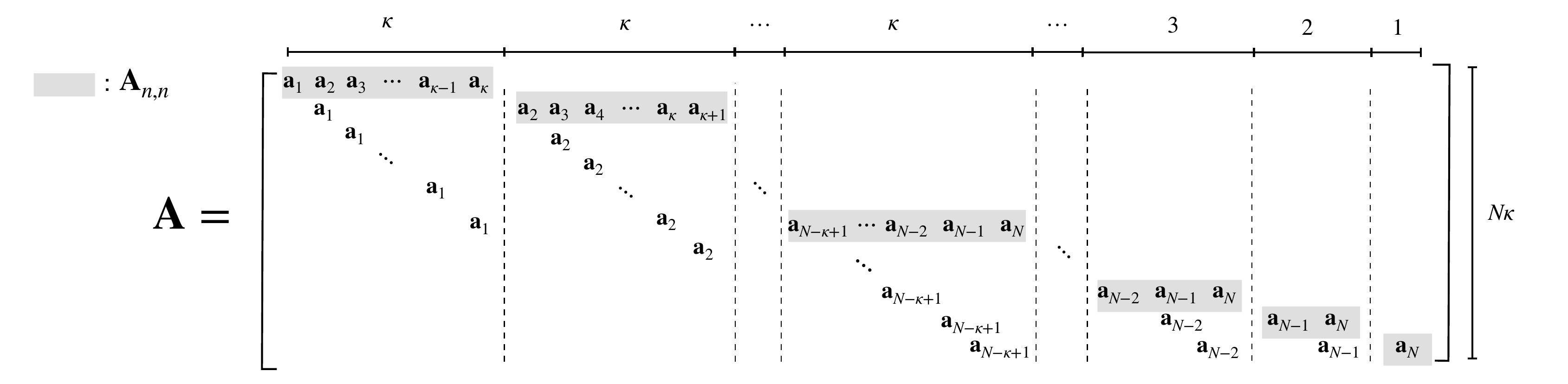}
\centering
\caption{The structure of $\bA$ in \eqref{Ax-b}, where the shaded blocks represent $\bA_{n,n}$, $n=1,2,\dots, N$.}
\label{A:structure}
\centering
\end{figure*}
The remaining task is to show that there exists a solution to the linear system $\bA\x=\bb$. According to the Rouche-Capelli theorem \cite{strang2022introduction}, it suffices to prove that $\text{\normalfont{rank}}(\bA)=\text{\normalfont{rank}}([\bA,\bb])$. We relegate the remaining proof to  Lemma \ref{lemmaA} in Appendix \ref{app:lemmaA} and Lemma \ref{lemmaAb} in Appendix \ref{app:lemmaAb}, where we prove that $$\text{\normalfont{rank}}(\bA)=\text{\normalfont{rank}}([\bA,\bb])=N\kappa-\frac{\kappa(\kappa-1)}{2}.$$ %This completes our proof for Theorem \ref{theorem2}.

Finally, we note that when $D=N_T$, we can introduce an auxiliary variable $\bV=\bthe\bG\in\C^{N_I\times N_T}$ to express the effective channel sets as 
\begin{equation*}
\begin{aligned}\mathcal{H}_{\text{fully}}=&\{\bH_d+\bH_r\bV\mid \bV\in\mathcal{V}_{\text{fully}}\}\end{aligned}
\end{equation*}
and 
\begin{equation*}
\begin{aligned}\mathcal{H}_{\mathcal{G}}=&\{\bH_d+\bH_r\bV\mid \bV\in\mathcal{V}_{\mathcal{G}}\},
\end{aligned}
\end{equation*}
where 
\begin{equation*}
\begin{aligned}
\mathcal{V}_{\hspace{0.03cm}\text{fully}}\hspace{-0.05cm}=\hspace{-0.05cm}\{&\bV\in\C^{N_I\times N_T}\mid\bV=\bthe\bG, %\\&~
~\bthe=\left(\mathbf{I}+jZ_0\mathbf{B}\right)^{-1}\hspace{-0.05cm}\left(\mathbf{I}-jZ_0\mathbf{B}\right),~ \bB=\bB^T \}
\end{aligned}
\end{equation*}
and 
\begin{equation*}
\begin{aligned}
\mathcal{V}_{\mathcal{G}}=&\{\bV\in\C^{N_I\times N_T}\mid\bV=\bthe\bG, ~%\\&
\bthe\hspace{-0.05cm}=\hspace{-0.05cm}\left(\mathbf{I}\hspace{-0.05cm}+\hspace{-0.05cm}jZ_0\mathbf{B}\right)^{-1}\hspace{-0.05cm}\left(\mathbf{I}\hspace{-0.05cm}-\hspace{-0.05cm}jZ_0\mathbf{B}\right),~ \hspace{-0.05cm}\bB=\bB^T, \bB\in\mathcal{B}_{\mathcal{G}}\}.
\end{aligned}
\end{equation*}
Following the same ideas as previous discussions, we can establish a relationship between  $\mathcal{V}_{\text{fully}}$ and $\mathcal{V}_{\mathcal{G}}$ similar to that between $\mathcal{U}_{\text{fully}}$ and $\mathcal{U}_{\mathcal{G}}$  in Theorem \ref{theorem2}, which immediately yields  Theorem \ref{theorem1}. 

\section{Simulation Results}\label{sec:5}
In this section, we present simulation results to validate our theoretical results and demonstrate the effectiveness of the  proposed architectures. 

We consider a BD-RIS-aided multiuser MIMO system, where the distances between the transmitter and the BD-RIS, and between the BD-RIS and the users, are set to $d_{IT}=50$\,m and $d_{RI}=2.5$\,m, respectively.  For simplicity, we assume that the direct channel between the transmitter and the users is blocked  and each user is equipped with a single antenna, i.e., $N_k=1,~ k=1,2,\dots, K$, as in \cite{wu}. In addition, we set $K\leq N_T\leq N_I/2$, and hence $L=D=K$.
For the channels from  the transmitter to the BD-RIS and from the BD-RIS to the receivers, we  employ a distance-dependent pathloss model $L(d)=L_0 d^{-\alpha}$ to account for large-scale fading, where $L_0$ is the reference path loss at distance $d_0=1$\,m,  $d$ is the distance, and $\alpha$ is the path loss exponent. In our simulation, we set $L_0=-30$\,dB and $\alpha=2.2$. Regarding the small-scale fading, {\color{black} unless otherwise stated, }we adopt the Rician fading model, with the Rician factor set to $\kappa=2$\,dB. The noise power at each user is $\sigma^2=-80$\,dBm. 
To evaluate the performance of our proposed architectures, we consider two specific metrics: sum-rate maximization and transmit power minimization under QoS constraints. The power budget for sum-rate maximization is $P=10$\,dBm, and the QoS constraint in  transmit power minimization is $\text{SINR}_k\geq\gamma_k=\gamma=15$\, dB. 
The corresponding optimization problems are solved using the algorithms proposed in our recent work \cite{wu}, {\color{black} which are applicable to arbitrary BD-RIS architectures.}

In Fig. \ref{performance}, we validate our theoretical results in Theorem \ref{theorem1}. Specifically, we compare the performance of fully-connected RIS with that of the two optimal architectures introduced in 
%We first validate our theoretical results  in Theorem \ref{theorem1}. We consider the two optimal architectures introduced in 
Section \ref{sec:32}, i.e.,  band-connected RIS with band width $q=2L-1$  and stem-connected RIS with stem width $q=2L-1$. For reference,  the single-connected RIS and tree-connected RIS (tridiagonal RIS and arrowhead RIS) are also included. As shown in the figure,  the band/stem-connected RIS with band/stem width $q=2L-1$ exhibits exactly the same performance as fully-connected RIS,  which verifies Theorem \ref{theorem1}. In contrast, the tree- and single-connected RIS suffer from a severe performance degradation compared to fully-connected RIS.  It is interesting to note that, while both the tridiagonal RIS and arrowhead RIS are optimal in single-user MISO systems, the tridiagonal RIS demonstrates slightly better performance in the multiuser scenario.
\begin{figure}[t]
\subfigure[Sum-rate maximization.]{
\includegraphics[width=0.45\columnwidth]{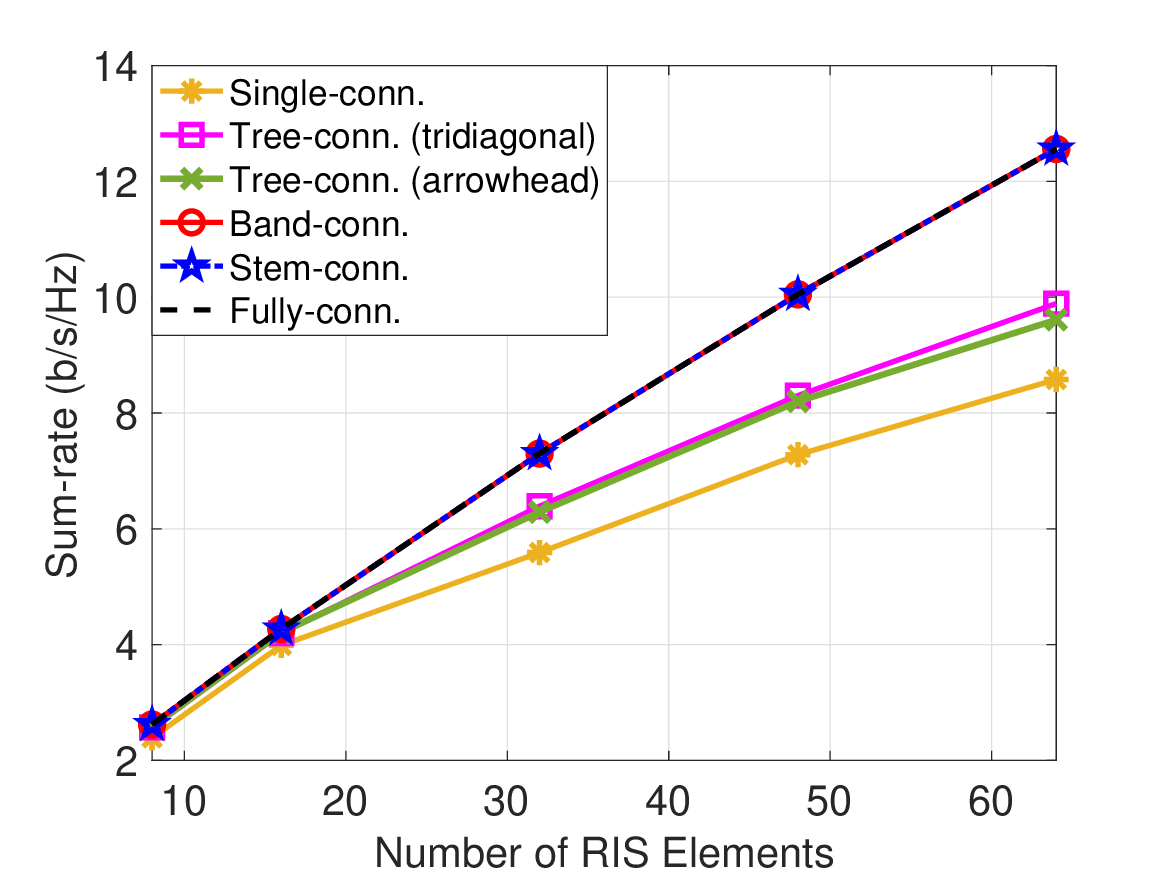}}
\subfigure[Transmit power minimization.]{
\includegraphics[width=0.45\columnwidth]{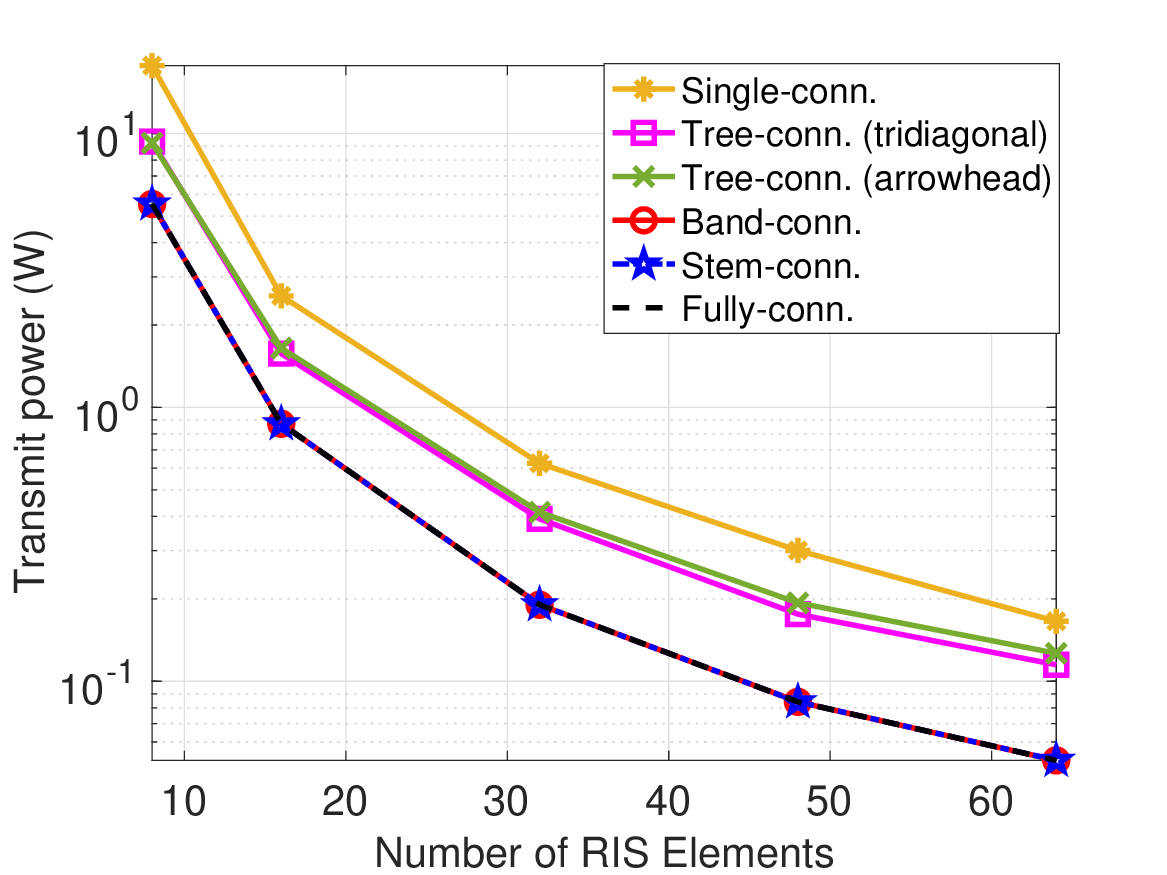}}
\centering
\caption{Performance of different BD-RIS architectures versus number of RIS elements ($N_T=4, K=4$).}
\label{performance}
\end{figure}

In Fig. \ref{complexity}, we further illustrate the circuit complexity of the aforementioned architectures in terms of the required number of admittances.  Both the band-connected and stem-connected RIS with band and stem width $q=2L-1$ belong to the architecture class described in Theorem \ref{theorem1} and therefore exhibit the same circuit complexity.  These architectures, along with all others specified by Theorem \ref{theorem1}, achieve a significant reduction in circuit complexity compared to fully-connected RIS, especially when the number of RIS elements is large.
% As shown in the figure, the proposed architectures in Theorem 1 achieve a substantial reduction in circuit complexity compared to fully-connected RIS, particularly when the number of RIS elements is large, while maintaining equivalent performance.
 For example, the number of admittances is reduced by more than fourfold when $N_I=64$. On the other hand, compared to single- and tree-connected RIS, the proposed architectures exhibit a slightly higher circuit complexity but achieve significantly improved performance. 
 
\begin{figure}[t]
\includegraphics[width=0.45\columnwidth]{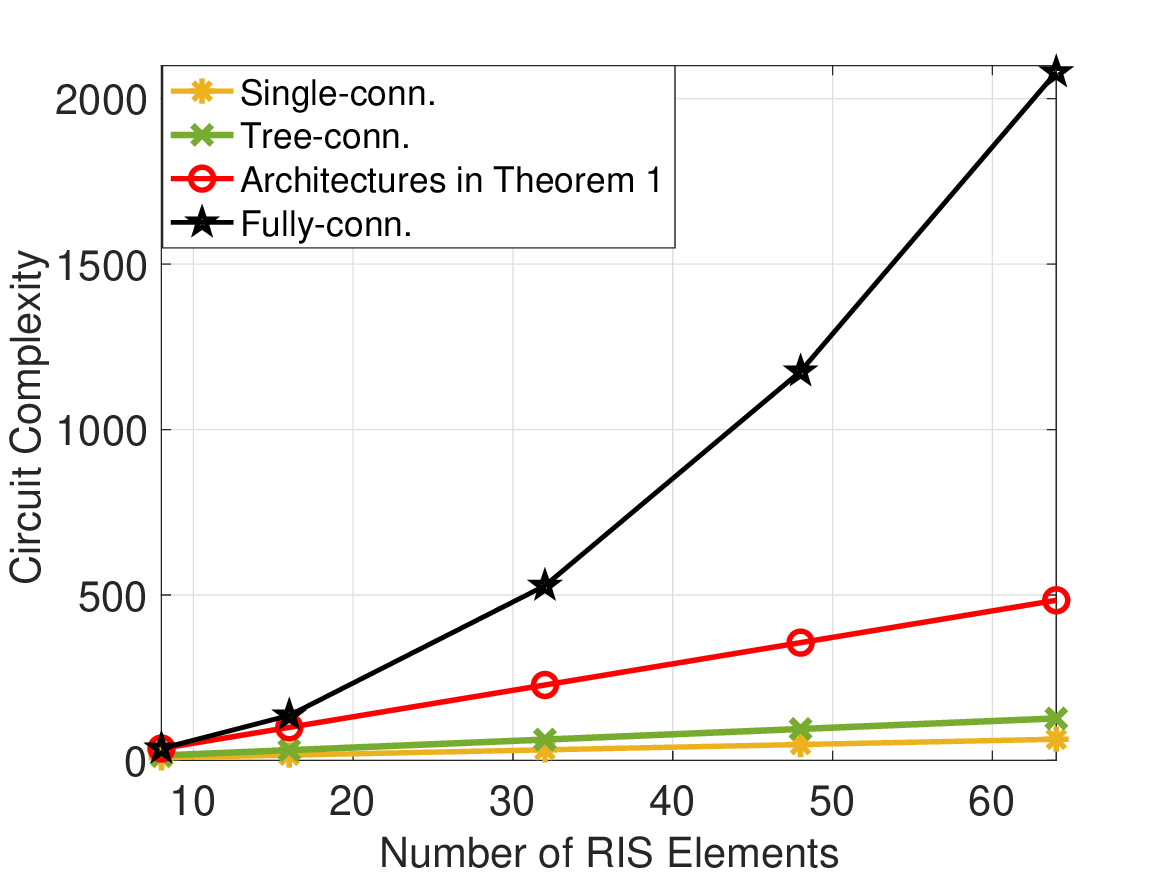}
\centering
\caption{Circuit complexity (i.e., the number of admittances) for different BD-RIS architectures versus number of RIS elements ($N_T=4, K=4$).}
\label{complexity}
\end{figure}
\begin{figure}[t]
\subfigure[Sum-rate maximization.]{
\includegraphics[width=0.45\columnwidth]{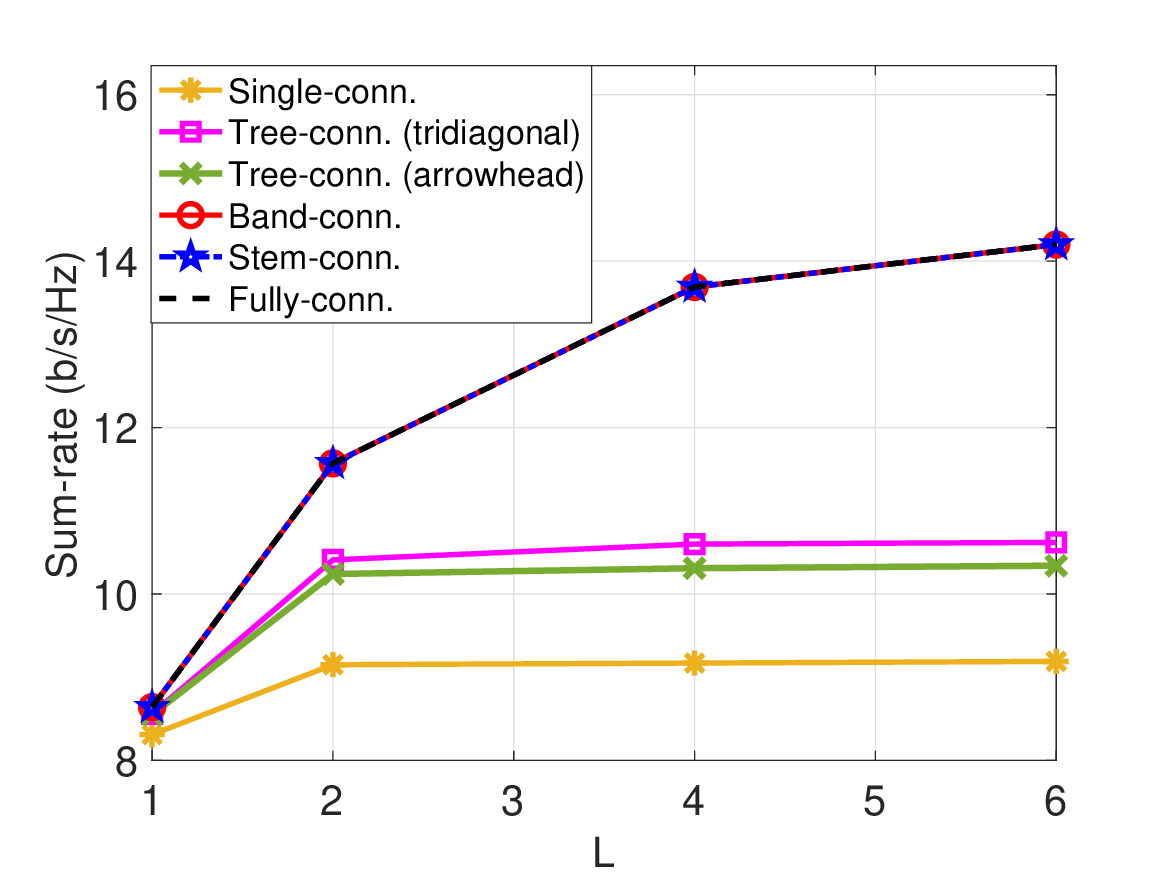}}
\subfigure[Transmit power minimization.]{
\includegraphics[width=0.45\columnwidth]{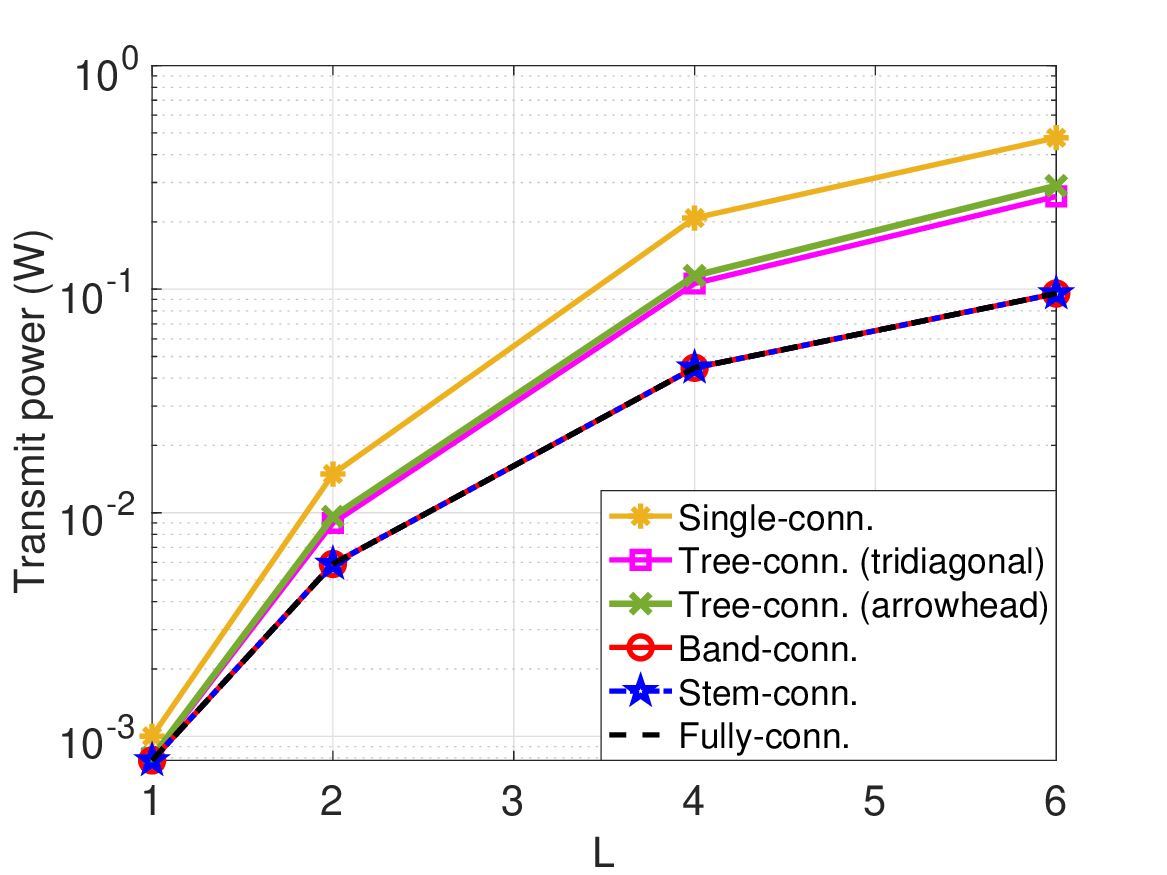}}
\centering
\caption{Performance of different BD-RIS architectures versus $L=K$ ($N_I=64, N_T=6$).}
\label{performance2}
\end{figure}
\begin{figure}[t]
\includegraphics[width=0.5\columnwidth]{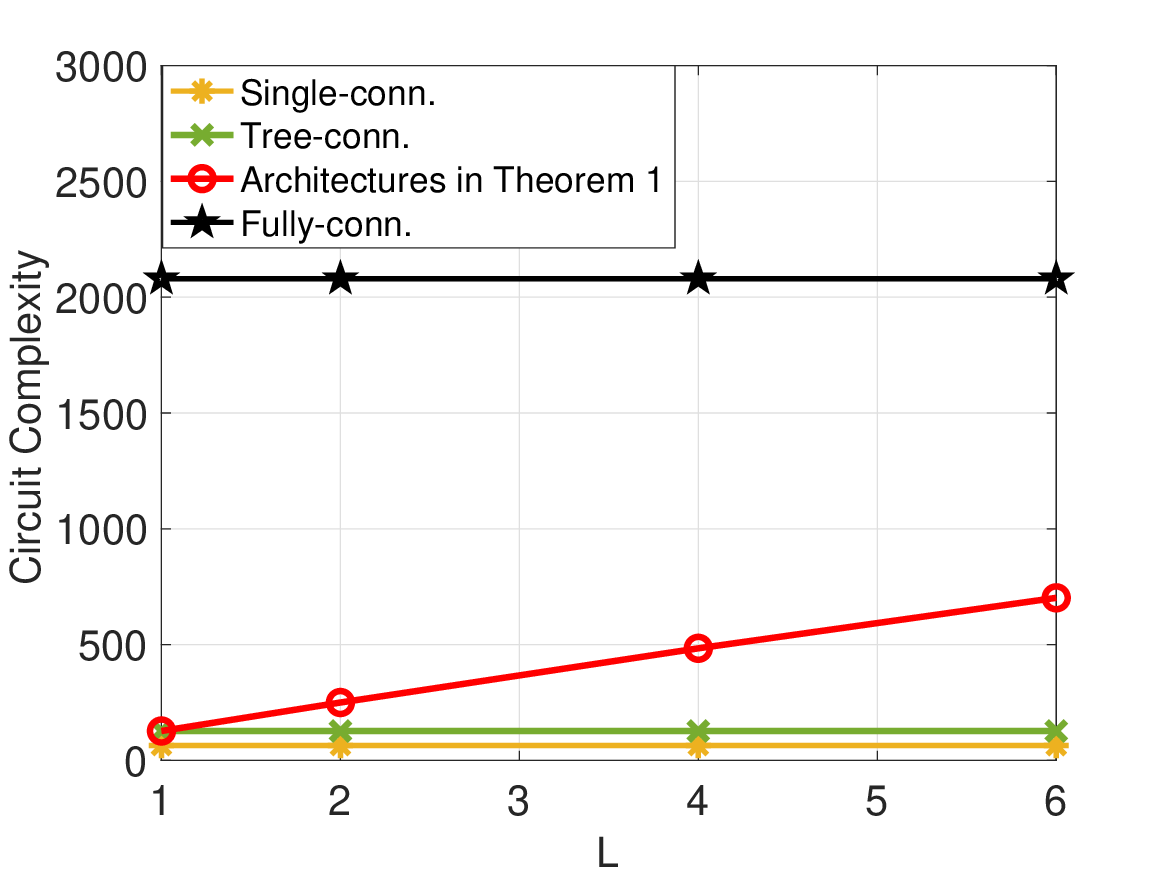}
\centering
\caption{Circuit complexity (i.e., the number of admittances) for different BD-RIS architectures versus $L=K$ ($N_I=64, N_T=6$).}
\label{complexity2}
\end{figure}

In Figs. \ref{performance2} and \ref{complexity2},  we examine how the value of $L$ affects the performance and circuit complexity of different BD-RIS architectures. We fix the numbers of RIS elements and transmit antennas as $N_I=64$ and $N_T=6$, respectively,  and increase the number of users from $K=1$ to $K=6$, such that $L=K$.  
%we examine how the value of $L$, which is, the number of users $K$ in our considered scenario, 
%number of users $K$ 
%affects the performance and circuit complexity of different BD-RIS architectures, where the numbers of RIS elements and transmit antennas are fixed. 
As can be observed, the performance superiority of band-, stem-, and fully-connected RIS over tree- and single-connected RIS is more prominent as $L$ increases. To achieve full flexibility of BD-RIS, however,  the circuit complexity of band- and stem-connected RIS (as well as the architecture class given in Theorem \ref{theorem1}) also grows with $L$, whereas the circuit complexity for all the other architectures is fixed when the number of RIS elements $N_I$ is fixed. Nevertheless, the circuit complexity of our proposed architectures remains significantly lower than that of fully-connected RIS.

{\color{black}Another interesting observation from Fig. \ref{performance2} (a)  is that the sum-rate performance of single-connected RIS quickly saturates as $L$ (i.e., $K$) increases, indicating that it is unable to exploit the presence of multiple users and boost the sum-rate performance. To investigate the underlying reason, we replace the Rician fading channel model in Fig. \ref{performance2} (a) with a Rayleigh fading model  in Fig. \ref{sumrate_K_rayleigh}. Comparing Fig. \ref{performance2} (a) and Fig. \ref{sumrate_K_rayleigh},  we observe that under Rayleigh fading,  the performance of all architectures saturates more slowly than that under Rician fading.  %This occurs because under Rayleigh fading, the system benefits from multiuser diversity, hence sum-rate increases %log-logarithmically 
%with $K$. As the channel gets more Rician, multiuser diversity decreases, and the sum-rate saturates more quickly.  
This is because, compared to Rician fading, Rayleigh fading exhibits richer scattering and greater channel variability among the users, which reduces multiuser interference and enhances multiuser diversity. 
In both fading models, the sum-rate of single-connected RIS grows more slowly with $K$ than in BD-RIS due to the lack of interconnections. This observation demonstrates the crucial role of BD-RIS in efficiently serving an increasing number of users.}

\begin{figure}[t]
\includegraphics[width=0.45\columnwidth]{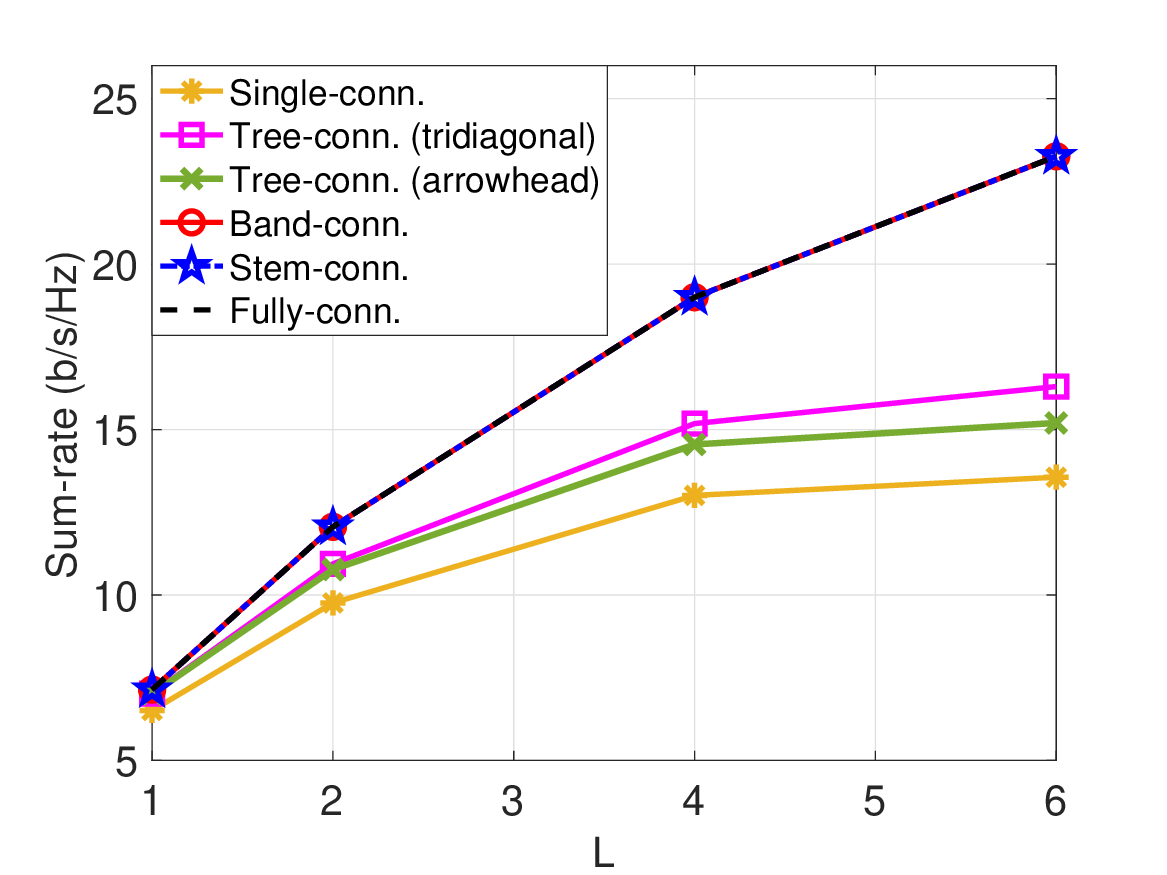}
\centering
\caption{Sum-rate of different BD-RIS architectures versus $L=K$ for Rayleigh fading channel ($N_I=64, N_T=6$).}
\label{sumrate_K_rayleigh}
\end{figure}

In Fig. \ref{pareto}, we compare the performance of the band- and stem-connected architectures proposed in Section \ref{sec:32} with that of the classical group-connected architecture introduced in Section \ref{existing_arch} by plotting their performance-complexity Pareto frontiers, i.e., the achievable sum-rate or transmit power versus circuit complexity. For the group-connected architecture, the number of RIS elements in each group is identical, denoted as $G_s:=N_I/G$. The curves are obtained by gradually increasing the corresponding group size $G_s$ and band/stem width $q$.
From Fig. \ref{pareto}, we can make the following observations.  First, as illustrated earlier, the band- and stem-connected RIS with band and stem width $q=2L-1$, which satisfy  the conditions in Theorem \ref{theorem1}, are able to achieve the same performance as fully-connected RIS with significantly lower circuit complexity. Second, the proposed band- and stem-connected architectures consistently outperform the classical group-connected architecture, offering higher sum rate and lower transmit power with the same circuit complexity.

 Third, among the two proposed architectures, band-connected RIS achieves slightly better performance than stem-connected RIS at low circuit complexity.

\begin{figure}[t]
\subfigure[Sum-rate maximization.]{
\includegraphics[width=0.45\columnwidth]{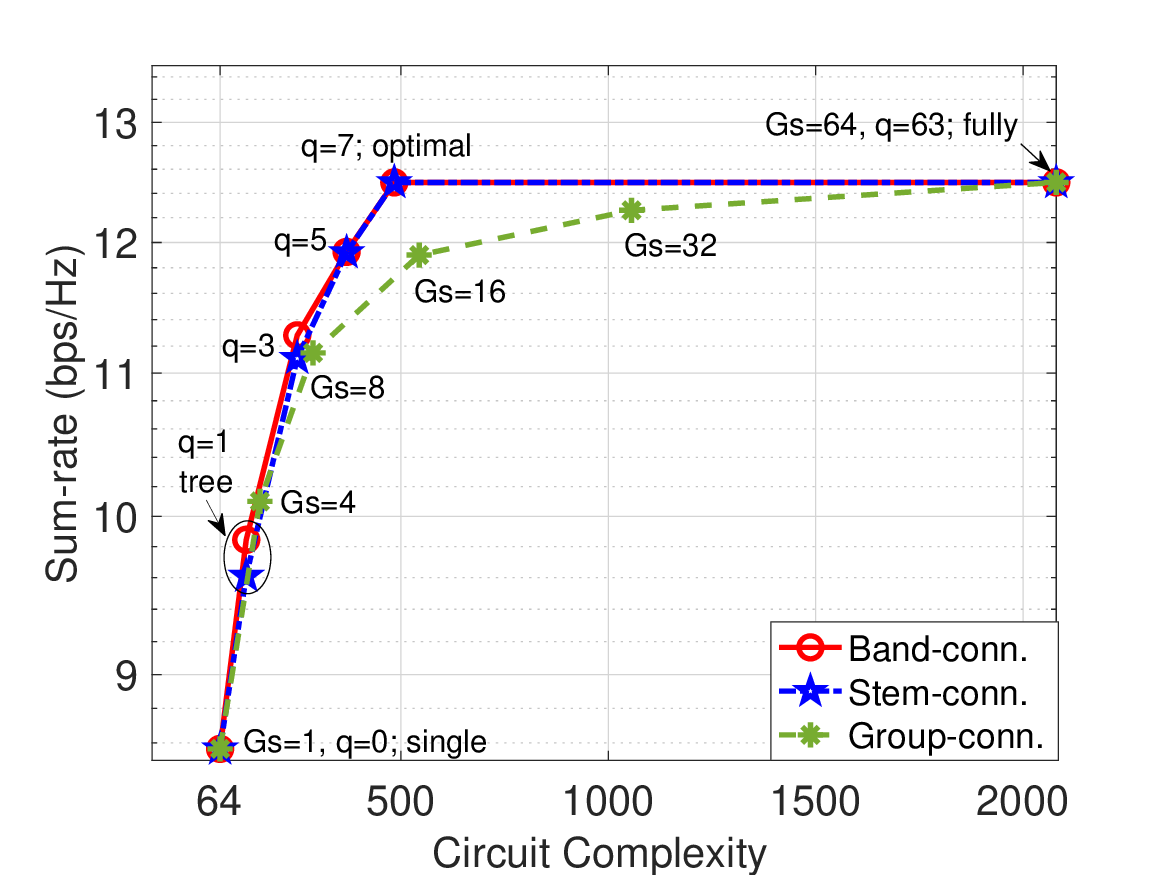}}
\subfigure[Transmit power minimization.]{
\includegraphics[width=0.45\columnwidth]{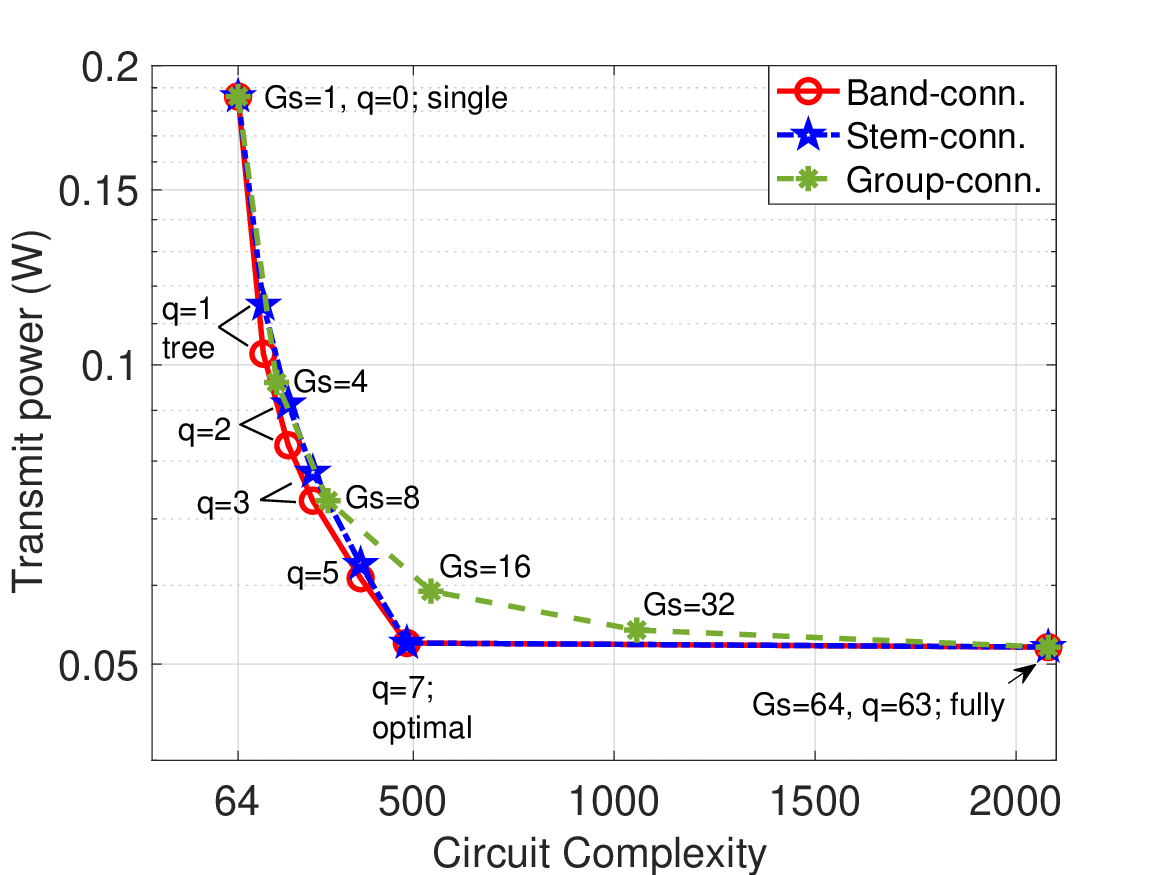}}
\centering
\caption{Pareto frontier between performance and circuit complexity achieved by group-, stem-, and band-connected RIS, with the corresponding group size $G_s$ and stem/band width $q$ labeled on the right and left sides of the curves, respectively ($N_I=64, N_T=4, K=4$).}
\label{pareto}
\end{figure}

To conclude, the simulation results demonstrate that our proposed BD-RIS architectures achieve a better trade-off between performance and circuit complexity compared to existing  architectures. Specifically, in terms of the optimal performance, the architectures specified by Theorem \ref{theorem1}, and particularly the band- and stem-connected RIS with band and stem width $q=2L-1$, match the performance of fully-connected RIS while significantly simplifying circuit complexity. Furthermore,  the band- and stem-connected RIS achieve better performance-complexity Pareto frontiers than the classical group-connected architecture, offering superior performance in general cases.
%It is interesting to note that, although the tridiagonal RIS and arrowhead RIS are both optimal in single-user MISO systems, the former performs slightly better in the multi-user scenario. 
% in terms of both sum-rate maximization and transmit power minimization, which verifies Theorem \ref{theorem1}.
%The performance of these two architectures is compared to that of a fully-connected RIS in Fig. 1, in which we also include the single-connected RIS and tree-connected RIS (tridiagonal RIS and arrowhead RIS) for reference.

%Specifically,  we compare the performance of these two architectures with that of a fully-connected RIS in Fig. \ref{}. The single-connected RIS and tree-connected RIS (tridiagonal RIS and arrowhead RIS) are also included for comparison. 

\section{Conclusion}\label{sec:6}
In this paper, we investigate the architectural design of BD-RIS. We model BD-RIS architectures using graph theory and characterize their topological connectivity through the adjacency matrix of the graph. As a key theoretical contribution (i.e., Theorem \ref{theorem1}), we propose new BD-RIS architectures that match the performance of fully-connected RIS with  significantly reduced circuit complexity. Our result is general in the following sense:  First, it provides a general condition on the adjacency matrix that encompasses a class of BD-RIS architectures. Second, it  applies to general multiuser MIMO systems, which includes the result in \cite{tree} for single-user MISO systems as a special case. Third, the optimality of the proposed architectures holds regardless of the specific performance metric. We further introduce two novel BD-RIS architectures, namely band-connected RIS and stem-connected RIS,  and show that they belong to the optimal architecture class under certain conditions.  Numerical results validate our theoretical result and demonstrate the superiority of the band- and stem-connected architectures over the classical group-connected architecture as well as their significant performance gains over conventional RIS.% they achieve better performance under the same circuit complexity. 

One limitation of our current result is that it only provides a sufficient condition for achieving the optimal performance. An important  future work is to identify the architecture that achieves the optimal performance with the least circuit complexity in multiuser MIMO systems. 

\appendices 
\section{Proof of Remark \ref{remark:tree}}\label{app:tree}
In this appendix, we give the proof of Remark \ref{remark:tree}. The following lemma presents two properties of tree graphs that will be used in our proof.
 \begin{lemma}[{\cite{bondy2008graph}}]
The following results hold:
\begin{itemize}\label{lemma:tree}
\item[(a)] A graph $\mathcal{G}$ is a tree if and only if $\mathcal{G}$ is connected and has $N-1$ vertices, where $N$ is the number of vertices in $\mathcal{G}$.
\item[(b)] Any tree with $N>1$ vertices has at least two vertices of degree one, which are referred to as the leaves of the tree.  \end{itemize}
\end{lemma}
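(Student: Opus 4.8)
The plan is to treat Lemma~\ref{lemma:tree} as the pair of classical tree characterizations it is, proving part~(b) first by a self-contained longest-path argument and then bootstrapping part~(a) from it; this ordering is chosen specifically to avoid the circular dependence that arises if one instead tries to count edges before knowing that a leaf exists. (We read the edge count in part~(a) as $N-1$ \emph{edges}, the only sensible interpretation for a connected graph on $N$ vertices.)

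For part~(b), I would fix a path $u_0 u_1 \cdots u_\ell$ of maximum length in the tree $\mathcal{G}$ and argue that both endpoints are leaves. If the endpoint $u_0$ had a neighbor $w \neq u_1$, then either $w \notin \{u_0,\dots,u_\ell\}$, in which case $w u_0 u_1 \cdots u_\ell$ is a strictly longer path, contradicting maximality; or $w = u_j$ for some $j \geq 2$, in which case $u_0 u_1 \cdots u_j u_0$ is a cycle, contradicting acyclicity. Hence $u_0$ has degree one, and symmetrically so does $u_\ell$. Since $N > 1$ and $\mathcal{G}$ is connected, a longest path has length $\ell \geq 1$, so $u_0 \neq u_\ell$ and the two leaves are distinct. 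This step uses only the definition of a tree (connected and acyclic) together with finiteness, so it is logically prior to part~(a).

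For the forward direction of part~(a) (a tree has $N-1$ edges), I would induct on $N$. The base case $N = 1$ is immediate. For $N > 1$, part~(b) supplies a leaf $v$; deleting $v$ together with its unique incident edge leaves a graph on $N-1$ vertices that is still connected and acyclic---removing a degree-one vertex can neither disconnect the graph nor create or destroy a cycle---hence a tree with $N-2$ edges by the induction hypothesis, so $\mathcal{G}$ has $N-1$ edges. For the converse (connected with $N-1$ edges implies acyclic, hence a tree), I would first record the standard lower bound that \emph{every} connected graph on $N$ vertices has at least $N-1$ edges: any connected graph admits a spanning tree, obtained by repeatedly deleting an edge lying on a cycle (which preserves connectivity), and by the forward direction that spanning tree already contributes $N-1$ edges. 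Then, if a connected $\mathcal{G}$ with exactly $N-1$ edges contained a cycle, deleting one cycle edge would keep it connected while dropping it to $N-2$ edges, contradicting the lower bound; thus $\mathcal{G}$ is acyclic and therefore a tree.

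The only genuine subtlety---hence the main obstacle---is bookkeeping the logical order so that nothing is assumed before it is proved: the edge-counting identity of part~(a) must not be invoked when establishing the existence of a leaf, which is why the longest-path proof of part~(b) is carried out first and independently, and why the converse of part~(a) is justified through the spanning-tree lower bound (which legitimately reuses the already-established forward direction) rather than through a second, circular appeal to the same identity. All remaining steps are routine, consistent with the fact that this lemma is quoted from \cite{bondy2008graph}.
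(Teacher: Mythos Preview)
Your proof is correct and your reading of the obvious typo (``$N-1$ vertices'' should be ``$N-1$ edges'') is the right call. However, there is nothing to compare against: the paper does not prove Lemma~\ref{lemma:tree} at all---it is stated as a known result with a citation to \cite{bondy2008graph} and then used as a black box in the proof of Remark~\ref{remark:tree}. Your longest-path argument for (b) followed by induction for (a) is one of the standard textbook routes, so you have supplied what the paper deliberately omits.
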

 Next we prove Remark \ref{remark:tree}. Note that when $K=N_k=1$, \eqref{graph_condition} reduces to 
\begin{equation}\label{graph_condition2}
\sum_{m=n+1}^{N_I}\bar{A}_{\mathcal{G}}(n,m)=1,~~1\leq n\leq N_I-1.
\end{equation}
The goal is to show that $\mathcal{G}$ is a tree if and only if its adjacency matrix satisfies \eqref{adjacencymatrix} and \eqref{graph_condition2}. This can be proved by induction. Obviously, when $N_I=2$, $\mathcal{G}$ is a tree if and only if its adjacency matrix has the following form $$\bA_{\mathcal{G}}=\left[\begin{matrix} 0&1\\1&0\end{matrix}\right],$$
i.e., \eqref{adjacencymatrix} and \eqref{graph_condition2} are satisfied with $\mathbf{P}=\mathbf{I}_2$. Now suppose that the assertion holds for $N_I=2,3,\dots, i-1$.  We next prove the assertion for $N_I=i$. 

Given  a tree $\mathcal{G}$ with $i>2$ vertices, we next show that its adjacency matrix satisfies \eqref{adjacencymatrix} and \eqref{graph_condition2}. Let $V_n$ be a vertex with degree one, and denote the corresponding edge as $e_{n,m}=(V_n,V_m)$. Such  a vertex $V_n$ must exist according to Lemma \ref{lemma:tree} (b). %since every non-trivial tree has at least two vertices of degree $1$, known as leaf nodes.
  Let $\mathcal{G}_{\backslash n} $ be the subgraph of $\mathcal{G}$ obtained by removing $V_n$. Then, $\mathcal{G}_{\backslash n}$ is a connected graph with $i-1$ vertices and $i-2$ edges, which is a tree according to Lemma \ref{lemma:tree} (a).  By the inductive hypothesis, there exists a permutation matrix $\mathbf{P}_{\backslash n}\in\R^{(i-1)\times (i-1)}$ such that the adjacency matrix of $\mathcal{G}_{\backslash n}$, denoted $\bA_{\mathcal{G}_{\backslash n}}$,  can be expressed as 
$$\bA_{\mathcal{G}_{\backslash n}}=\mathbf{P}_{\backslash n}\bar{\bA}_{\mathcal{G}_{\backslash n}}\mathbf{P}_{\backslash n}^T,$$
where $\bar{\bA}_{\mathcal{G}_{\backslash n}}\in\R^{(i-1)\times (i-1)}$ satisfies \eqref{graph_condition2} with $N_I=i-1$. Let $\bP_{1,n}\in\R^{i\times i}$ be the permutation matrix obtained by swapping the $1$-st and $n$-th columns of $\mathbf{I}_i$. The adjacency matrix of $\mathcal{G}$ can then be expressed as
\begin{equation}\label{A_G}
\begin{aligned}
\mathbf{A}_{\mathcal{G}}&=\mathbf{P}_{1,n}\left[\begin{matrix}0&\mathbf{e}_{l}^T\\\mathbf{e}_{l}&\bA_{\mathcal{G}_{\backslash n}}
\end{matrix}\right]\mathbf{P}_{1,n}^T\\
&=\underbrace{\mathbf{P}_{1,n}\left[\begin{matrix}1&0\\0&\mathbf{P}_{\backslash n}\end{matrix}\right]}_{\bP}\underbrace{\left[\begin{matrix}0&(\mathbf{P}_{\backslash n}\mathbf{e}_{l})^T\\\mathbf{P}_{\backslash n}\mathbf{e}_{l}&\bar{\bA}_{\mathcal{G}_{\backslash n}}
\end{matrix}\right]}_{\bar{\bA}_{\mathcal{G}}}\underbrace{\left[\begin{matrix}1&0\\0&\mathbf{P}_{\backslash n}\end{matrix}\right]^T\mathbf{P}_{1,n}^T}_{\bP^T},
\end{aligned}
\end{equation}
where $\mathbf{e}_l\in\R^{i-1}$, with $l=m-1$ if $m>n$, and $l=m$ if $m<n$. 
Clearly, $\bar{\bA}_{\mathcal{G}}$ satisfies \eqref{adjacencymatrix} and  \eqref{graph_condition2} with $N=i$. 

On the other hand, let $\mathcal{G}$ be a graph whose adjacency matrix satisfies \eqref{adjacencymatrix} and  \eqref{graph_condition2} with $N=i$. We next show that $\mathcal{G}$ is a tree.  It suffices to prove this assertion for $\bP=\mathbf{I}_i$, as the condition in \eqref{adjacencymatrix} defines a class of isomorphic graphs. 
%It suffices to prove that the graph induced by $\bar{\bA}$ (which we denote by $\bar{\mathcal{G}}$)  is a tree, as $\mathcal{G}$ and $\bar{\mathcal{G}}$ are isomorphic.
   To show this, let ${\mathcal{G}}_{\backslash 1}$ be the subgraph of ${\mathcal{G}}$ obtained by deleting its first vertex ${V}_1$. Obviously, the adjacency matrix of ${\mathcal{G}}_{\backslash 1}$ satisfies \eqref{adjacencymatrix} and  \eqref{graph_condition2} with $\bP=\mathbf{I}_{i-1}$ and $N=i-1$, and hence ${\mathcal{G}}_{\backslash 1}$ is a tree  by the inductive hypothesis.  Note that $V_1$ in ${\mathcal{G}}$ has degree one since 
$$\sum_{m=2}^iA_{\mathcal{G}}(1,m)=1. $$ This, together with the fact that ${\mathcal{G}}_{\backslash 1}$ is a tree, implies that ${\mathcal{G}}$ is connected and has $i-1$ edges. Therefore, ${\mathcal{G}}$ is a tree by Lemma \ref{lemma:tree} (a), which completes the proof.
\section{Rank of matrix $\bA$ given in \eqref{splitA}.}\label{app:lemmaA}
In this appendix, we investigate the rank of $\bA$ given in \eqref{splitA}. %Specifically,  we prove the following lemma. 
\begin{lemma}\label{lemmaA}
$\text{\normalfont{rank}}(\bA)=N\kappa-\frac{\kappa(\kappa-1)}{2}.$
\end{lemma}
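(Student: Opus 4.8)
The plan is to show that the tall matrix $\bA$ in \eqref{splitA} has full column rank; since it has exactly $N\kappa-\frac{\kappa(\kappa-1)}{2}$ columns, this is precisely the claim. Rather than working with $\bA$ directly, I would argue on the object it encodes: by the equivalence between \eqref{constraint} and \eqref{Ax-b}, a vector $\x\in\ker\bA$ is exactly the list of free entries of a symmetric matrix $\bB$ with $B_{n,m}=0$ whenever $|n-m|\ge\kappa$ and $\bB\,\bM_{\bU}=\mathbf{0}$. So it suffices to prove that the only such $\bB$ is $\bB=\mathbf{0}$. Throughout I may assume $\bU\notin\mathcal{N}_{\mathcal{U}}$, which is the case relevant to Section~\ref{sec4b}; by the definition of $\mathcal{N}_{\mathcal{U}}$ in \eqref{Nu} this means that every square submatrix of $\bM_{\bU}\in\R^{N\times\kappa}$ of order at most $\kappa$ is nonsingular, and in particular that any $j\le\kappa$ rows of $\bM_{\bU}$ are linearly independent as vectors in $\R^{\kappa}$. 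This genericity fact is the only property of $\bM_{\bU}$ I would use.

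Writing $\bM_{\bU}^{T}=[\ba_1,\ba_2,\dots,\ba_N]$ as in \eqref{def:ab}, the $m$-th row of $\bB\,\bM_{\bU}=\mathbf{0}$ reads $\sum_{n:\,|n-m|\le\kappa-1}B_{m,n}\ba_n=\mathbf{0}$ for $m=1,\dots,N$. I would process these $N$ vector equations in the order $m=N,N-1,\dots,1$, maintaining the invariant that when equation $m$ is reached one already has $B_{m,n}=0$ for every $n>m$. The invariant is vacuous at $m=N$, and for $m<N$ it holds because each such $B_{m,n}$ (with $n-m\le\kappa-1$) equals $B_{n,m}$, a free entry that was forced to zero while handling the already-processed equation $n$. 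Under the invariant, equation $m$ collapses to $\sum_{n=\max\{1,m-\kappa+1\}}^{m}B_{m,n}\ba_n=\mathbf{0}$, a linear dependence among the $\min\{\kappa,m\}$ distinct, consecutively indexed vectors $\ba_{\max\{1,m-\kappa+1\}},\dots,\ba_m$; by the genericity fact these are linearly independent, so all coefficients vanish, i.e. $B_{m,n}=0$ for $n=\max\{1,m-\kappa+1\},\dots,m$. Running $m$ from $N$ down to $1$ zeroes, for each $j$, every free entry $B_{i,j}$ of $\bB$ with $i\le j$; hence $\bB=\mathbf{0}$, so $\ker\bA=\{\mathbf{0}\}$ and $\operatorname{rank}(\bA)=N\kappa-\frac{\kappa(\kappa-1)}{2}$.

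The step I expect to be the main obstacle — and the one I would write out in full — is the bookkeeping of the invariant: one must check precisely which variables are killed when equation $n$ is processed and confirm this is exactly the set $\{B_{m,n}: n>m\ge n-\kappa+1\}$, so that equation $m$ is left with a relation among only the consecutive vectors $\ba_{\max\{1,m-\kappa+1\}},\dots,\ba_m$; this is where the symmetry $\bB=\bB^{T}$ and the bandwidth bound $\kappa-1$ are both used. The boundary equations $m<\kappa$ also deserve a line, since there only $m<\kappa$ vectors appear and one invokes the nonsingularity of an order-$m$ submatrix of $\bM_{\bU}$ rather than an order-$\kappa$ one. An equivalent, perhaps cleaner, presentation is by induction on $N$: delete the last row and column of $\bB$ (vertex $V_N$), use equation $N$ together with the invariant to kill $B_{N,N},B_{N-1,N},\dots,B_{N-\kappa+1,N}$, and apply the inductive hypothesis to the remaining $(N-1)$-vertex band system; the base case $N=\kappa$ is immediate, because then $\bB$ is a full symmetric $\kappa\times\kappa$ matrix and $\bM_{\bU}$ is a nonsingular $\kappa\times\kappa$ matrix, so $\bB\,\bM_{\bU}=\mathbf{0}$ forces $\bB=\mathbf{0}$.
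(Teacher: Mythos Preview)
Your proof is correct. Both your argument and the paper's hinge on the same genericity fact (any $j\le\kappa$ of the vectors $\ba_1,\dots,\ba_N$ are linearly independent since $\bU\notin\mathcal{N}_{\mathcal{U}}$), and both ultimately exploit a triangular elimination structure. The packaging, however, differs. The paper works directly with $\bA$: it observes that $\bA$ is block lower triangular with diagonal blocks $\bA_{n,n}=[\ba_n,\ba_{n+1},\dots]$ each having at most $\kappa$ columns and hence full column rank, then invokes the rank inequality for block triangular matrices to get $\operatorname{rank}(\bA)\ge\sum_n\operatorname{rank}(\bA_{n,n})=N\kappa-\tfrac{\kappa(\kappa-1)}{2}$, matched by the column count. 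You instead translate $\ker\bA$ back to the matrix equation $\bB\,\bM_{\bU}=\mathbf{0}$ and perform an explicit back-substitution on the rows of $\bB$, using symmetry to propagate zeros. Your argument is the ``semantic'' unpacking of the paper's block-triangular step (processed in the reverse row order), and is a bit longer but has the virtue of making the role of the band structure and the symmetry $\bB=\bB^{T}$ completely transparent; the paper's version is more compact but hides this bookkeeping inside the construction of $\bA$.
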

\begin{proof}
According to \eqref{def:Ai}, it is easy to see that $\bA_{n,m}=\mathbf{0}$ if $n<m$ (see Fig. \ref{A:structure}). Hence,  $\bA$ can be expressed as a block lower triangular matrix as
$$\bA=\left[\begin{array}{cccc}
\bA_{1,1}&&&\\
\bA_{2,1}&\bA_{2,2}&&\\
\vdots&\vdots&\ddots&\\
\bA_{N,1}&\bA_{N,2}&\cdots&\bA_{N,N}
\end{array}\right].$$
Since $\bU\notin\mathcal{N}_{\bU}$,  any $n$ columns of $\{\ba_1,\ba_2,\dots,\ba_N\}$ with $n\leq \kappa$ are linear independent. Note that
$$\bA_{n,n}=
\left\{
\begin{aligned}[\ba_n,\ba_{n+1},\dots, \ba_{n+\kappa-1}],~~&\text{if}~ 1\leq n\leq N-\kappa+1;\\
[\ba_n,\ba_{n+1},\dots,\ba_{N}],~~~~~~~&\text{if}~N-\kappa+1<n\leq N,
\end{aligned}\right.$$ which is of full column rank due to the fact that the number of columns in $\bA_{n,n}$ is no more than $\kappa$.  Therefore, we have
$$\text{rank}(\bA_{n,n})=\left\{
\begin{aligned}
\kappa,~~~~\,~~~&\text{if}~ 1\leq n\leq N-\kappa+1;\\
N-n+1,~~&\text{if}~N-\kappa+1<n\leq N.
\end{aligned}\right.$$
This, together with the lower triangular structure of $\bA$, gives 
 \begin{equation}\label{rankAgeq}
 \hspace{-0.1cm}\begin{aligned}    
\text{rank}(\bA)&\geq\sum_{n=1}^N \text{rank}(\bA_{n,n})\\
&=(N-\kappa+1)\kappa+\kappa-1+\kappa-2+\dots+1\\
&= N\kappa-\frac{\kappa(\kappa-1)}{2}.
\end{aligned}
\end{equation}
On the other hand, since $\bA\in\R^{N\kappa\times \left(N\kappa-\frac{\kappa(\kappa-1)}{2}\right)}$, we immediately have
\begin{equation}\label{rankAleq}\text{rank}(\bA)\leq N\kappa-\frac{\kappa(\kappa-1)}{2}.\end{equation}
Combining \eqref{rankAgeq} and \eqref{rankAleq} gives the desired result. 
\end{proof}
\section{Rank of matrix $\left[\bA~\bb\right]$.}\label{app:lemmaAb}
In this appendix, we characterize the rank of matrix $\left[\bA~\bb\right]$, where $\bA$ and $\bb$ are given in \eqref{splitA} and \eqref{def:b}, respectively. Specifically, we prove the following result. 
\begin{lemma}\label{lemmaAb}
$\text{\normalfont{rank}}([\bA~\bb])=N\kappa-\frac{\kappa(\kappa-1)}{2}$.
\end{lemma}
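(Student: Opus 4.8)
The plan is to derive $\text{rank}([\bA~\bb]) = N\kappa-\frac{\kappa(\kappa-1)}{2}$ from Lemma~\ref{lemmaA} by showing that adjoining the column $\bb$ does not raise the rank, i.e.\ that $\bb$ lies in the column space of $\bA$, equivalently that $\bb$ is orthogonal to the left null space of $\bA$. Since Lemma~\ref{lemmaA} already gives $\text{rank}([\bA~\bb])\ge\text{rank}(\bA)=N\kappa-\frac{\kappa(\kappa-1)}{2}$, this suffices. Note that, within this proof, we may assume $\bU\in\bar{\mathcal{U}}\setminus\mathcal{N}_{\mathcal{U}}$, so all $\kappa\times\kappa$ submatrices of $\bM_{\bU}$ are nonsingular and hence $\bM_{\bU}$ has full column rank $\kappa$; we will use both the rank property and the two defining relations $\bU^H\bU=\bH_r\bH_r^H$ and $\bU^T\bH_r^H$ symmetric.

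First I would describe the left null space of $\bA$ explicitly. From \eqref{def:Ai}, the column of $\bA$ attached to the free entry $B_{i,j}$ (with $i\le j$, $j-i\le q$) carries the block $\ba_j$ in row-block $i$ and the block $\ba_i$ in row-block $j$; hence, for $\y=[\y_1^T,\dots,\y_N^T]^T$ with $\y_n\in\R^{\kappa}$, the component of $\bA^T\y$ indexed by $B_{i,j}$ equals $\y_i^T\ba_j+\y_j^T\ba_i$ (and equals $\y_n^T\ba_n$ when $i=j=n$). I would then exhibit, for each skew-symmetric $\mathbf{S}\in\R^{\kappa\times\kappa}$, the vector $\y_{\mathbf{S}}$ with $\y_{\mathbf{S},n}=\mathbf{S}\ba_n$: since $\y_{\mathbf{S},i}^T\ba_j+\y_{\mathbf{S},j}^T\ba_i=\ba_i^T(\mathbf{S}^T+\mathbf{S})\ba_j=0$ and $\y_{\mathbf{S},n}^T\ba_n=\ba_n^T\mathbf{S}^T\ba_n=0$, every such $\y_{\mathbf{S}}$ is a left null vector of $\bA$. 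The map $\mathbf{S}\mapsto\y_{\mathbf{S}}$ is linear and injective, because $\mathbf{S}\ba_n=0$ for all $n$ means $\mathbf{S}\bM_{\bU}^T=\mathbf{0}$, hence $\bM_{\bU}\mathbf{S}=\mathbf{0}$, hence $\mathbf{S}=\mathbf{0}$ by full column rank of $\bM_{\bU}$. Its image thus has dimension $\frac{\kappa(\kappa-1)}{2}$, which by Lemma~\ref{lemmaA} is exactly $N\kappa-\text{rank}(\bA)$, the dimension of the left null space. Therefore the left null space of $\bA$ is precisely $\{\y_{\mathbf{S}}\mid \mathbf{S}^T=-\mathbf{S}\}$.

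Next I would check $\y_{\mathbf{S}}^T\bb=0$ for every skew-symmetric $\mathbf{S}$. Using $\bb=\text{vec}(\boldsymbol{\Gamma}_{\bU}^T)$ from \eqref{def:b} and cyclicity of the trace, $\y_{\mathbf{S}}^T\bb=\sum_n\ba_n^T\mathbf{S}^T\bb_n=\text{tr}\!\big(\mathbf{S}^T\sum_n\bb_n\ba_n^T\big)=\text{tr}(\mathbf{S}^T\boldsymbol{\Gamma}_{\bU}^T\bM_{\bU})=-\text{tr}(\mathbf{S}\,\boldsymbol{\Gamma}_{\bU}^T\bM_{\bU})$, which vanishes for all skew-symmetric $\mathbf{S}$ exactly when $\bM_{\bU}^T\boldsymbol{\Gamma}_{\bU}$ is symmetric. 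So it remains to prove that $\bM_{\bU}^T\boldsymbol{\Gamma}_{\bU}$ is symmetric for $\bU\in\bar{\mathcal{U}}$. Writing $\bP=\mathsf{i}Z_0(\bU+\bH_r^H)$ and $\bQ=\bU-\bH_r^H$ so that $\bM_{\bU}=[\mathcal{R}(\bP)~\mathcal{I}(\bP)]$, $\boldsymbol{\Gamma}_{\bU}=[\mathcal{R}(\bQ)~\mathcal{I}(\bQ)]$, a block-by-block expansion of $\bM_{\bU}^T\boldsymbol{\Gamma}_{\bU}$ and comparison with the real and imaginary parts of $\bP^T\bQ$ and $\bP^H\bQ$ shows that $\bM_{\bU}^T\boldsymbol{\Gamma}_{\bU}$ is symmetric if and only if $\bP^T\bQ$ is complex-symmetric and $\bP^H\bQ$ is Hermitian. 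Finally I would discharge both facts using only the $\bar{\mathcal{U}}$ relations: $\bU^H\bU=\bH_r\bH_r^H$ gives $\bP^H\bQ=-\mathsf{i}Z_0\big(\bH_r\bU-(\bH_r\bU)^H\big)$, which is Hermitian; and symmetry of $\bU^T\bH_r^H$ gives $\bP^T\bQ=\mathsf{i}Z_0\big(\bU^T\bU-\overline{\bH_r\bH_r^T}\big)$, which is complex-symmetric. This proves $\bb\perp\text{null}(\bA^T)$, hence $\text{rank}([\bA~\bb])=\text{rank}(\bA)=N\kappa-\frac{\kappa(\kappa-1)}{2}$.

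I expect the main obstacle to be the last step: reducing the real symmetry of $\bM_{\bU}^T\boldsymbol{\Gamma}_{\bU}$ to the pair of complex conditions on $\bP,\bQ$, and then verifying each from the correct $\bar{\mathcal{U}}$ relation, requires careful bookkeeping of transposes versus conjugate transposes — the relation $\bU^T\bH_r^H=(\bU^T\bH_r^H)^T$ controls $\bP^T\bQ$, whereas $\bU^H\bU=\bH_r\bH_r^H$ controls $\bP^H\bQ$, and mixing these up breaks the argument. The left-null-space characterization, by contrast, is essentially routine once the skew-symmetric family $\y_{\mathbf{S}}$ is written down, with the dimension count supplied by Lemma~\ref{lemmaA}.
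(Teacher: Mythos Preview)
Your proof is correct and takes a genuinely different, cleaner route than the paper's. Both arguments use Lemma~\ref{lemmaA} for the lower bound and ultimately hinge on the same algebraic identity, namely the symmetry of $\bM_{\bU}^{T}\boldsymbol{\Gamma}_{\bU}$ (this is exactly the paper's Lemma~\ref{lemma:relationab}). The difference lies in how that identity is brought to bear. The paper works constructively: it exhibits explicit row operations, parametrised by coefficients $\alpha_{n,j}$ and $\mathbf{c}_{n,j,l}$ built from Cramer-type quotients and Schur complements, that annihilate $\kappa(\kappa-1)/2$ rows of $[\bA~\bb]$; verifying that the $\bA$-part vanishes is a block-by-block computation, and verifying that the $\bb$-part vanishes requires Laplace expansions and a determinant identity before Lemma~\ref{lemma:relationab} can be invoked. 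Your approach is structural: you identify the entire left null space of $\bA$ as $\{\y_{\mathbf{S}}:\mathbf{S}^{T}=-\mathbf{S}\}$ via the skew-symmetry trick plus a dimension count against Lemma~\ref{lemmaA}, and then the orthogonality $\y_{\mathbf{S}}^{T}\bb=0$ becomes a one-line trace computation equivalent to the symmetry of $\bM_{\bU}^{T}\boldsymbol{\Gamma}_{\bU}$. Your derivation of that symmetry through the pair of complex conditions on $\bP^{H}\bQ$ and $\bP^{T}\bQ$ is also correct (the cross terms in $\bP^{T}\bQ$ cancel because $\bU^{T}\bH_r^{H}=(\bU^{T}\bH_r^{H})^{T}=\overline{\bH_r}\,\bU$), and is essentially a repackaging of the paper's equation \eqref{relation:UH}. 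What your approach buys is transparency and brevity: it makes immediate that the single identity $\bM_{\bU}^{T}\boldsymbol{\Gamma}_{\bU}=\boldsymbol{\Gamma}_{\bU}^{T}\bM_{\bU}$ is \emph{exactly} what is needed, with no determinant machinery; what the paper's approach buys is an explicit set of row operations, which may be of independent interest if one wants a constructive elimination procedure.
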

Before going into detailed proof of Lemma \ref{lemmaAb}, we first give two auxiliary results in Appendix \ref{auxiliary}, which are important to the proof.  The complete proof of Lemma \ref{lemmaAb} is provided in Appendix \ref{proof:lemmaAb}.
\subsection{Auxiliary results.}\label{auxiliary}
   \begin{lemma}\label{lemma1}
   Let $\bC\in\R^{n\times n}$ be an invertible matrix, and let $\x\in\R^n$ and $\y\in\R^n$ be two vectors. Then 
   $$\x^T\bC^{-1}\y=\sum_{i=1}^n\frac{\text{\normalfont{det}}\,\bC_i}{\text{\normalfont{det}}\,\bC}x_i=\sum_{i=1}^n\frac{\text{\normalfont{det}}\,\widetilde{\bC}_i}{\text{\normalfont{det}}\,\bC}y_i,$$
   where $\bC_i$ is the matrix formed  by replacing the $i$-th column of $\bC$ with $\y$, and $\widetilde{\bC}_i$ is formed by replacing the $i$-th row of $\bC$ with $\x^T$.
   \end{lemma}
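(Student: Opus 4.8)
The plan is to recognize this as a standard consequence of Cramer's rule. Let $\bz = \bC^{-1}\y$, so that $\bC\bz = \by$. Then $\x^T\bC^{-1}\y = \x^T\bz = \sum_{i=1}^n x_i z_i$. By Cramer's rule applied to the linear system $\bC\bz = \by$, each component is $z_i = \det\bC_i / \det\bC$, where $\bC_i$ is obtained from $\bC$ by replacing its $i$-th column with $\by$. Substituting gives the first equality: $\x^T\bC^{-1}\y = \sum_{i=1}^n \frac{\det\bC_i}{\det\bC}x_i$.

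For the second equality, I would apply the same argument to the transposed system. Set $\bw^T = \x^T\bC^{-1}$, i.e. $\bw^T\bC = \x^T$, equivalently $\bC^T\bw = \x$. Then $\x^T\bC^{-1}\y = \bw^T\y = \sum_{i=1}^n w_i y_i$. Cramer's rule on $\bC^T\bw = \x$ yields $w_i = \det(\bC^T)_i / \det(\bC^T) = \det((\bC^T)_i)/\det\bC$, where $(\bC^T)_i$ replaces the $i$-th column of $\bC^T$ with $\x$. Transposing back, $\det((\bC^T)_i) = \det(\widetilde{\bC}_i)$, where $\widetilde{\bC}_i$ is $\bC$ with its $i$-th \emph{row} replaced by $\x^T$, since $\det\mathbf{M} = \det\mathbf{M}^T$ for any square matrix. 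This gives the second equality.

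This lemma is essentially a restatement of Cramer's rule, so no genuine obstacle is expected; the only care needed is bookkeeping of rows versus columns under transposition, and noting that $\det\bC \neq 0$ by the invertibility hypothesis so all quotients are well defined.
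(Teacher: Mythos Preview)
Your proposal is correct and matches the paper's own proof essentially line for line: the paper likewise sets $\v=\bC^{-1}\y$, writes $\x^T\bC^{-1}\y=\sum_i v_i x_i$, and invokes Cramer's rule to get $v_i=\det\bC_i/\det\bC$, then states that the second equality ``can be proved similarly.'' Your version is in fact slightly more explicit in spelling out the transpose argument for the second equality.
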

  \begin{proof}
  Let $\bC^{-1}\y=\v$. Then $\x^T\bC^{-1}\y=\sum_{i=1}^nv_ix_i.$ According to the Cramer's rule \cite{strang2022introduction},
  $$v_i=\frac{\text{det}\,\bC_i}{\text{det}\,\bC},~i=1,2,\dots, n,$$
  which gives the first equation. The second equation can be proved similarly.
  
  \begin{comment}
  Note that 
  \begin{equation}\label{a1}
  \x^T\bC^{-1}\y=\sum_{s,k=1}^nC_{s,k}^{-1}x_sy_k=\sum_{k=1}^n\left(\sum_{s=1}^nC_{s,k}^{-1}x_s\right)y_k, 
  \end{equation}
  where $C_{s,k}^{-1}$ refers to the $(s,k)$-th element of $\bC$.
  We next prove that $\sum_{s=1}^nC_{s,k}^{-1}x_s=\frac{\text{\normalfont{det}}\,\bC_k}{\text{\normalfont{det}}\,\bC}$.
Note that 
\begin{equation}\label{a2}
C_{s,k}^{-1}=\frac{\tilde{C}_{k,s}}{\text{det}\,\bC\,},
\end{equation}
where $\tilde{C}_{k,s}$ refers to the cofactor of $C_{k,s}$. On the other hand, 
\begin{equation}\label{a3}
\sum_{s=1}^n\tilde{C}_{k,s}x_s=\text{det}\,\bC_k.
\end{equation}
Combining \eqref{a1}--\eqref{a3} gives the desired result. (This lemma can also be proved using the Cramer's rule).
\end{comment}
    \end{proof} 
    \begin{lemma}\label{lemma:relationab}
    The following equation holds for the vectors $\{\ba_n\}_{n=1}^{N}$ and $\{\bb_n\}_{n=1}^N$ given in \eqref{def:ab}:
    $$\sum_{n=1}^N\left({\ba}_n\bb_n^T-\bb_n\ba_n^T\right)=\mathbf{0}.$$
    \end{lemma}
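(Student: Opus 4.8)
The plan is to recognize the claimed identity as a symmetry statement for a single matrix, and then to verify that symmetry using the two defining relations of the set $\bar{\mathcal{U}}$ that is in force throughout this part of the proof. First I would note that, by the definition in \eqref{def:ab}, the rows of $\bM_{\bU}$ are $\ba_1^T,\dots,\ba_N^T$ and the rows of $\boldsymbol{\Gamma}_{\bU}$ are $\bb_1^T,\dots,\bb_N^T$, so that $\sum_{n=1}^N \ba_n\bb_n^T = \bM_{\bU}^T\boldsymbol{\Gamma}_{\bU}$ and $\sum_{n=1}^N \bb_n\ba_n^T = \boldsymbol{\Gamma}_{\bU}^T\bM_{\bU} = (\bM_{\bU}^T\boldsymbol{\Gamma}_{\bU})^T$. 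Hence the claim $\sum_{n=1}^N(\ba_n\bb_n^T-\bb_n\ba_n^T)=\mathbf{0}$ is \emph{equivalent} to showing that the matrix $\bM_{\bU}^T\boldsymbol{\Gamma}_{\bU}$ is symmetric.

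Next I would pass to complex notation, writing $\boldsymbol{\Phi}=\mathsf{i}Z_0(\bH_r^H+\bU)$ and $\boldsymbol{\Psi}=\bU-\bH_r^H$, so that $\bM_{\bU}=[\mathcal{R}(\boldsymbol{\Phi})~\mathcal{I}(\boldsymbol{\Phi})]$ and $\boldsymbol{\Gamma}_{\bU}=[\mathcal{R}(\boldsymbol{\Psi})~\mathcal{I}(\boldsymbol{\Psi})]$ by \eqref{Mu} and \eqref{gammaU}. Expanding the $2\times 2$ block structure of $\bM_{\bU}^T\boldsymbol{\Gamma}_{\bU}$ and invoking the elementary identities $\mathcal{R}(\boldsymbol{\Phi})^T\mathcal{R}(\boldsymbol{\Psi})\pm\mathcal{I}(\boldsymbol{\Phi})^T\mathcal{I}(\boldsymbol{\Psi})=\mathcal{R}(\boldsymbol{\Phi}^H\boldsymbol{\Psi})$ (resp.\ $\mathcal{R}(\boldsymbol{\Phi}^T\boldsymbol{\Psi})$) and $\mathcal{R}(\boldsymbol{\Phi})^T\mathcal{I}(\boldsymbol{\Psi})\mp\mathcal{I}(\boldsymbol{\Phi})^T\mathcal{R}(\boldsymbol{\Psi})=\mathcal{I}(\boldsymbol{\Phi}^H\boldsymbol{\Psi})$ (resp.\ $\mathcal{I}(\boldsymbol{\Phi}^T\boldsymbol{\Psi})$), one finds that symmetry of $\bM_{\bU}^T\boldsymbol{\Gamma}_{\bU}$ follows once we know that $\mathcal{R}(\boldsymbol{\Phi}^H\boldsymbol{\Psi})$, $\mathcal{R}(\boldsymbol{\Phi}^T\boldsymbol{\Psi})$, $\mathcal{I}(\boldsymbol{\Phi}^T\boldsymbol{\Psi})$ are symmetric and $\mathcal{I}(\boldsymbol{\Phi}^H\boldsymbol{\Psi})$ is antisymmetric: indeed adding/subtracting the relevant pairs isolates each of the four blocks and forces $(1,1)$ and $(2,2)$ to be symmetric and block $(1,2)$ to equal the transpose of block $(2,1)$.

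Finally I would establish these four facts from the defining relations of $\bar{\mathcal{U}}$, namely $\bU^H\bU=\bH_r\bH_r^H$ and $\bU^T\bH_r^H=(\bU^T\bH_r^H)^T$. Multiplying out, $\boldsymbol{\Phi}^H\boldsymbol{\Psi}=-\mathsf{i}Z_0\big(\bH_r\bU-\bH_r\bH_r^H+\bU^H\bU-\bU^H\bH_r^H\big)$; the first relation cancels the two middle terms, leaving $\boldsymbol{\Phi}^H\boldsymbol{\Psi}=-\mathsf{i}Z_0\big(\bH_r\bU-(\bH_r\bU)^H\big)$, i.e.\ $-\mathsf{i}Z_0$ times an anti-Hermitian matrix, whose real part is therefore symmetric and whose imaginary part is antisymmetric. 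Likewise $\boldsymbol{\Phi}^T\boldsymbol{\Psi}=\mathsf{i}Z_0\big(\overline{\bH_r}\bU-\overline{\bH_r}\bH_r^H+\bU^T\bU-\bU^T\bH_r^H\big)$, and since $(\bU^T\bH_r^H)^T=\overline{\bH_r}\bU$, the second relation cancels the terms $\overline{\bH_r}\bU$ and $\bU^T\bH_r^H$, leaving $\boldsymbol{\Phi}^T\boldsymbol{\Psi}=\mathsf{i}Z_0\big(\bU^T\bU-\overline{\bH_r}\,\overline{\bH_r}^T\big)$, i.e.\ $\mathsf{i}Z_0$ times a complex \emph{symmetric} matrix, so both its real and imaginary parts are (real) symmetric. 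Together these give exactly the four properties needed, completing the argument. The only real obstacle here is book-keeping: tracking which combinations of $\mathcal{R}(\boldsymbol{\Phi}),\mathcal{I}(\boldsymbol{\Phi}),\mathcal{R}(\boldsymbol{\Psi}),\mathcal{I}(\boldsymbol{\Psi})$ must be symmetric versus antisymmetric, and making sure the two defining relations of $\bar{\mathcal{U}}$ are each applied at the right place—one to simplify $\boldsymbol{\Phi}^H\boldsymbol{\Psi}$, the other to simplify $\boldsymbol{\Phi}^T\boldsymbol{\Psi}$; there is no genuine analytic difficulty.
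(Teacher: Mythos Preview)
Your proof is correct and follows essentially the same approach as the paper: both reduce the claim to showing $\bM_{\bU}^T\boldsymbol{\Gamma}_{\bU}$ is symmetric and then verify this using the two defining relations $\bU^H\bU=\bH_r\bH_r^H$ and $\bU^T\bH_r^H=(\bU^T\bH_r^H)^T$ of $\bar{\mathcal{U}}$. The only difference is organizational---the paper expands the two complex relations into four real identities \eqref{relation:UH} and leaves the block-by-block check as ``straightforward,'' whereas you keep everything in complex form via $\boldsymbol{\Phi}^H\boldsymbol{\Psi}$ and $\boldsymbol{\Phi}^T\boldsymbol{\Psi}$ and read off the required (anti)symmetries from their anti-Hermitian and complex-symmetric structure; this is a slightly cleaner packaging of the same computation, not a different argument.
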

    \begin{proof}
Recalling the definitions of $\{\ba_n\}_{n=1}^{N}$ and $\{\bb_n\}_{n=1}^{N}$ in \eqref{def:ab}, it suffices to show that
\begin{equation}\label{Mgamma}
\mathbf{M}_\bU^T\boldsymbol{\Gamma}_{\bU}=\boldsymbol{\Gamma}_{\bU}^T\mathbf{M}_{\bU},
\end{equation}
where 
$\bM_\bU$ and $\boldsymbol{\Gamma}_{\bU}$ are given in \eqref{Mu} and \eqref{gammaU}, respectively.  Since $\bU\in\mathcal{\bar{U}}$, we have
\begin{equation}\label{relation}
(\bU)^{H}\bU=\bH_r\bH_r^H,~~~(\bU)^{T}\bH_r^H=\left((\bU)^{T}\bH_r^H\right)^T,
\end{equation}
which are equivalent to
{\begin{subequations}\label{relation:UH}
\begin{align}
&\RR(\bU)^T\RR(\bU)+\I(\bU)^T\I(\bU)%\notag\\&\hspace{2cm}
=\RR(\bH_r)\RR(\bH_r)^T+\I(\bH_r)\I(\bH_r)^T,\\
&\RR(\bU)^T\I(\bU)-\I(\bU)^T\RR(\bU)%\notag\\&\hspace{2cm}
=\I(\bH_r)\RR(\bH_r)^T-\RR(\bH_r)\I(\bH_r)^T,\\
&\RR(\bH_r)\RR(\bU)+\I(\bH_r)\I(\bU)%\notag\\&\hspace{2cm}
=(\RR(\bH_r)\RR(\bU)+\I(\bH_r)\I(\bU))^{T},\\
&\RR(\bH_r)\I(\bU)-\I(\bH_r)\RR(\bU)%\notag\\&\hspace{2cm}
=(\RR(\bH_r)\I(\bU)-\I(\bH_r)\RR(\bU))^{T}.
\end{align}
\end{subequations}}
\hspace{-0.1cm}With \eqref{relation:UH}, it is straightforward to check that \eqref{Mgamma} holds, thereby completing the proof.
\end{proof}
\subsection{Proof of Lemma \ref{lemmaAb}}\label{proof:lemmaAb}
Now we are ready to prove Lemma \ref{lemmaAb}. First, by Lemma \ref{lemmaA}, we have $$\text{rank}([\bA,\bb])\geq \text{rank}(\bA)=N\kappa-\frac{\kappa(\kappa-1)}{2}.$$ 
To complete the proof, we will next show that, with appropriate row transformations,   $\kappa(\kappa-1)/2$ rows of $[\bA~\bb]$ can be made zero, which  implies 
$$\text{rank}([\bA, \bb])\leq N\kappa-\frac{\kappa(\kappa-1)}{2}.$$

Specifically, let $\mathbf{r}_1,\mathbf{r}_2,\dots,\mathbf{r}_{N\kappa}$\footnote{With a slight abuse of notation, we use bold uppercase letters to represent row vectors here.} denote the rows of the matrix $[\bA,\bb]$.   We claim that for all $1\leq j<l\leq \kappa$, 
\begin{equation}\label{rowto0}
\begin{aligned}
\sum_{n=1}^{N-j+1}\alpha_{n,j}&\left(\mathbf{r}_{\kappa(n-1)+l}-
\mathbf{c}_{n,j,l}^T\left[\begin{matrix}
 \mathbf{r}_{\kappa(n-1)+1}\\ \mathbf{r}_{\kappa(n-1)+2}\\\vdots\\\mathbf{r}_{\kappa(n-1)+j}
  \end{matrix}\right]\right)=\mathbf{0},
  \end{aligned}
  \end{equation}
where $\alpha_{n,j}$ and $\mathbf{c}_{n,j,l}$ are given by
\begin{equation}\label{def:c}
\alpha_{n,j}=\left\{
\begin{aligned}
a_{n,1},\hspace{9.55cm}~&\text{if }j=1;\\
a_{n,1}-
\left[\begin{matrix}a_{N-j+2,1}&\cdots&a_{N,1}\end{matrix}\right]
\left[\begin{matrix}a_{N-j+2,2}&\cdots&a_{N,2}\\ \vdots&\ddots&\vdots\\a_{N-j+2,j}&\cdots&a_{N,j}\end{matrix}\right]^{-1}
\left[\begin{matrix}a_{n,2}\\\vdots\\a_{n,j}\end{matrix}\right],\,~~~&\text{otherwise}.
\end{aligned}\right.
\end{equation} 
\vspace{0.1cm}
\begin{equation}\label{def:C}
\mathbf{c}_{n,j,l}=\left\{
\begin{aligned}
\frac{a_{n,1}}{a_{n,1+l}},\hspace{8cm}&\text{if }j=1;\\
\left[\begin{matrix} a_{n,1}&a_{n,2}&\cdots&a_{n,j}\\
a_{N-j+2,1}&a_{N-j+2,2}&\cdots&a_{N-j+2,j}\\
 \vdots&\vdots&\ddots&\vdots\\
  a_{N,1}&a_{N,2}&\cdots&a_{N,j}
  \end{matrix}\right]^{-1}\left[\begin{matrix}a_{n,l}\\a_{N-j+2,l}\\\vdots\\a_{N,l}\end{matrix}\right],~~~~&\text{otherwise},
  \end{aligned}
  \right.~
  \end{equation}
  with $a_{n,l}$ representing the $l$-th element of $\ba_n$, i.e., $a_{n,l}=\ba_n(l)$.
 % \hrule
 % \end{figure*}
This illustrates that with proper row transformations, we can make the following rows of $[\bA,\bb]$ as  zero vectors:
$$\mathbf{r}_{\kappa(N-j)+l},~~\forall~1\leq j<l\leq \kappa,$$
 whose number is $\kappa(\kappa-1)/2$ in total.
 
  To prove the above claim, we fix $j$ and $l$ (where $1\leq j<l\leq \kappa$),  and denote the vectors in the summation of \eqref{rowto0} as $\hat{\mathbf{r}}_1, \hat{\mathbf{r}}_2, \dots, \hat{\mathbf{r}}_{N-j+1}$, i.e., we write \eqref{rowto0} as $\sum_{n=1}^{N-j+1}\hat{\br}_n=\mathbf{0}$.
 %$$
 %\begin{aligned}\hat{\mathbf{r}}_n= \alpha_{n,j}&\left(\mathbf{r}_{\kappa(n-1)+l}-
%\mathbf{c}_{n,j,l}^T\left[\begin{matrix}
% \mathbf{r}_{\kappa(n-1)+1}\\ \mathbf{r}_{\kappa(n-1)+2}\\\vdots\\\mathbf{r}_{\kappa(n-1)+j}
  %\end{matrix}\right]\right),\\
  %&\hspace{3.8cm}n=1,2,\dots, N-j+1.
  %\end{aligned}$$
   Let $\hat{\mathbf{R}}\in\mathbb{R}^{(N-j+1)\times (N\kappa-\frac{\kappa(\kappa-1)}{2}+1)}$ be a matrix whose rows are  $\hat{\mathbf{r}}_1, \hat{\mathbf{r}}_2, \dots, \hat{\mathbf{r}}_{N-j+1}$, and split $
 \hat{\mathbf{R}}$ as $\hat{\mathbf{R}}=[{\bR}~{\mathbf{d}}],$ where
 $$\begin{aligned}
 {\bR}&=\hat{\mathbf{R}}\left(:,1:N\kappa-\frac{\kappa(\kappa-1)}{2}\right) ~\text{ and }~%,\\
 {\bd}=\hat{\mathbf{R}}\left(:,N\kappa-\frac{\kappa(\kappa-1)}{2}+1\right),
 \end{aligned}$$ 
 i.e., $\bR$ is a submatrix of $\hat{\bR}$ with the last column removed, and $\bd$ is the last column of $\hat{\bR}$.
In the following, we prove that  \begin{equation}\label{sumR}\sum_{n=1}^{N-j+1}{\bR}(n,:)=\mathbf{0}\end{equation} and \begin{equation}\label{sumq}\sum_{n=1}^{N-j+1}d_n=0\end{equation} separately. 
\subsubsection{Proof of \eqref{sumR}}To better illustrate the structure of $\bR$, we split it as $\bR=[\bR_1,\bR_2,\dots,\bR_N]$ with $\bR_n\in\R^{(N-j+1)\times \min\{\kappa, N-n+1\}}$. According to the structure of $\bA$ in \eqref{splitA} and the way that $\bR$ is generated, $\bR_n=\mathbf{0}$ if $n>N-j+1$; otherwise 
\begin{equation}\label{Ri}
\begin{aligned}
\bR_n(i,q)\hspace{-0.1cm}=\hspace{-0.1cm}\left\{
\begin{aligned}
&\alpha_{i,j}\hspace{-0.1cm}\left(a_{q+i-1,l}-\bc_{i,j,l}^T\hspace{-0.1cm}\left[\begin{matrix}a_{q+i-1,1}\\\vdots \\a_{q+i-1,j}\end{matrix}\right]\right),~~\text{if }i=n;\\
&\alpha_{i,j}\hspace{-0.1cm}\left(a_{n,l}-\bc_{i,j,l}^T\left[\begin{matrix}a_{n,1}\\\vdots\\a_{n,j}\end{matrix}\right]\right)\hspace{-0.05cm},\hspace{1.6cm}\text{if }i=q+n-1;\\
&0,\hspace{5.6cm}~\text{otherwise},\\
\end{aligned}\right.\\
\text{ where }1\leq i\leq N-j+1,~1\leq q\leq \min\{\kappa,N-n+1\}.
\end{aligned}
\end{equation}
 Due to \eqref{Ri}, there are at most two elements that  are possibly non-zero in each column of $\bR$.   For clarity, we give the structure of $\bR$ in Fig. \ref{fig:R}, where all the non-zero candidates are marked as ``$*$''.  Our goal is to show that: (a) for those columns with only a single ``non-zero candidate'' (a single ``$*$'' in Fig. \ref{fig:R}), its value  is  exactly zero. (b) For those columns with two ``non-zero candidates'', the sum of these two elements is zero\footnote{Here and after,  we only prove the case for $j>1$. The proof for $j=1$ is straightforward.}. 

 \begin{figure}[t]
\includegraphics[scale=0.3]{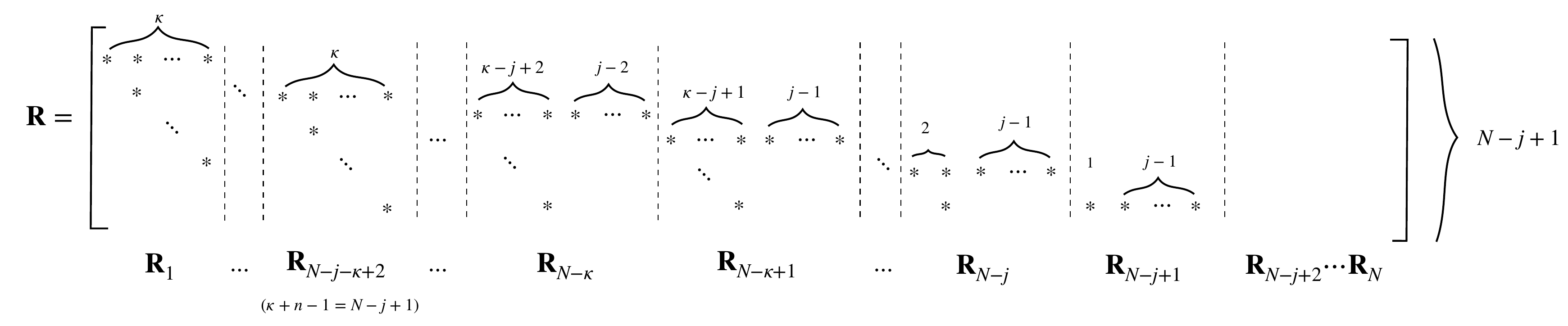}
\centering
\caption{An illustration of the structure of $\mathbf{R}$, where  ``$*$'' marks all the non-zero candidates in ${\mathbf{R}}$ that require further investigation and all the other elements are zero according to \eqref{Ri}.  }
\label{fig:R}
\end{figure}
\hspace{-0.5cm}

\underline{\emph{Proof of (a).}}  We need to show that 
\begin{equation}\label{Rnnq}
\begin{aligned}
\bR_n(n,q)=0,~&\text{if }q=1,~1\leq n\leq N-j+1\\
&\text{or }N-j+2\leq q+n-1\leq N.
\end{aligned}
\end{equation}
%\begin{equation}\label{Rnnq2}
%\begin{aligned}
%\bR_n(n,q)=&\alpha_{n,j}\left(a_{q+n-1,l}-\bc_{n,j,l}^T\left[\begin{matrix}a_{q+n-1,1}\\a_{q+n-1,2}\\\vdots\\a_{q+n-1,j}\end{matrix}\right]\right).
 % \end{aligned}
 % \end{equation}
First, we note that $(n,q)$ defined in \eqref{Rnnq} satisfies 
\begin{equation}\label{q+n-1}q+n-1\in\{n,N-j+2,N-j+3,\dots,N\}.\end{equation}
Let  $\x=\left[a_{n,l},a_{N-j+2,l},\dots,a_{N,l}\right]^T$\hspace{-0.2cm},  $\y=[a_{q+n-1,1},a_{q+n-1,2},\dots,a_{q+n-1,j}]^T,$ and 
$$~\bC=\left[\begin{matrix} a_{n,1}&a_{N-j+2,1}&\dots&a_{N,1}\\
 a_{n,2}&a_{N-j+2,2}&\dots&a_{N,2}\\
 \vdots&\vdots&\ddots&\vdots\\
  a_{n,j}&a_{N-j+2,j}&\dots&a_{N,j}
  \end{matrix}\right].$$ 
  According to \eqref{Ri} and the definition of $\bc_{i,j,l}$ in \eqref{def:C},   we have
$$
\begin{aligned}
\bR_{n}(n,q)=\alpha_{n,j}\left(a_{q+n-1,l}-\x^T\bC^{-1}\y\right)&=0,%\\
&\forall~(n,q)~\text{satisfies}~\,\eqref{q+n-1}.
\end{aligned}$$
  where the second equality applies the first equation of Lemma \ref{lemma1} and uses the fact that $\y$ is a column of $\bC$ due to \eqref{q+n-1}.  This  completes the proof for (a).

\underline{ \emph{Proof of (b)}}. To prove (b), we need to show that 
 $$
 \begin{aligned}\bR_{n}(n,q)&+\bR_n(q+n-1,q)=0,%\\
 ~&\forall~(q,n): n<q+n-1\leq N-j+1. \end{aligned}$$ 
 For notational simplicity, let   $  \gamma_1=a_{n,l},~\gamma_2=a_{q+n-1,l},~\eta_1=a_{n,1},~\eta_2=a_{q+n-1,1},$
  $ \boldsymbol{\xi}_1=[a_{n,2},\cdots,a_{n,j}]^T,~\boldsymbol{\xi}_2=[a_{q+n-1,2}, \cdots, a_{q+n-1,j}]^T,$
  $ \mathbf{g}=[{a}_{N-j+2,l}, \cdots, a_{N,l}]^T,~\mathbf{f}=[a_{N-j+2,1}, \cdots, a_{N,1}]^T,$ and 
  \begin{equation}\label{D}
  \bD=\left[\begin{matrix}a_{N-j+2,2}&\cdots&a_{N,2}\\ \vdots&\ddots&\vdots\\ a_{N-j+2,j}&\cdots& a_{N,j} \end{matrix}\right].
  \end{equation}

 According to \eqref{def:c}, \eqref{def:C}, and \eqref{Ri},
 $$
  \begin{aligned}
  %& 
 {\bR}_n(n,q)+{\bR}_n(q+n-1,q)%\\
   =\,&(\eta_1-\mathbf{f}^T\bD^{-1}\boldsymbol{\xi}_1)(\gamma_2-[\gamma_1~~\mathbf{g}^T]\left[\begin{matrix}\eta_1&\mathbf{f}^T\\\boldsymbol{\xi}_1&\bD\end{matrix}\right]^{-1}\left[\begin{matrix}\eta_2\\\boldsymbol{\xi}_2\end{matrix}\right])\\
   &+(\eta_2-\mathbf{f}^T\bD^{-1}\boldsymbol{\xi}_2)(\gamma_1-[\gamma_2~~\mathbf{g}^T]\left[\begin{matrix}\eta_2&\mathbf{f}^T\\\boldsymbol{\xi}_2&\bD\end{matrix}\right]^{-1}\left[\begin{matrix}\eta_1\\\boldsymbol{\xi}_1\end{matrix}\right]).
   \end{aligned}
$$
%  {\small  
 %\begin{equation}\label{sumrn}
  % \begin{aligned}
  %& {\bR}_n(n,q)+{\bR}_n(q+n-1,q)\\
  % =\,&\alpha_{n,j}\left(a_{q+n-1,l}-[a_{n,l},a_{N-j+2,l},\dots,a_{N,l}]\left[\begin{matrix}a_{n,1}&a_{N-j+2,1}&\cdots&a_{N,1}\\a_{n,2}&a_{N-j+2,2}&\cdots& a_{N,2}\\\vdots&\vdots&\ddots&\vdots\\a_{n,j}&a_{N-j+2,j}&\cdots&a_{N,j}\end{matrix}\right]^{-1}\left[\begin{matrix}a_{q+n-1,1}\\a_{q+n-1,2}\\\vdots\\a_{q+n-1,j}\end{matrix}\right]\right)\\
  % &+\alpha_{q+n-1,j}\left(a_{n,l}-[a_{q+n-1,l},a_{N-j+2,l},\dots,a_{N,l}]\left[\begin{matrix}a_{q+n-1,1}&a_{N-j+2,1}&\cdots&a_{N,1}\\a_{q+n-1,2}&a_{N-j+2,2}&\cdots& a_{N,2}\\\vdots&\vdots&\ddots&\vdots\\a_{q+n-1,j}&a_{N-j+2,j}&\cdots&a_{N,j}\end{matrix}\right]^{-1}\left[\begin{matrix}a_{n,1}\\a_{n,2}\\\vdots\\a_{n,j}\end{matrix}\right]\right).
%   \end{aligned}
  % \end{equation}}
      We first investigate $\bR_n(n,q)$. Note that 
   $$\left[\begin{matrix}\eta_1&\mathbf{f}^T\\\boldsymbol{\xi}_1&\bD\end{matrix}\right]^{-1}=\left[\begin{matrix}\frac{1}{\eta_1}+\frac{1}{\eta_1^2}\mathbf{f}^T\mathbf{S}\boldsymbol{\xi}_1&-\frac{1}{\eta_1}\mathbf{f}^T\mathbf{S}\\-\frac{1}{\eta_1}\mathbf{S}\boldsymbol{\xi}_1&\mathbf{S}\end{matrix}\right],$$
   where $\mathbf{S}=\left(\bD-\frac{1}{\eta_1}\boldsymbol{\xi}_1\mathbf{f}^T\right)^{-1}$  is the inverse of the Schur complement of $\eta_1$, which, according to the Sherman-Morrison-Woodbury formula, can further be expressed as 
   $$
   \mathbf{S}=\bD^{-1}+\frac{1}{\eta_1-\mathbf{f}^T\bD^{-1}\boldsymbol{\xi}_1}\bD^{-1}\boldsymbol{\xi}_1\mathbf{f}^T\bD^{-1}.$$   Using the above equations, we have 
$$  \begin{aligned}
 \gamma_2-[\gamma_1~~\mathbf{g}^T]\left[\begin{matrix}\eta_1&\mathbf{f}^T\\\boldsymbol{\xi}_1&\bD\end{matrix}\right]^{-1}\left[\begin{matrix}\eta_2\\\boldsymbol{\xi}_2\end{matrix}\right]%\\
=\,& \gamma_2\hspace{-0.05cm}-\hspace{-0.05cm}\frac{\eta_2\gamma_1}{\eta_1}\hspace{-0.05cm}-\hspace{-0.05cm}\frac{\eta_2\gamma_1}{\eta_1^2} \mathbf{f}^T\mathbf{S}\boldsymbol{\xi}_1+\frac{\eta_2}{\eta_1}\mathbf{g}^T\mathbf{S}\boldsymbol{\xi}_1\hspace{-0.03cm}+\hspace{-0.03cm}\frac{\gamma_1}{\eta_1}\mathbf{f}^T\mathbf{S}\boldsymbol{\xi}_2\hspace{-0.03cm}-\hspace{-0.03cm}\mathbf{g}^T\mathbf{S}\boldsymbol{\xi}_2\\
=\,&\gamma_2-\frac{\eta_2\gamma_1}{\eta_1}-\frac{\eta_2\gamma_1}{\eta_1} \frac{\mathbf{f}^T\mathbf{D}^{-1}\boldsymbol{\xi}_1}{\eta_1-\mathbf{f}^T\mathbf{D}^{-1}\boldsymbol{\xi}_1}+\frac{\eta_2\mathbf{g}^T\mathbf{D}^{-1}\boldsymbol{\xi}_1}{\eta_1-\mathbf{f}^T\mathbf{D}^{-1}\boldsymbol{\xi}_1}\\
&+\frac{\gamma_1\mathbf{f}^T\mathbf{D}^{-1}\boldsymbol{\xi}_2}{\eta_1-\mathbf{f}^T\mathbf{D}^{-1}\boldsymbol{\xi}_1}\hspace{-0.05cm}-\hspace{-0.05cm}\mathbf{g}^T\mathbf{D}^{-1}\boldsymbol{\xi}_2\hspace{-0.05cm}-\hspace{-0.05cm}\frac{\left(\mathbf{g}^T\mathbf{D}^{-1}\boldsymbol{\xi}_1\right)\left(\mathbf{f}^T\mathbf{D}^{-1}\boldsymbol{\xi}_2\right)}{\eta_1-\mathbf{f}^T\mathbf{D}^{-1}\boldsymbol{\xi}_1}\\
%=\,&\gamma_1-\mathbf{g}^T\mathbf{D}^{-1}\boldsymbol{\xi}_2- \gamma_2\frac{\eta_2-\mathbf{f}^T\bD^{-1}\boldsymbol{\xi}_2}{\eta_1-\mathbf{f}^T\bD^{-1}\boldsymbol{\xi}_1}+\mathbf{g}^T\bD^{-1}\boldsymbol{\xi}_1\frac{\eta_2-\mathbf{f}^T\bD^{-1}\boldsymbol{\xi}_2}{\eta_1-\mathbf{f}^T\bD^{-1}\boldsymbol{\xi}_1}\\
=\,&\gamma_2-\mathbf{g}^T\mathbf{D}^{-1}\boldsymbol{\xi}_2-(\gamma_1-\mathbf{g}^T\mathbf{D}^{-1}\boldsymbol{\xi}_1)\frac{\eta_2-\mathbf{f}^T\bD^{-1}\boldsymbol{\xi}_2}{\eta_1-\mathbf{f}^T\bD^{-1}\boldsymbol{\xi}_1},
 \end{aligned}$$
and hence 
 $$
 \begin{aligned}
 \bR_{n}(n,q)=&\left(\gamma_2-\mathbf{g}^T\mathbf{D}^{-1}\boldsymbol{\xi}_2\right)\left(\eta_1-\mathbf{f}^T\bD^{-1}\boldsymbol{\xi}_1\right)%\\ &~~~
 -(\gamma_1-\mathbf{g}^T\mathbf{D}^{-1}\boldsymbol{\xi}_1){(\eta_2-\mathbf{f}^T\bD^{-1}\boldsymbol{\xi}_2)}.
\end{aligned}$$
Similarly, we can show that 
$$
\begin{aligned}
\bR_n(q+n-1,q)=&\left(\gamma_1-\mathbf{g}^T\mathbf{D}^{-1}\boldsymbol{\xi}_1\right)\left(\eta_2-\mathbf{f}^T\bD^{-1}\boldsymbol{\xi}_2\right)%\\&
-(\gamma_2-\mathbf{g}^T\mathbf{D}^{-1}\boldsymbol{\xi}_2){(\eta_1-\mathbf{f}^T\bD^{-1}\boldsymbol{\xi}_1)}.
\end{aligned}$$
Combining the above two equations gives the desired result and completes the proof of (b) as well as the proof of \eqref{sumR}.
\subsubsection{Proof of \eqref{sumq}}
 For notational simplicity, we assume  $l=j+1$ in the following proof. This assumption is made without loss of generality and the following proof can be extended straightforwardly to any $l=j+2,\dots, \kappa$. 
 
 With $l=j+1$, $d_n$ is given by 
\begin{equation}\label{3}
\begin{aligned}
{d}_n
=\,&\alpha_{n,j}\left(b_{n,j+1}-[a_{n,j+1},a_{N-j+2,j+1},\dots, a_{N,j+1}]\,\left[\begin{matrix} a_{n,1}&a_{N-j+2,1}&\dots&a_{N,1}\\
 a_{n,2}&a_{N-j+2,2}&\dots&a_{N,2}\\
 \vdots&\vdots&\ddots&\vdots\\
  a_{n,j}&a_{N-j+2,j}&\dots&a_{N,j}
  \end{matrix}\right]^{-1}\left[\begin{matrix}
b_{n,1}\\b_{n,2}\\\vdots\\b_{n,j}
  \end{matrix}\right]\right).\\
  \end{aligned}
  \end{equation}
%where $$\alpha_{n,j}=a_{n,1}-
%\left[\begin{matrix}a_{N-j+1,1}&\cdots&a_{N,1}\end{matrix}\right]
%\left[\begin{matrix}a_{N-j+1,2}&\cdots&a_{N,2}\\ \vdots&\ddots&\vdots\\a_{N-j+1,j+1}&\cdots&a_{N,j+1}\end{matrix}\right]^{-1}
%\left[\begin{matrix}a_{n,2}\\\vdots\\a_{n,j+1}\end{matrix}\right].
%$$
Define  
\begin{equation}\label{Phi}
\boldsymbol{\Phi}:=\left[\begin{matrix} a_{N-j+2,1}&\dots&a_{N,1}\\
a_{N-j+2,2}&\dots&a_{N,2}\\
\vdots&\ddots&\vdots\\
a_{N-j+2,j}&\dots&a_{N,j}\\
a_{N-j+2,j+1}&\dots& a_{N,j+1}
  \end{matrix}\right]\in\mathbb{R}^{(j+1)\times (j-1)}. 
  \end{equation}
In addition, denote by  $\boldsymbol{\Phi}^{\backslash s,q}$ the submatrix of $\boldsymbol{\Phi}$ with the $s$-th and $q$-th rows removed. With this notation, $\bD$ given in \eqref{D} can be written as $\mathbf{D}=\boldsymbol{\Phi}^{\backslash 1,j+1}$.
We claim that 
{\small   $$
   \begin{aligned}
  % &
  \sum_{n=1}^{N-j+1} d_n%\\
   \overset{(a)}{=}&\frac{1}{\text{det} \,\bD}\sum_{q=1}^{j+1}\sum_{s=1}^{q-1}\sum_{n=1}^{N-j+1}(-1)^{j-q+s}\text{det}(\boldsymbol{\Phi}^{\backslash s,q})a_{n,s}b_{n,q}\\
   &+\frac{1}{\text{det} \,\bD}\sum_{q=1}^{j+1}\sum_{s=q+1}^{j+1}\sum_{n=1}^{N-j+1}(-1)^{j-q+s-1}\text{det}(\boldsymbol{\Phi}^{\backslash s,q})a_{n,s}b_{n,q}\\
   \overset{(b)}{=}&\frac{1}{\text{det} \,\bD}\sum_{q=1}^{j+1}\sum_{s=1}^{q-1}(-1)^{j-q+s}\text{det}(\boldsymbol{\Phi}^{\backslash s,q})\sum_{n=1}^N(a_{n,s}b_{n,q}-a_{n,q}b_{n,s})\\&-\frac{1}{\text{det} \,\bD}\sum_{q=1}^{j+1}\sum_{s=1}^{q-1}(-1)^{j-q+s}\text{det}(\boldsymbol{\Phi}^{\backslash s.q})\hspace{-0.3cm}\sum_{n=N-j+2}^N\hspace{-0.3cm}(a_{n,s}b_{n,q}\hspace{-0.05cm}-\hspace{-0.05cm}a_{n,q}b_{n,s})\\
   \overset{(c)}{=}&-\frac{1}{\text{det} \,\bD}\sum_{q=1}^{j+1}\sum_{s=1}^{q-1}(-1)^{j-q+s}\text{det}(\boldsymbol{\Phi}^{\backslash s,q})\hspace{-0.3cm}\sum_{n=N-j+2}^N\hspace{-0.3cm}(a_{n,s}b_{n,q}\hspace{-0.05cm}-\hspace{-0.05cm}a_{n,q}b_{n,s})\\
   \overset{(d)}{=}&\,0,
   \end{aligned}
   $$   }
 \hspace{-0.27cm} where (b) follows immediately by exchanging indices $s$ and $q$ in the second term of the r.h.s. of the first equation, and (c) holds since $\sum_{n=1}^N(a_{n,s}b_{n,q}-a_{n,q}b_{n,s})=0$ due to Lemma \ref{lemma:relationab}.
    We next prove (a) and (d) separately.
   
\underline{\emph{Proof of (a).}} Applying the second equation of Lemma \ref{lemma1} with $\x=[a_{n,j+1},a_{N-j+2,j+1},\dots, a_{N,j+1}]^T$, $\y=[b_{n,1},b_{n,2},\dots,b_{n,j}]^T$, and  
   \begin{equation}\label{C}
   {\bC}=\left[\begin{matrix} a_{n,1}&a_{N-j+2,1}&\dots&a_{N,1}\\
 a_{n,2}&a_{N-j+2,2}&\dots&a_{N,2}\\
 \vdots&\vdots&\ddots&\vdots\\
  a_{n,j}&a_{N-j+2,j}&\dots&a_{N,j}
  \end{matrix}\right],
  \end{equation}
we have 
$$d_n=\alpha_{n,j}b_{n,j+1}-\alpha_{n,j}\sum_{q=1}^j\frac{\text{det}\,\bC_q}{\text{det}\, \bC}b_{n,q},$$ where $\bC_{q}$ is the matrix obtained by replacing the $q$-th row of $\bC$ with $[a_{n,j+1},a_{N-j+2,j+1},\dots, a_{N,j+1}]$. In addition, according to the definitions of $\alpha_{n,j}$ in \eqref{def:c}, $\bD$ in \eqref{D},  and $\bC$ in \eqref{C}, it is easy to check that $\alpha_{n,j}$ is the Schur complement of block $\bD$ of matrix $\bC$. Therefore, we have $\text{det}\,\bC=\alpha_{n,j}\,\text{det}\,\bD$, and thus 
      \begin{equation}\label{10}
    {d}_n=\frac{\text{det}\,\bC}{\text{det}\,\bD}\,b_{n,j+1}-\sum_{q=1}^{j}\frac{\text{det}\,\bC_{q}}{\text{det}\,\bD}\,b_{n,q}.   
  \end{equation}
  For the first term, applying the Laplace's expansion theorem \cite{strang2022introduction} to the first column of $\bC$, we get 
  \begin{equation}\label{firstterm}
  \frac{\text{det}\,\bC}{\text{det}\,\bD}\,b_{n,j+1}=\frac{1}{\text{det}\,\bD}\sum_{s=1}^{j}(-1)^{s+1}\text{det}\,\boldsymbol{\Phi}^{\backslash s,j+1}a_{n,s}b_{n,j+1},
  \end{equation}
  where it is easy to check that $\text{det}\,\boldsymbol{\Phi}^{\backslash s,j+1}$ is the  minor of $a_{n,s}$ for $\bC$.
  We next investigate $\frac{\text{det}\,\bC_{q}}{\text{det}\,\bD}\,b_{n,q},~q=1,2,\dots, j$. Let $\tilde{\bC}_{q}$ be the matrix obtained by rotating the $q$-th row of $\bC_{q}$, i.e., $[a_{n,j+1},\dots, a_{N,j+1}]$, to the last row, i.e., 
  $$\tilde{\bC}_q=\left[\begin{matrix} a_{n,1}&a_{N-j+2,1}&\cdots&a_{N,1}\\\vdots&\vdots&\ddots&\vdots\\a_{n,q-1}&a_{N-j+2,q-1}&\cdots&a_{N,q-1}\\a_{n,q+1}&a_{N-j+2,q+1}&\cdots&a_{N,q+1}\\\vdots&\vdots&\ddots&\vdots\\a_{n,j+1}&a_{N-j+2,j+1}&\cdots&a_{N,j+1}
  \end{matrix}\right].$$ Since it requires  a number of  $j-q$ row rotations to obtain $\tilde{\bC}_{q}$ from $\bC_{q}$, we have 
  \begin{equation}\label{11}
\text{det}\,\bC_{q}=(-1)^{j-q}\,\text{det}\,\tilde{\bC}_{q}.
  \end{equation} 
   Note that $a_{n,s}$ is the $(s,1)$-th element of  $\tilde{\bC}_{q}$  if $s<q$ and the $(s-1,1)$-th element of  $\tilde{\bC}_{q}$  if $s>q$. Therefore, 
  \begin{equation}\label{12}
  \begin{aligned}
  \frac{\text{det}\,\bC_{q}}{\text{det}\,\bD}\,b_{n,q}&=\frac{(-1)^{j-q}\,\text{det}\,\tilde{\bC}_{q}}{\text{det}\bD}b_{n,q}\\
  &=\sum_{s=1}^{q-1}\frac{(-1)^{j-q+s+1}}{\text{det}\,\bD}\text{det}\,\boldsymbol{\Phi}^{\backslash s,q}a_{n,s}b_{n,q}%\\&~~~~
  +\sum_{s=q+1}^{j+1}\frac{(-1)^{j-q+s}}{\text{det}\,\bD}\text{det}\boldsymbol{\Phi}^{\backslash s,q}a_{n,s}b_{n,q},
  \end{aligned}
  \end{equation}
  where the second equation is obtained by applying the Laplace's expansion theorem to the first column of $\tilde{\bC}_q$  and by noting that $\text{det}\,\boldsymbol{\Phi}^{\backslash s,q}$ is the minor of $a_{n,s}$ for $\tilde{\bC}_{q}$. Combining \eqref{10}, \eqref{firstterm}, and \eqref{12}, we have 
    $$
    \begin{aligned}d_n=&\frac{1}{\text{det}\,\bD}\sum_{q=1}^{j+1}\sum_{s=1}^{q-1}(-1)^{j-q+s}\text{det}(\boldsymbol{\Phi}^{\backslash s,q})a_{n,s}b_{n,q}%\\&
    +\frac{1}{\text{det}\,\bD}\sum_{q=1}^{j+1}\sum_{s=q+1}^{j+1}(-1)^{j-q+s+1}\text{det}(\boldsymbol{\Phi}^{\backslash s,q})a_{n,s}b_{n,q},
    \end{aligned}
    $$
which completes the  proof for (a).

\underline{\emph{Proof of (d)}}.
 We now prove  
 $$\sum_{q=1}^{j+1}\sum_{s=1}^{q-1}(-1)^{j-q+s}\text{det}(\boldsymbol{\Phi}^{\backslash s,q})\hspace{-0.15cm}\sum_{n=N-j+2}^N\hspace{-0.1cm}(a_{n,s}b_{n,q}-a_{n,q}b_{n,s})=0.$$
  We first rearrange the left hand side of the above equation as follows: 
  $$\begin{aligned}
 & \sum_{q=1}^{j+1}\sum_{s=1}^{q-1}(-1)^{j-q+s}\text{det}(\boldsymbol{\Phi}^{\backslash s,q})\sum_{n=N-j+2}^Na_{n,s}b_{n,q}%\\&
 -\sum_{q=1}^{j+1}\sum_{s=1}^{q-1}(-1)^{j-q+s}\text{det}(\boldsymbol{\Phi}^{\backslash s,q})\sum_{n=N-j+2}^Na_{n,q}b_{n,s}\\
  =\,&\sum_{q=1}^{j+1}(-1)^{j-q}\left(\sum_{s=1}^{q-1}(-1)^{s}\text{det}(\boldsymbol{\Phi}^{\backslash s,q})\sum_{n=N-j+2}^Na_{n,s}%\right.\\ &\left.\hspace{1.2cm}
  +\sum_{s=q+1}^{j+1}(-1)^{s+1}\text{det}(\boldsymbol{\Phi}^{\backslash s,q})\sum_{n=N-j+2}^Na_{n,s}\right)b_{n,q},
  \end{aligned}
  $$
  where the equality is obtained by exchanging indices $s$ and $q$ of the second term in the first equation. 
 Note that for given $q=1,2,\dots, j+1,$
  $$
  \begin{aligned}
  &\sum_{s=1}^{q-1}(-1)^{s}\text{det}(\boldsymbol{\Phi}^{\backslash s,q})\sum_{n=N-j+2}^Na_{n,s}%\\&~~~~~
  +\sum_{s=q+1}^{j+1}(-1)^{s+1}\text{det}(\boldsymbol{\Phi}^{\backslash s,q})\sum_{n=N-j+2}^Na_{n,s}\\
  =\,&\hspace{-0.05cm}-\hspace{-0.05cm}\text{det}\left(\left[\begin{matrix}
 \sum_{n=N-j+2}^Na_{n,1}\hspace{-0.1cm}&  a_{N-j+2,1}&\hspace{-0.1cm}\dots\hspace{-0.1cm}&a_{N,1}\\
 \vdots\hspace{-0.1cm}& \vdots&\hspace{-0.1cm}\ddots\hspace{-0.1cm}&\vdots\\
\sum_{n=N-j+2}^Na_{n,q-1}\hspace{-0.1cm}&a_{N-j+2,q-1}&\hspace{-0.1cm}\dots\hspace{-0.1cm}&a_{N,q-1}\\
\sum_{n=N-j+2}^Na_{n,q+1}\hspace{-0.1cm}&a_{N-j+2,q+1}&\hspace{-0.1cm}\dots\hspace{-0.1cm}&a_{N,q+1}\\
\vdots\hspace{-0.1cm}&\vdots&\hspace{-0.1cm}\ddots\hspace{-0.1cm}&\vdots\\
\sum_{n=N-j+2}^Na_{n,j+1}\hspace{-0.1cm}&a_{N-j+2,j+1}&\hspace{-0.1cm}\dots\hspace{-0.1cm}& a_{N,j+1}
  \end{matrix}\right]\right)\\
 =\,&0,
 \end{aligned}$$
  which gives the desired result. 
  \bibliographystyle{IEEEtran}
\bibliography{IEEEabrv,BDRIS}

% Generated by IEEEtran.bst, version: 1.14 (2015/08/26)
\begin{thebibliography}{10}
\providecommand{\url}[1]{#1}
\csname url@samestyle\endcsname
\providecommand{\newblock}{\relax}
\providecommand{\bibinfo}[2]{#2}
\providecommand{\BIBentrySTDinterwordspacing}{\spaceskip=0pt\relax}
\providecommand{\BIBentryALTinterwordstretchfactor}{4}
\providecommand{\BIBentryALTinterwordspacing}{\spaceskip=\fontdimen2\font plus
\BIBentryALTinterwordstretchfactor\fontdimen3\font minus
  \fontdimen4\font\relax}
\providecommand{\BIBforeignlanguage}[2]{{%
\expandafter\ifx\csname l@#1\endcsname\relax
\typeout{** WARNING: IEEEtran.bst: No hyphenation pattern has been}%
\typeout{** loaded for the language `#1'. Using the pattern for}%
\typeout{** the default language instead.}%
\else
\language=\csname l@#1\endcsname
\fi
#2}}
\providecommand{\BIBdecl}{\relax}
\BIBdecl

\bibitem{6G}
W.~Saad, M.~Bennis, and M.~Chen, ``A vision of {6G} wireless systems:
  Applications, trends, technologies, and open research problems,'' \emph{IEEE
  Netw.}, vol.~34, no.~3, pp. 134--142, May/Jun. 2020.

\bibitem{RIS1}
Q.~Wu and R.~Zhang, ``Intelligent reflecting surface enhanced wireless network
  via joint active and passive beamforming,'' \emph{IEEE Trans. Wireless
  Commun.}, vol.~18, no.~11, pp. 5394--5409, Nov. 2019.

\bibitem{RIS_survey}
M.~Di~Renzo, A.~Zappone, M.~Debbah, M.-S. Alouini, C.~Yuen, J.~de~Rosny, and
  S.~Tretyakov, ``Smart radio environments empowered by reconfigurable
  intelligent surfaces: How it works, state of research, and the road ahead,''
  \emph{IEEE J. Sel. Areas Commun.}, vol.~38, no.~11, pp. 2450--2525, Nov.
  2020.

\bibitem{RIS_mag}
E.~Bj\"{o}rnson, O.~\"{O}zdogan, and E.~G. Larsson, ``Reconfigurable
  intelligent surfaces: Three myths and two critical questions,'' \emph{IEEE
  Commun. Mag.}, vol.~58, no.~12, pp. 90--96, Dec. 2020.

\bibitem{BDRIS}
S.~Shen, B.~Clerckx, and R.~Murch, ``Modeling and architecture design of
  reconfigurable intelligent surfaces using scattering parameter network
  analysis,'' \emph{IEEE Trans. Wireless. Commun.}, vol.~21, no.~2, pp.
  1229--1243, Feb. 2022.

\bibitem{BDRIS_survey}
H.~Li, S.~Shen, M.~Nerini, and B.~Clerckx, ``Reconfigurable intelligent
  surfaces {2.0}: Beyond diagonal phase shift matrices,'' \emph{IEEE Commun.
  Mag.}, vol.~62, no.~3, pp. 102--108, Mar. 2024.

\bibitem{closeform}
M.~Nerini, S.~Shen, and B.~Clerckx, ``Closed-form global optimization of beyond
  diagonal reconfigurable intelligent surfaces,'' \emph{IEEE Trans. Wireless
  Commun.}, vol.~23, no.~2, pp. 1037--1051, Feb. 2024.

\bibitem{group_conn}
H.~Li, S.~Shen, and B.~Clerckx, ``Beyond diagonal reconfigurable intelligent
  surfaces: From transmitting and reflecting modes to single-, group-, and
  fully-connected architectures,'' \emph{IEEE Trans. Wireless Commun.},
  vol.~22, no.~4, pp. 2311--2324, Apr. 2023.

\bibitem{coverage}
------, ``Beyond diagonal reconfigurable intelligent surfaces: A multi-sector
  mode enabling highly directional full-space wireless coverage,'' \emph{IEEE
  J. Sel. Areas Commun.}, vol.~41, no.~8, pp. 2446--2460, Aug. 2023.

\bibitem{attack}
H.~Wang, Z.~Han, and A.~L. Swindlehurst, ``Channel reciprocity attacks using
  intelligent surfaces with non-diagonal phase shifts,'' \emph{IEEE Open J.
  Commun. Soc.}, vol.~5, pp. 1469--1485, 2024.

\bibitem{duplex}
\BIBentryALTinterwordspacing
H.~Li and B.~Clerckx, ``Non-reciprocal beyond diagonal {RIS}: Multiport network
  models and performance benefits in full-duplex systems,'' 2024. [Online].
  Available: \url{https://arxiv.org/abs/2411.04370}
\BIBentrySTDinterwordspacing

\bibitem{liu_nonreciprocal}
\BIBentryALTinterwordspacing
Z.~Liu, H.~Li, and B.~Clerckx, ``Non-reciprocal beyond diagonal ris: Sum-rate
  maximization in full-duplex communications,'' 2024. [Online]. Available:
  \url{https://arxiv.org/abs/2411.18523}
\BIBentrySTDinterwordspacing

\bibitem{Linonreciprocal2022}
Q.~Li, M.~El-Hajjar, I.~Hemadeh, A.~Shojaeifard, A.~A.~M. Mourad, B.~Clerckx,
  and L.~Hanzo, ``Reconfigurable intelligent surfaces relying on non-diagonal
  phase shift matrices,'' \emph{IEEE Trans. Veh. Technol.}, vol.~71, no.~6, pp.
  6367--6383, Jun. 2022.

\bibitem{grouping}
H.~Li, S.~Shen, and B.~Clerckx, ``A dynamic grouping strategy for beyond
  diagonal reconfigurable intelligent surfaces with hybrid transmitting and
  reflecting mode,'' \emph{IEEE Trans. Vehicular Tech.}, vol.~72, no.~12, pp.
  16\,748--16\,753, Dec. 2023.

\bibitem{grouping2}
M.~Nerini, S.~Shen, and B.~Clerckx, ``Static grouping strategy design for
  beyond diagonal reconfigurable intelligent surfaces,'' \emph{IEEE Commun.
  Lett.}, vol.~28, no.~7, pp. 1708--1712, Jul. 2024.

\bibitem{tree}
M.~Nerini, S.~Shen, H.~Li, and B.~Clerckx, ``Beyond diagonal reconfigurable
  intelligent surfaces utilizing graph theory: Modeling, architecture design,
  and optimization,'' \emph{IEEE Trans. Wireless Commun.}, vol.~23, no.~8, pp.
  9972--9985, Aug. 2024.

\bibitem{forest}
M.~Nerini and B.~Clerckx, ``Pareto frontier for the performance-complexity
  trade-off in beyond diagonal reconfigurable intelligent surfaces,''
  \emph{IEEE Commun. Lett.}, vol.~27, no.~10, pp. 2842--2846, Oct. 2023.

\bibitem{wu}
\BIBentryALTinterwordspacing
Z.~Wu and B.~Clerckx, ``Optimization of beyond diagonal {RIS}: A universal
  framework applicable to arbitrary architectures,'' 2024. [Online]. Available:
  \url{https://arxiv.org/abs/2412.15965}
\BIBentrySTDinterwordspacing

\bibitem{qstem}
\BIBentryALTinterwordspacing
X.~Zhou, T.~Fang, and Y.~Mao, ``A novel {Q}-stem connected architecture for
  beyond-diagonal reconfigurable intelligent surfaces,'' 2024. [Online].
  Available: \url{https://arxiv.org/abs/2411.18480}
\BIBentrySTDinterwordspacing

\bibitem{PDD}
Y.~Zhou, Y.~Liu, H.~Li, Q.~Wu, S.~Shen, and B.~Clerckx, ``Optimizing power
  consumption, energy efficiency, and sum-rate using beyond diagonal {RIS}—a
  unified approach,'' \emph{IEEE Trans. Wireless Commun.}, vol.~23, no.~7, pp.
  7423--7438, Jul. 2024.

\bibitem{microwavebook}
D.~Pozar, \emph{Microwave Engineering}.\hskip 1em plus 0.5em minus 0.4em\relax
  John Wiley \& Sons, 2011.

\bibitem{generalmodel}
M.~Nerini, S.~Shen, H.~Li, M.~Di~Renzo, and B.~Clerckx, ``A universal framework
  for multiport network analysis of reconfigurable intelligent surfaces,''
  \emph{IEEE Trans. Wireless Commun.}, vol.~23, no.~10, pp. 14\,575--14\,590,
  Oct. 2024.

\bibitem{bondy2008graph}
J.~A. Bondy and U.~S.~R. Murty, \emph{Graph theory}.\hskip 1em plus 0.5em minus
  0.4em\relax Springer, 2008.

\bibitem{DoF}
E.~Telatar, ``Capacity of multi-antenna gaussian channels,'' \emph{European
  Trans. Telecom.}, vol.~10, no.~6, pp. 585--595, 1999.

\bibitem{DoF2}
N.~Jindal and A.~Goldsmith, ``Dirty-paper coding versus {TDMA} for {MIMO}
  broadcast channels,'' \emph{IEEE Trans. Inf. Theory}, vol.~51, no.~5, pp.
  1783--1794, May 2005.

\bibitem{capacity}
H.~Weingarten, Y.~Steinberg, and S.~Shamai, ``The capacity region of the
  {Gaussian} multiple-input multiple-output broadcast channel,'' \emph{IEEE
  Trans. Inf. Theory}, vol.~52, no.~9, pp. 3936--3964, Sept. 2006.

\bibitem{strang2022introduction}
G.~Strang, \emph{Introduction to linear algebra}.\hskip 1em plus 0.5em minus
  0.4em\relax SIAM, 2022.

\end{thebibliography}

\end{document}